\let\oldnl\nl
\newcommand{\nonl}{\renewcommand{\nl}{\let\nl\oldnl}} 
\newcommand{\bb}[1]{\textbf{#1}}
\newcommand*{\defeq}{\stackrel{\text{def}}{=}}
\newcommand{\userset}{\cP}
\newcommand{\unitset}{\cU}
\newcommand{\regionset}{\cR}
\newcommand{\unitsizeset}{\cZ}
\newcommand{\sizeset}{\cS}
\newcommand{\user}{p}
\newcommand{\unit}{u}
\newcommand{\region}{r}
\newcommand{\unitsize}{z}
\newcommand{\size}{\sigma}
\newcommand{\group}{G}
\newcommand{\gsize}{n}
\newcommand{\gsizevec}{\bm{n}}
\newcommand{\csize}{c}
\newcommand{\csizevec}{\bm{c}}
\newcommand{\node}{a}
\newcommand{\US}{\textsl{US}}
\newcommand{\GA}{\textsl{GA}}
\newcommand{\MI}{\textsl{NY}}
\newcommand{\ctable}{\bm{\tau}}
\newcommand{\cfunction}{\mathring{\ctable}}
\newcommand{\ctablech}{\bm{\phi}}
\newcommand{\cfunctionch}{\mathring{\ctablech}}
\newcommand{\dom}{D}
\newcommand{\tree}{\cT}
\newcommand{\treedp}{\cT^{\textsl{dp}}}
\newcommand{\chaindp}{\cT^{\textsl{ch}}}
\def\thmspace{0.2em}
\newtheorem{theorem}{\hspace{\thmspace}{\bf Theorem}\!}
\newtheorem{definition}{\hspace{\thmspace}{\bf Definition}\!}
\newtheorem{lemma}{\hspace{\thmspace}{\bf Lemma}\!}
\newenvironment{proof}{{\textit{Proof}.}}{\hfill$\Box$}
\newtheorem{example}{\hspace{\thmspace}{\bf Example}\!}
\def\st{\: | \:}
\newcommand{\argmin}{\operatornamewithlimits{argmin}}
\newcommand{\minimize}{\operatornamewithlimits{Minimize~}}
\newcommand{\subjectto}{\operatornamewithlimits{Subject~to:}}
\newcommand{\setf}[1]{{\bf{#1}}}
 \newcommand{\cM}{\mathcal{M}}
 \newcommand{\cR}{\mathcal{R}}
 \newcommand{\cS}{\mathcal{S}}
 \newcommand{\cT}{\mathcal{T}}
 \newcommand{\cU}{\mathcal{U}}
 \newcommand{\cP}{\mathcal{P}}
\newcommand{\cZ}{\mathcal{Z}}
\newcommand{\bv}{\setf{v}}
\newcommand{\sR}{{\mathscr{R}}}
\newcommand{\sD}{{\mathscr{D}}}	
\newcommand{\RR}{\mathbb{R}}
\newcommand{\NN}{\mathbb{N}}
\title{Differential Privacy of Hierarchical Census Data: 
	\\An Optimization Approach}
\author{%
  Ferdinando Fioretto \\
  {\small Syracuse University} \\
  \texttt{ffiorett@syr.edu}\\
  \And
  Pascal Van Hentenryck\\
  {\small Georgia Institute of Technology}\\
  \texttt{pvh@isye.gatech.edu}
  \And
  Keyu Zhu\\
  {\small Georgia Institute of Technology}\\
  \texttt{{kzhu67@gatech.edu}}
}
\begin{document}
\maketitle\sloppy\allowdisplaybreaks



\begin{abstract}
This paper is motivated by applications of a Census Bureau interested
in releasing aggregate socio-economic data about a large population
without revealing sensitive information about any individual. The
released information can be the number of individuals living alone,
the number of cars they own, or their salary brackets. Recent events
have identified some of the privacy challenges faced by these
organizations \cite{abowd2018us}. To address them, this paper presents
a novel differential-privacy mechanism for releasing hierarchical
counts of individuals. The counts are reported at multiple
granularities (e.g., the national, state, and county levels) and must
be consistent across all levels. The core of the mechanism is an
optimization model that redistributes the noise introduced to achieve
differential privacy in order to meet the consistency constraints
between the hierarchical levels.  The key technical contribution of
the paper shows that \emph{this optimization problem can be solved in
polynomial time by exploiting the structure   of its cost functions}.
Experimental results on very large, real datasets show that the
proposed mechanism provides improvements of up to two orders of
magnitude in terms of computational efficiency and accuracy with
respect to other state-of-the-art techniques. \end{abstract}



\section{Introduction}
\label{sec:Introduction}

The release of datasets containing sensitive information about a 
large number of individuals is central to a number of statistical
analysis and machine learning tasks.  For instance, the US Census
Bureau publishes socio-economic information about individuals, which
is then used as input to train classifiers or predictors, release
important statistics about the US population, and take decisions
relative to elections and financial aid.  
roles of a Census Bureau is to report \emph{group size} queries, which
are especially useful to study the skewness of a distribution. For
instance, in 2010, the US Census Bureau released 33 datasets of such
queries \cite{census-files}. Group size queries partition a dataset in
\emph{groups} and evaluate the size of each group. For instance, a
group may be the households that are families of four members, or the
households owning three  cars. 

The challenge is to release these datasets without disclosing
sensitive information about any individual in the dataset. The
confidentiality of information in the decennial census is also
required by law.  Various techniques for limiting a-priori the
disclosed information have been investigated in the past, including
anonymization \cite{Sweeney:02} and aggregations \cite{winkler:02}. 
However, these techniques have been consistently shown ineffective in
protecting sensitive data \cite{Golle:06,Sweeney:02}, For instance,
the US Census Bureau confirmed \cite{census-leak} that the disclosure
limitations used for the 2000 and 2010 censuses had serious
vulnerabilities which were exposed by the Dinur and Nissim's
reconstruction attack \cite{dinur:03}. Additionally, the 2010 Census
group sizes were truncated due to the lack of privacy methods for
protecting these particular groups \cite{censu-attack}. 

This paper addresses these limitations through the framework of \emph
{Differential Privacy} \cite{Dwork:06}, a formal approach to guarantee
data privacy by bounding the disclosure risk of any  individual
participating in a dataset. Differential privacy is  considered the
de-facto standard for privacy protection and has been  adopted by
various corporations \cite{erlingsson2014rappor,apple}  and
governmental agencies \cite{census-DP}.  Importantly, the 2020 US
Census will use an approach to disclosure avoidance that satisfies the
notion of Differential Privacy \cite{abowd2018us}.

Differential privacy works by injecting carefully calibrated noise to
the data before release. However, whereas this process guarantees
privacy, it also affects the fidelity  of the released data. In
particular, the injected noise often  produces datasets that violate
consistency constraints of the  application domain. In particular,
group size queries must be  consistent in a geographical hierarchy,
e.g., the national, state,  and county levels. Unfortunately, the
traditional injection of  independent noise to the group sizes cannot
ensure the consistency  of hierarchical constraints.

To overcome this limitation, this paper casts the problem of 
privately releasing group size data as a \emph{constraint 
optimization problem} that ensures consistency of the hierarchical 
dependencies. However, the optimization problem that redistributes 
noise optimally is intractable for real datasets involving hundreds 
of millions of individuals. In fact, even its convex relaxation, 
which does not guarantee consistency, is challenging  computationally.
This paper addresses these challenges by proposing  mechanisms based
on a dynamic programming scheme that leverages both  the hierarchical
nature of the problem and the structure of the  objective function.  

\begin{table}[!tb]
\centering
 {
 \begin{tabular}{@{}rl@{}}
 \toprule
 Notation & Description \\
 \midrule
 	$\userset$ & The set of all users \\
 	$\unitset$ & The set of all units (e.g., home addresses)\\ 
 	$\regionset$ & The set of all regions (e.g., census blocks, states)\\
 	$\unitsizeset$ & The set of all unit quantities (e.g., the number of cars a user owns) \\
	$\sizeset$ & The set of all group sizes \\
	$\cT$ & The region hierarchy \\
	$\regionset_\ell$ & The set of regions in level $\ell$ of $\cT$ \\
	$L$ & The number of levels in $\cT$ \\
	$N$ & The number of group sizes, i.e., $|\sizeset|$\\
	$G$ & The total count of groups, i.e., 
	$ \sum_{r \in R_\ell} \sum_{i=1}^N \gsize^r_i $ for any $\ell \in [L]$\\
	$n$  & The number of individuals in the dataset \\
	$\gsize_s^r$ & The number of groups of size $s$ in region $r$ \\
	$\gsizevec^r$ & The vector of group sizes for region $r$ \\
	$\node_s^r$ & A node in the DP tree associated to region $r$ and group size $s$ \\
	$\ctable_s^r$ & The cost table associated to node $\node_s^r$\\
	$\ctable_s^r(v)$ & The cost value of $\ctable_s^r$ associated to value $v$\\
	$\ctablech_s^r$ & The contribution of $a^r_s$ children costs tables\\
	$\dom_s^r$ & The domain associated to node $\node_s^r$\\
	$ch(r)$ & The set of children regions of region $r$ in $\cT$ \\
	$pa(r)$ & The parent region of region $r$ in $\cT$ \\
  \bottomrule
 \end{tabular}
 }
 \caption{Important symbols adopted in the paper.}
 \label{tab:symbols}
\end{table}
\paragraph{Paper Contributions}
The paper focuses on the \emph{Privacy-preserving Group Size Release}
(PGSR) problem for releasing differentially private group sizes that
preserves hierarchical consistency.
Its contributions are summarized  as follows: 
{\bf(1)}
It proposes several differentially private mechanisms that rely on an 
optimization approach to release both accurate and consistent group sizes. 
{\bf(2)}
It shows that the differentially private mechanisms can be implemented
in polynomial time, using a dynamic program that exploits the
hierarchical nature of group size queries, the structure of the
objective functions, and cumulative counts. 
{\bf(3)} Finally, it
evaluates the mechanisms on very large datasets containing over
300,000,000 individuals. The results demonstrate the effectiveness and
scalability of the proposed mechanisms that bring several orders of
magnitude improvements over the state of the art. 

\paragraph{Paper Organization} The paper is organized as follows.
\Cref{sec:ProblemDefinition} introduces the notation and describes the
group size release problem.   \Cref{sec:differential_privacy} reviews
the privacy notion adopted in this work, as well as some useful
results  adopted in the privacy analysis of the proposed mechanisms.
\Cref{sec:group_est} presents the Privacy-Preserving Group Size
Release (PGSR) problem  and discusses its privacy, consistency,
validity, and faithfulness criteria. \Cref{sec:IPPGSR} presents
$\cM_H$, a two-step mechanism for the PGSR problem that uses an
optimization-based post-processing step to satisfy the PGSR criteria. 
\Cref{sec:dp_gse} presents $\cM_H^{\textsl{dp}}$, an exact mechanism
for the PGSR problem that exploits dynamic programming. While the
proposed mechanism is exact, it becomes  ineffective for very large
real-life applications. An efficient polynomial-time solution for
solving the dynamic program is presented in \Cref{sec:dp_gse_exploit}.
 \Cref{sec:cumulative_mechanisms} discusses how to use
\emph{cumulative queries} to reduce the amount of noise required to
produce the privacy-preserving counts and \Cref{sec:cumulative_PGSR}
presents $\cM_c$, a \emph{sub-optimal} algorithm to solve the PGSR
under cumulative queries.  Then, \Cref{sec:_apprdp_cumulative_PGSR}
and \Cref{sec:dp_cumulative_PGSR} present, respectively, an
approximate and exact efficient dynamic-programming mechanism for
solving the PGSR problem under cumulative queries.  Finally,
\Cref{sec:related_work} discusses related work, \Cref{sec:experiments}
report an evaluation of the proposed mechanisms on several realistic
datasets, and \Cref{sec:conclusions} concludes the work. 

A summary of the important symbols adopted in the paper is provided in Table
\ref{tab:symbols}.  The appendix contains the proofs for all lemmas
and theorems.

\section{Problem Specification}
\label{sec:ProblemDefinition}

This paper is motivated by applications from the US~Census Bureau,
whose goal is to release socio-demographic features of the population
grouped by census blocks, counties, and states. For instance, the
bureau is interested in releasing information such as the number of
people in a household and how many cars they own. 
This section provides a generic formalization of this release problem.

Consider a dataset $D = \{(\user_i, \unit_i, \region_i,
\unitsize_i)\}_{i=1}^n$ containing $n$ tuples $(\user_i, \unit_i,
\region_i, \unitsize_i) \in \userset \times \unitset \times \regionset
\times \unitsizeset$ denoting, respectively, a (randomly generated)
identifier for user $i \in [n]$, its \emph{unit identifier} (e.g., the
home address where she lives), the \emph{region} in which she lives,
(e.g., a census block), and a \emph{unit quantity} describing a
socio-demographic feature, e.g., the number of cars she owns, or her
salary bracket. The set of users sharing the same unit forms a
\emph{group} and $\group_\unit = \{ \user_i \in \userset \st \unit_i =
\unit\}$ denotes the group of unit $\unit$. The socio-demographic
feature of interest is the sum of the \emph{unit quantities} of a
group $\group_\unit$, i.e., $\size_{\unit} = \sum_{\user_i \in
\group_{\unit}} \unitsize_i$. 

\begin{table}[!tb]
\centering
\parbox{.475\textwidth}{
	\centering
	\resizebox{0.35\textwidth}{!}{
	\begin{tabular}{@{}r r r r@{}}
		\toprule
		user & unit & region & quantity\\[-1pt]
		\midrule
		$01$ & A & GA & 1 \\[-1pt]
		$02$ & B & GA & 1 \\[-1pt]
		$03$ & A & GA & 1 \\[-1pt]
		$04$ & A & GA & 1 \\[-1pt]
		$05$ & C & GA & 1 \\[-1pt]
		$06$ & D & NY & 1 \\[-1pt]
		$07$ & E & NY & 1 \\[-1pt]
		$08$ & D & NY & 1 \\[-1pt]
		$09$ & D & NY & 1 \\[-1pt]
		${10}$ & F & NY & 1 \\[-1pt]
		${11}$ & F & NY & 1 \\[-1pt]
	    \bottomrule
	\end{tabular}
	}
	\caption{\label{tab:1} Example dataset describing the user identifier $p_i \in \cP$, its unit $u_i \in \cU$, the user's region $r_i \in \cR$ and the unit quantity $z_i \in \cZ$ $(i \in [11])$.}
}
\hfill
\parbox{.475\textwidth}{
	\centering
	\resizebox{0.35\textwidth}{!}{
	\begin{tabular}{@{}r r l r@{}}
		\toprule
		region& $u$ & group $G_u$ & $\sigma_u$\\[-1pt]
		\midrule
		\multirow{3}{*}{GA}
		& A & $\{01, 03, 04\}$ & $3$ \\[-1pt]
		& B & $\{02\}$ 		   & $1$ \\[-1pt]
		& C & $\{05\}$ 		   & $1$ \\[-1pt]
		\cmidrule{1-4}
		\multirow{3}{*}{NY}
		& D & $\{06, 08, 09\}$ & $3$ \\[-1pt]
		& E & $\{07\}$ 		 & $1$ \\[-1pt]
		& F & $\{10, 11\}$ 	 & $2$ \\[-1pt]
	    \bottomrule
	\end{tabular}
	}
	\caption{\label{tab:2} Groups $(G_u)$ and sum of unit quantities ($\sigma_u$) for the users $u \in \cU$ of Table~\ref{tab:1}.}
}
\end{table}

\begin{example}
Consider the example dataset of \Cref{tab:1}. It shows $11$ users ($p_i \in \cP$, their addresses (unit identifiers $u_i \in \cU$) , the US states in which they live (regions $r_i \in \cU$), and the associated 0/1 quantity denoting a feature of interest ($z_i \in \cZ)$. In the running example, the feature of interest is always 1, since the application is interested with the composition of the household, i.e., how many people live at the same address. Hence the \emph{groups} identify households and the sums of unit quantities represent household sizes.
The set of users, units, and regions, are, respectively: 
$\userset = \{\textsl{01}, \ldots, \textsl{11}\}$, 
$\unitset = \{A, B, C, D, E, F\}$, and
$\regionset = \{\GA, \MI, \US\}$.
The table shows that some users live in the same address (e.g., user 01, 03, and 04 all live in address $A$), identifying the users of group $G_A$, as illustrated in \Cref{tab:2}. In addition to the groups $G_u$, for all unit $u \in \cU$, \Cref{tab:2} also reports the sum of unit quantities $\sigma_u$ associated with each group $G_u$. In the example, these quantities denote the household sizes. For instance, $\sigma_A=3$ denotes that 3 users live in unit $A$ (e.g., in group $G_A$). 
\end{example}

In addition to the dataset, the census bureau works with a
\emph{region hierarchy} that is formalized by a tree $\cT$ of $L$
levels.  Each level $\ell \in [L]$ is associated with a set of regions
$\regionset_\ell \subseteq \regionset$, forming a partition on $D$.
Region $r'$ is a subregion of region $r$, which is denoted by $r'
\prec r$, if $r'$ is contained in $r$ and $\text{lev}(r') =
\text{lev}(r) + 1$, where $\text{lev}(r)$ denotes the level of $r$.
The root level contains a single region $r^\top$. The children of
$r$, i.e., $ch(r) = \{r' \in \regionset | r' \prec r\}$ is the set of
regions that partitions $r$ in the next level of the hierarchy and
$pa(r)$ denotes the parent of region $r$ ($r \neq r^\top$).
\Cref{fig:1} provides an illustration of a hierarchy of 2
levels. Each node represents a region. The regions \textsl{GA} and
\textsl{NY} form a partition of region \textsl{US}. 

The set $\sizeset$ of all unit sizes, 
($\{\size_\unit \st \unit \in \unitset\} \subseteq \sizeset$), also plays an important role. Indeed, the bureau is
interested in releasing, for every unit size $\size \in \sizeset$, the quantity
$\gsize_\size = | \{ \group_{\unit} \st \unit \in \unitset,
\size_{\unit} = \size \}|$, i.e., the number of groups of size
$\size$. 
The number of groups with size $\size \in \sizeset$ and region
$\region \in \regionset$ is denoted by $\gsize_\size^\region = |\{
\unit \in \unitset \st \size_\unit = \size \land \unit \in \region
\}|$ and $\gsizevec^r = (\gsize_1^r, \ldots, \gsize_{N}^r)$ denotes
the vector of \emph{group sizes} for region $r$, where $N =
|\sizeset|$. 

\begin{example}
In the running example, \Cref{fig:1} illustrates a region hierarchy, depicting \US{} as the root region, at level 1 of the hierarchy. Its children $\textsl{ch}(\US) = \{\GA, \MI\}$ represent the states of Georgia and New York, at level 2 of the hierarchy. 
\Cref{tab:3} illustrates the group size table that, for each group size $s \in [N=5]$, counts the number of groups that are households of size $s$. For instance, in region \GA, there two groups $u$ whose size $\sigma_u = 1$: They are groups A and B; in region $\MI$, is a single group whose size is equal $1$: E; Finally, in region $\US$ there three groups whose size $\sigma_u = 1$: A, B, and E (represented in the first row of the table).
The group vectors, for each region, are, respectively: 
$\gsizevec^{\GA} = (2, 0, 1, 0, 0)$, 
$\gsizevec^{\MI} = (1, 1, 1, 0, 0)$, and 
$\gsizevec^{\US} = (3, 1, 2, 0, 0)$. 
\end{example}

\begin{table}[!t]
\centering
\parbox{.45\textwidth}{
\centering
	\includegraphics[width=0.35\textwidth]{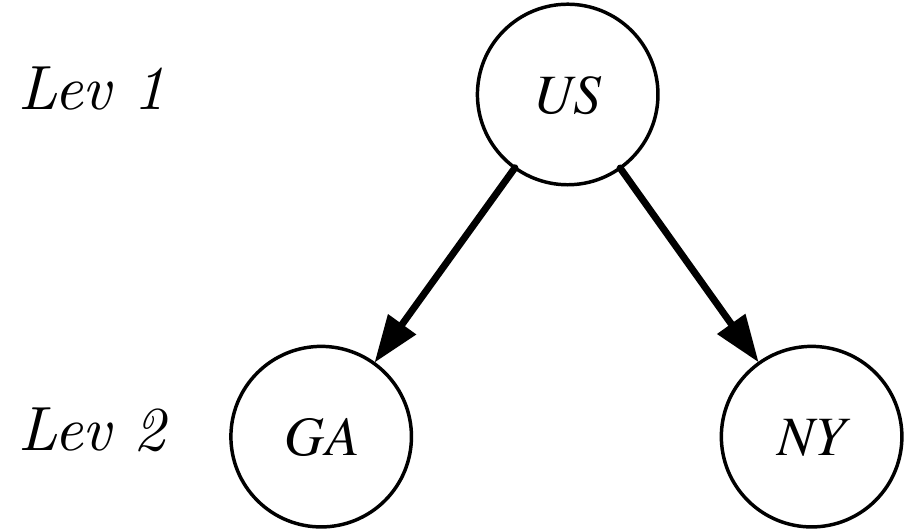}
	\captionof{figure}{\label{fig:1} 
	A region hierarchy associated to the example of \Cref{tab:1}.}
}
\hfill
\parbox{.45\textwidth}{
	\centering
	\resizebox{0.35\textwidth}{!}
	{
	\begin{tabular}{@{}l r r r@{}}\toprule
	\multirow{2}{*}{group sizes} &
	\multicolumn{2}{c}{Lev 2} &
	\multicolumn{1}{c}{Lev 1}\\[-1pt]
	\cmidrule{2-3}\cmidrule{4-4}
	& GA & NY & US\\[-1pt] 	 \midrule
	$1$ & 2 & 1 & 3 \\[-1pt]
	$2$ & 0 & 1 & 1 \\[-1pt]
	$3$ & 1 & 1 & 2 \\[-1pt]
	$4$ & 0 & 0 & 0 \\[-1pt]
	$5=|\cS|=N$ & 0 & 0 & 0 \\[-1pt] \cmidrule{2-4}
    & $\bm{n}^{GA}$ &$\bm{n}^{NY}$ &$\bm{n}^{US}$\\ \bottomrule
	\end{tabular}
	}
  \captionof{table}{\label{tab:3} The hierarchical group-size quantities associated to the example dataset of \Cref{tab:1}.}
}
\end{table}

It is now possible to define the problem of interest to the bureau:
{\em The goal is to release, for every group size $s \in [N]$ and
region $r \in \regionset$, the numbers $\gsize_s^r$ of groups of size
$s$ in region $r$, while preserving individual privacy.}  The region
hierarchy and the group sizes $\sizeset$ are considered public
\emph{non-sensitive} information. The entries associating users with
groups (see \Cref{tab:1}) are \emph{sensitive}
information. Therefore, the paper focuses on protecting the privacy of
such information. For simplicity, this paper assumes that the region
hierarchy has exactly $L$ levels. The paper also focuses on the vastly common case when $z_i
\in \{0,1\} \ (i \in [n])$, but the results generalize to arbitrary
$z_i$ values.

\section{Differential Privacy}
\label{sec:differential_privacy}

This paper adopts the framework of differential privacy
\cite{Dwork:06,Dwork:13}, which is the de-facto standard for privacy
protection.

\begin{definition}[Differential Privacy \cite{Dwork:06}]
	A randomized algorithm $\cM:\sD \to \sR$ is $\epsilon$-\emph{differentially private} if
	\begin{equation}
	\label{eq:dp}
		\Pr[\cM(D_1) \in O] \leq \exp(\epsilon) \Pr[\cM(D_2) \in O],  
	\end{equation}
	for any output response $O \subseteq \sR$ and any two datasets $D_1, D_2 \in \sD$ differing in at most one individual (called \emph{neighbors} and written $D_1 \sim D_2$).
\end{definition}

\noindent 
Parameter $\epsilon > 0$ is the \emph{privacy loss} of the
algorithm, with values close to $0$ denoting strong
privacy. Intuitively, the definition states that the probability of
any event does not change much when a single individual data is added
or removed to the dataset, limiting the amount of information that
the output reveals about any individual.  

This paper relies on the \emph{global sensitivity method}
\cite{Dwork:06}.  The global sensitivity $\Delta_q$ of a function $q:
\sD \to \RR^k$ (also called \emph{query}) is defined as the maximum
amount by which $q$ changes when a single individual is added to, or
removed from, a dataset: \begin{equation} \label{eq:sensitivity}
\Delta_q = \max_{D_1 \sim D_2} \| q(D_1) - q(D_2) \|_1.
\end{equation} Queries in this paper concern the group size vectors
$\gsizevec^r$ and neighboring datasets differ by the presence or
absence of at most one record (see \Cref{tab:1,tab:3}).

The global sensitivity is used to calibrate the amount of noise to add
to the query output to achieve differential privacy.  There are
several sensitivity-based mechanisms \cite{Dwork:06,Mcsherry:07} and
this paper uses the \emph{Geometric} mechanism \cite{Ghosh:12} for
\emph{integral} queries. It relies on a double-geometric distribution
and has slightly less variance than the ubiquitous Laplace mechanism
\cite{Dwork:06}.

\begin{definition}[Geometric Mechanism
\cite{Ghosh:12}] Given a dataset $D$, a query $q: \sD \to \RR^k$, and
$\epsilon>0$, the geometric mechanism adds independent noise to each
dimension of the query output $q(D)$ using the distribution
\[
P(X\!=\! v) \!=\! \frac{1 - e^{-\epsilon}}{1 + e^{-\epsilon}} 
\,e^{\left(-\epsilon \frac{|v|}{\Delta_q}\right)}.
\]
\end{definition}

\noindent This distribution is also referred to as
\emph{double-geometric} with scale $\Delta_q / \epsilon$.  In the
following, $Geom(\lambda)^k$ denotes the i.i.d.~double-geometric
distribution over $k$ dimensions with parameter $\lambda$. The
geometric mechanism satisfies $\epsilon$-differential privacy
\cite{Ghosh:12}. Differential privacy also satisfies several important properties \cite{Dwork:13}.

\begin{lemma}[Sequential Composition]
\label{th:seq_composition}
The composition of two $\epsilon$-differentially private mechanisms ($\cM_1$, $\cM_2$) satisfies $2\epsilon$-differential privacy.
\end{lemma}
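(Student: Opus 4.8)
The plan is to treat the composition as the mechanism $\cM(D) = (\cM_1(D), \cM_2(D))$ that releases both outputs, where $\cM_1$ and $\cM_2$ use independent internal randomness, and to show directly that the ratio of output probabilities on neighboring datasets is bounded by $e^{2\epsilon}$. Since the mechanisms in this paper act on integral queries and the responses live in a discrete (countable) space, I would work with probability mass functions and sums rather than densities, which avoids measure-theoretic bookkeeping.

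First I would fix two neighboring datasets $D_1 \sim D_2$ and an arbitrary output pair $(o_1, o_2) \in \sR_1 \times \sR_2$. Using independence of the randomness of $\cM_1$ and $\cM_2$, I would factor
\[
\Pr[\cM(D_1) \is (o_1,o_2)] \;=\; \Pr[\cM_1(D_1)\is o_1]\cdot\Pr[\cM_2(D_1)\is o_2].
\]
Next I would apply the $\epsilon$-differential privacy guarantee of each component mechanism to the singleton events $\{o_1\}$ and $\{o_2\}$, obtaining $\Pr[\cM_1(D_1)\is o_1] \le e^{\epsilon}\Pr[\cM_1(D_2)\is o_1]$ and likewise for $\cM_2$. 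Multiplying these two inequalities and re-factoring via independence yields
\[
\Pr[\cM(D_1)\is(o_1,o_2)] \;\le\; e^{2\epsilon}\,\Pr[\cM(D_2)\is(o_1,o_2)].
\]

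Finally, for an arbitrary output set $O \subseteq \sR_1\times\sR_2$, I would sum the per-point inequality over all $(o_1,o_2)\in O$ (monotonicity of the sum preserves the inequality termwise), giving $\Pr[\cM(D_1)\in O] \le e^{2\epsilon}\Pr[\cM(D_2)\in O]$, which is exactly $2\epsilon$-differential privacy by \Cref{eq:dp}. I would also remark that the argument is symmetric in $D_1, D_2$ and extends verbatim to the adaptive setting where $\cM_2$ may depend on the output of $\cM_1$, by conditioning on $\cM_1$'s output before applying $\cM_2$'s guarantee.

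The only genuinely delicate point is justifying the independence-based factorization and ensuring the DP guarantees of $\cM_1$ and $\cM_2$ may be invoked pointwise; in the discrete setting this is immediate, but I would state the independence of the randomness sources explicitly as part of the definition of composition so the step is airtight. Everything else is routine manipulation of inequalities.
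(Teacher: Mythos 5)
Your proof is correct: the factorization over independent randomness, the pointwise application of each mechanism's $\epsilon$-guarantee to singleton outputs, and the summation over an arbitrary output set $O$ together give exactly the $e^{2\epsilon}$ bound, and your remark about the adaptive case (conditioning on $\cM_1$'s output) is also sound. Note that the paper itself offers no proof of this lemma---it states it as a standard property of differential privacy and cites the literature (Dwork et al.)---so there is nothing to compare against; your argument is the standard one found in the cited references.
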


\begin{lemma}[Parallel Composition] 
\label{th:par_composition} 
Let $D_1$ and $D_2$ be disjoint subsets of $D$ and $\cM$ be an
$\epsilon$-differential private algorithm.  Computing $\cM(D
\cap D_1)$ and $\cM(D \cap D_2)$ satisfies $\epsilon$-differential privacy.
\end{lemma}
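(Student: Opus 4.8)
The plan is to reduce the $\epsilon$-differential privacy of the composed mechanism to that of a single invocation of $\cM$, using the disjointness of $D_1$ and $D_2$ together with the independence of the two runs. Write $\cM'(D) \defeq (\cM(D \cap D_1), \cM(D \cap D_2))$ for the composed mechanism, whose range is $\sR \times \sR$. Fix two neighbors $D \sim D'$; by definition they differ by the addition or removal of a single record $t$.

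The first step is to locate the differing record. Since $D_1$ and $D_2$ are disjoint, $t$ lies in at most one of them. If $t \notin D_1 \cup D_2$, then $D \cap D_1 = D' \cap D_1$ and $D \cap D_2 = D' \cap D_2$, so $\cM'(D)$ and $\cM'(D')$ are identically distributed and \eqref{eq:dp} holds trivially. Otherwise, assume without loss of generality $t \in D_1$. Then $D \cap D_2 = D' \cap D_2$ (the second sub-dataset is untouched), while $D \cap D_1$ and $D' \cap D_1$ differ by exactly the record $t$, i.e.\ $(D \cap D_1) \sim (D' \cap D_1)$.

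The second step pushes the bound through the product structure. Because the two invocations of $\cM$ use independent randomness, for any measurable rectangle $O_1 \times O_2 \subseteq \sR \times \sR$ we have
\[
\Pr[\cM'(D) \in O_1 \times O_2] = \Pr[\cM(D \cap D_1) \in O_1]\cdot \Pr[\cM(D \cap D_2) \in O_2].
\]
Replacing $D$ by $D'$ leaves the second factor unchanged, while the first is multiplied by at most $e^{\epsilon}$ by $\epsilon$-differential privacy of $\cM$ on the neighbors $(D \cap D_1) \sim (D' \cap D_1)$. Hence \eqref{eq:dp} holds for all rectangles; swapping the roles of $D$ and $D'$ gives the matching lower bound.

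The only genuine obstacle is extending the inequality from rectangles to an arbitrary output set $O \subseteq \sR \times \sR$, as required by the definition. This is handled by a standard measure-theoretic argument: both sides of \eqref{eq:dp} are (scaled) measures in the argument $O$, rectangles form a $\pi$-system generating the product $\sigma$-algebra, so a Dynkin $\pi$-$\lambda$ argument (equivalently, writing the joint density as a product and integrating the pointwise bound over $O$ via Fubini) extends the bound to every measurable $O$. In the discrete setting of this paper, where $\sR$ is countable and $\cM$ is the geometric mechanism, this step is immediate, since every $O$ is a countable disjoint union of singletons $\{(o_1,o_2)\}$ on which the factorized bound is verified directly.
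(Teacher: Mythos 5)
Your proof is correct and is the standard argument for parallel composition: disjointness localizes the differing record to at most one of the two blocks, the untouched block's output is identically distributed under $D$ and $D'$, and independence lets the single-block $e^{\epsilon}$ bound factor through from rectangles to arbitrary output events. The paper itself offers no proof of this lemma --- it is stated as a known property cited from the differential-privacy literature (the appendix's lemma counter is deliberately advanced past it) --- so there is nothing in-paper to compare against; your argument, including the sensible reading of $D_1, D_2$ as a fixed partition of the record universe so that $D \cap D_i$ varies with $D$, is exactly the one the cited references give.
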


\begin{lemma}[Post-Processing Immunity] \label{th:postprocessing}
Let $\cM$ be an $\epsilon$-differential private algorithm and $g$ be
an arbitrary mapping from the set of possible output sequences $O$
to an arbitrary set. Then, $g \circ \cM$ is $\epsilon$-differential
private.  
\end{lemma}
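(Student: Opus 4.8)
The plan is to reduce the claim to the $\epsilon$-differential privacy of $\cM$ by pulling events back through $g$. Write $\sR$ for the output space of $\cM$ (the set of possible output sequences) and $\sR'$ for the arbitrary target set, so that $g : \sR \to \sR'$. I first treat the case in which $g$ is deterministic. Fix any (measurable) event $S \subseteq \sR'$ and any pair of neighboring datasets $D_1 \sim D_2$. Since $g$ is a fixed map, the event $\{ g(\cM(D)) \in S \}$ is identical to $\{ \cM(D) \in g^{-1}(S) \}$, where $g^{-1}(S) = \{ o \in \sR \st g(o) \in S \}$. Applying \Cref{eq:dp} to $\cM$ with the output event $O = g^{-1}(S)$ gives
\[
\Pr[g(\cM(D_1)) \in S] = \Pr[\cM(D_1) \in g^{-1}(S)] \leq e^{\epsilon}\, \Pr[\cM(D_2) \in g^{-1}(S)] = e^{\epsilon}\, \Pr[g(\cM(D_2)) \in S],
\]
and since $S$ was arbitrary, $g \circ \cM$ is $\epsilon$-differentially private.

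Next, to cover a possibly randomized post-processing map, I would represent $g$ as a family $\{ g_w \}_{w}$ of deterministic maps indexed by an auxiliary random variable $\omega$ drawn from a distribution $\mu$ that is independent of the dataset and of the internal coins of $\cM$. Conditioning on $\omega = w$, the deterministic case yields $\Pr[g_w(\cM(D_1)) \in S] \leq e^{\epsilon} \Pr[g_w(\cM(D_2)) \in S]$ for every $w$. Integrating over $\mu$ and using that $\mu$ does not depend on the dataset preserves the inequality:
\[
\Pr[g(\cM(D_1)) \in S] = \int \Pr[g_w(\cM(D_1)) \in S] \, d\mu(w) \leq e^{\epsilon} \int \Pr[g_w(\cM(D_2)) \in S] \, d\mu(w) = e^{\epsilon}\, \Pr[g(\cM(D_2)) \in S].
\]
This establishes the lemma in full generality.

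The only point requiring care — and the step I expect to be the main (indeed the sole) obstacle — is measurability: one must check that $g^{-1}(S)$ is an admissible output event of $\cM$, so that \Cref{eq:dp} may legitimately be invoked on it, and, in the randomized case, that Tonelli's theorem applies so the conditioning-then-averaging argument is valid. For the discrete, integer-valued outputs relevant to this paper (group-size vectors over a finite hierarchy), these concerns are vacuous: every subset is measurable, integrals become sums, and the interchange of summation order is immediate. Hence the argument collapses to the one-line pullback of events above, with the randomized case obtained by averaging.
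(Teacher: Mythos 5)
Your proof is correct: the deterministic case is the standard pullback argument (apply the privacy inequality to the preimage event $g^{-1}(S)$), and the randomized case follows by decomposing $g$ into dataset-independent deterministic maps and averaging, exactly as in Dwork and Roth's canonical treatment. Note, however, that the paper itself offers no proof of this lemma --- it is stated as a known property of differential privacy with a citation, and the appendix deliberately skips it (the lemma counter is reset past it) --- so there is nothing in the paper to diverge from; your argument is the standard one and is complete, with the measurability caveat correctly dismissed as vacuous for the discrete output spaces used here.
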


\section{The Privacy-Preserving Group Size Release Problem}
\label{sec:group_est}

This section formalizes the Privacy-preserving Group Size Release
(PGSR) problem \cite{kuo2018_VLDB}.  Consider a
dataset $D$, a region hierarchy $\cT$ for $D$, where each node
$\node^r$ in $\cT$ is associated with a vector $\gsizevec^r \in
\mathbb{Z}_{+}^N$ describing the group sizes for region $r \in
\regionset$, and let $G = \sum_{s \in [N]} \gsize^\top_s$ be the total
number of individual groups in $D$, which is public information (see
\Cref{fig:hierarhcy} for an example).  The PGSR problem consists in
releasing a hierarchy of group sizes $\tilde{\cT} = \langle
\tilde{\gsizevec}^r \st r \in \regionset \rangle$\footnote{We abuse
notation and use the angular parenthesis to denote a hierarchy.} that
satisfies the following conditions:

\begin{enumerate}[topsep=1ex,itemsep=-1ex,partopsep=1ex,parsep=1ex]
	\item \label{gse:1}
	\textsl{Privacy}: $\tilde{\cT}$ is $\epsilon$-differentially private.
	\item \label{gse:2}
	\textsl{Consistency}: 
	For each region $r \in \regionset$ and group size $s \in \sizeset$,
	the group sizes in the subregions $r'$ of $r$ add up to those in
	region $r$: 
	\[
			\tilde{\gsize}^r_s = \sum_{{r'} \in ch(r)} 
			\tilde{\gsize}^{r'}_s.
	\]
	\item \label{gse:3}
	\textsl{Validity}: The values $\tilde{\gsize}_s^r$ are non-negative 
	integers.
	\item \label{gse:4}
	\textsl{Faithfulness}: The group sizes at each level $\ell$ of the 
	hierarchy add up to the value $G$:
	\[
		\sum_{r \in \regionset_\ell} \sum_{s \in [N]} 
		\tilde{\gsize}^r_s = G.
	\]
\end{enumerate}
These constraints ensure that the hierarchical group size estimates 
satisfy all publicly known properties of the original data.

\begin{figure}[!t]
	\centering
	\hspace{-18pt}
	\includegraphics[width=0.5\linewidth]{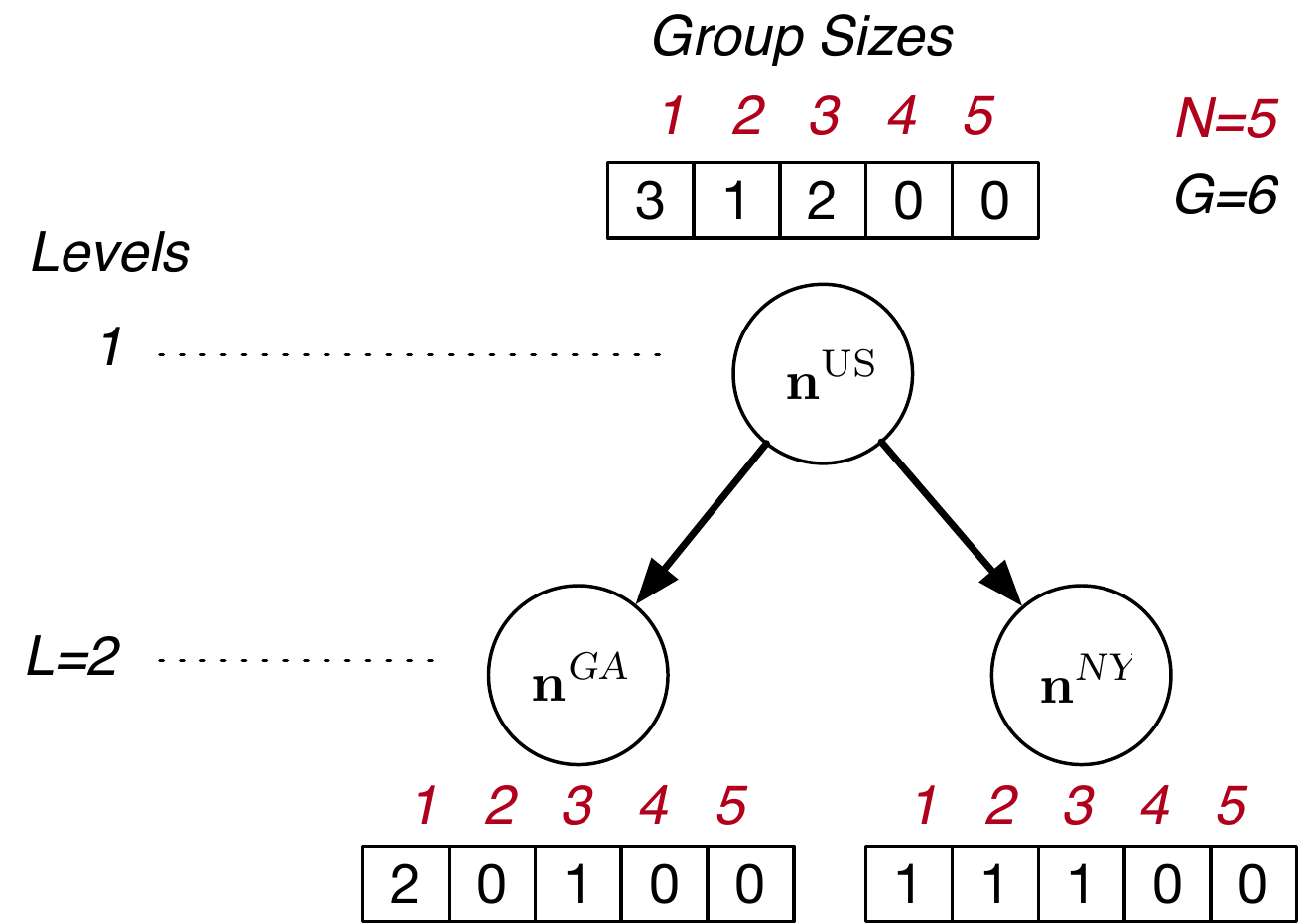}
	\caption{\label{fig:hierarhcy} 
	\emph{Group size} hierarchy $\cT_2$ associated with the dataset of \Cref{tab:1}.}
\end{figure}

\section{The Direct Optimization-Based PGSR Mechanism}
\label{sec:IPPGSR}

This section presents a two-step mechanism for the PGSR problem, first 
introduced in \cite{kuo2018_VLDB}. The first 
step produces a noisy version of the group sizes, whereas the second
step restores the feasibility of the PGSR constraints while staying
as close as possible to the noisy counts. The first step produces a
noisy hierarchy $\tilde{\cT} = \{\tilde{\gsizevec}^r \st r \in
\regionset \}$ using the geometric mechanism with parameter
$\lambda=\frac{2L}{\epsilon}$ on the vectors $\gsizevec^r$:

\begin{equation}
	\label{eq:geom_q}
	\tilde{\gsizevec}^r = \gsizevec^r + \textsl{Geom}\Big(
		\frac{2L}{\epsilon}\Big)^N.
\end{equation}

\noindent
The following lemma and theorem, whose proofs are in the appendix, show that 
this step satisfies $\epsilon$-differential privacy. 
\begin{lemma}
The sensitivity $\Delta_{\gsizevec}$ of the group estimate query is 2.
\end{lemma}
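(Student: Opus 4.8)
The plan is to compute $\Delta_{\gsizevec}$ directly from its definition, $\Delta_{\gsizevec} = \max_{D_1 \sim D_2}\|\gsizevec^r(D_1)-\gsizevec^r(D_2)\|_1$, by first proving an upper bound of $2$ that holds for every pair of neighbors and every region, and then exhibiting one pair of neighbors for which the distance is exactly $2$. Since the regions of a fixed level $\ell$ partition $D$, the single record that distinguishes two neighbors lies in exactly one region of $\regionset_\ell$, so the same constant controls both the per-region query and the whole level; I would therefore argue for a generic region $r$.

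For the upper bound, I would fix neighbors $D_1 \sim D_2$ and, without loss of generality, write $D_2 = D_1 \cup \{(\user,\unit,\region,\unitsize)\}$. The key structural observation is that this single record belongs to exactly one unit $\unit$, so $\group_\unit$ is the only group whose membership differs between the two datasets; hence $\size_\unit$ is the only unit-size that can change, and $\unit$ keeps its region $\region$. Moreover $\unitsize \in \{0,1\}$ forces $|\size_\unit(D_2)-\size_\unit(D_1)| \le 1$. Consequently $\gsizevec^{r'}$ is unchanged for every region $r'$ that does not contain $\unit$. For the region $r$ that contains $\unit$, set $s=\size_\unit(D_1)$ and $s'=\size_\unit(D_2)$ with $|s-s'|\le 1$: moving this one unit from the size-$s$ bucket to the size-$s'$ bucket decrements $\gsize^r_s$ by one and increments $\gsize^r_{s'}$ by one, leaving all other coordinates fixed, so $\|\gsizevec^r(D_1)-\gsizevec^r(D_2)\|_1 \le 2$, and therefore $\Delta_{\gsizevec} \le 2$.

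For tightness I would take any dataset $D_1$ containing a unit $\unit$ of size $\size_\unit = 1$ in some region $r$, and let $D_2$ add one further record with quantity $1$ to $\unit$; then $\gsize^r_1$ drops by one and $\gsize^r_2$ rises by one while everything else is untouched, giving $L_1$-distance exactly $2$. Combined with the upper bound, this yields $\Delta_{\gsizevec} = 2$.

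The calculation is short; the only place that needs a little care is the accounting for \emph{boundary} transitions — when the affected unit is created from scratch or emptied entirely (equivalently, whether a size-$0$ bucket is materialized), and when $\unitsize = 0$ so that nothing changes at all. In each of these degenerate cases at most one coordinate of $\gsizevec^r$ moves, so the bound of $2$ is never threatened; I would simply remark on this rather than enumerate sub-cases, since the worst case is transparently the ``one unit shifts between two nonempty buckets'' scenario used for tightness. Thus the main (and only mild) obstacle is phrasing the one-unit/one-bucket-pair argument so that these edge cases are visibly subsumed.
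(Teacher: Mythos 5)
Your proposal is correct and follows essentially the same route as the paper: a single added or removed record affects exactly one unit, shifting it between two adjacent size buckets and hence changing two coordinates of $\gsizevec^r$ by one each. Your version is in fact somewhat more complete than the paper's, which only computes the generic ``add one individual to an existing group'' case and omits both the explicit upper-bound argument over all neighbor pairs and the boundary cases ($\unitsize=0$, or a unit being created or emptied) that you correctly note do not threaten the bound.
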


\begin{theorem}
	\label{th:privacy_H}
	$\cM_H$ satisfies $\epsilon$-differential privacy.
\end{theorem}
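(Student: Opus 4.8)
The plan is to establish $\epsilon$-differential privacy of $\cM_H$ by decomposing it into the two steps described, invoking the privacy of the geometric mechanism on each region, accounting for the depth-$L$ composition across hierarchy levels via parallel and sequential composition, and finally closing with post-processing immunity for the optimization step.

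First I would analyze the noise-injection step defined in \eqref{eq:geom_q}. By the preceding lemma, the sensitivity of the group-size query $\gsizevec^r \mapsto \gsizevec^r$ is $\Delta_{\gsizevec} = 2$, since adding or removing a single individual changes exactly one entry of one region's vector by one, but that individual's group membership also shifts a count — concretely, moving an individual in or out of a group of size $s$ changes $\gsize^r_s$ and $\gsize^r_{s\pm 1}$, contributing $2$ to the $\ell_1$ distance. Hence releasing $\tilde{\gsizevec}^r = \gsizevec^r + \textsl{Geom}(2L/\epsilon)^N$ for a single region $r$ satisfies $(\epsilon/L)$-differential privacy, because the geometric mechanism with scale $\Delta_q/\epsilon' = 2/(\epsilon/L) = 2L/\epsilon$ is $\epsilon'$-DP with $\epsilon' = \epsilon/L$.

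Next I would aggregate across the hierarchy. Within a fixed level $\ell$, the regions $\regionset_\ell$ form a partition of $D$, so the noisy vectors $\{\tilde{\gsizevec}^r \st r \in \regionset_\ell\}$ are computed on disjoint subsets; by \Cref{th:par_composition} (parallel composition), releasing the entire level-$\ell$ slice is still $(\epsilon/L)$-differentially private. Then, across the $L$ levels, the same individual can appear in one region per level, so \Cref{th:seq_composition} (sequential composition, in its general $k$-fold form) gives that releasing the full noisy hierarchy $\tilde{\cT}$ is $L \cdot (\epsilon/L) = \epsilon$-differentially private. The main obstacle here is being careful about which composition theorem applies at which stage — conflating the level-wise disjointness with the cross-level dependence is the easy mistake, so I would state explicitly that parallelism is used horizontally (within a level) and sequential composition vertically (across levels).

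Finally, the second step of $\cM_H$ solves a constrained optimization problem whose only input is the noisy hierarchy $\tilde{\cT}$ — it does not touch $D$ again. Therefore it is a (possibly randomized, in case of tie-breaking) post-processing map applied to the output of an $\epsilon$-DP mechanism, and \Cref{th:postprocessing} (post-processing immunity) implies the composite mechanism $\cM_H$ remains $\epsilon$-differentially private. I would close by noting this completes the proof, since the output of $\cM_H$ is exactly the post-processed hierarchy.
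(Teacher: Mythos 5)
Your proof is correct and follows essentially the same route as the paper: parallel composition horizontally within each level (the regions at a level partition the data), sequential composition vertically across the $L$ levels to get $\epsilon$ total, and post-processing immunity for the optimization step. You spell out the calibration of the geometric noise ($\Delta_{\gsizevec}=2$ with scale $2L/\epsilon$ giving $\epsilon/L$ per level) more explicitly than the paper's proof does, but the argument is the same.
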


\noindent 
Proofs for all Theorems and Lemmas are reported in the appendix. 


The output of the first step satisfies Condition \ref{gse:1} of the
PGSR problem but it will violate (with high probability) the other
conditions. To restore feasibility, this paper uses a post-processing
strategy similar to the one proposed in \cite{fioretto:AAMAS-18} for
mobility applications and in \cite{kuo2018_VLDB}.  After generating
$\tilde{\cT}$ using \Cref{eq:geom_q}, the mechanism post-processes the
values $\tilde{\gsizevec}^r$ of $\tilde{\cT}$ through the
\emph{Quadratic Integer Program (QIP)} depicted in Model \ref{fig:NP}.
 Its goal is to find a new region hierarchy $\hat{\cT}$, optimizing
over the variables $\hat{\gsizevec}^r \!=\! (\hat{\gsize}_1^r\ldots
\hat{\gsize}_N^r)$ for each $r \!\in\! \regionset$, so that their
values stay close to the noisy counts of the first step, while
satisfying faithfulness (Constraint \eqref{eq:H2}), consistency
(Constraint \eqref{eq:H3}), and validity (Constraint \eqref{eq:H4}).
In the optimization model, $D_s^r$ represents the domain (of integer,
non-negative values) of $\hat{\gsize}^r_s$. The resulting mechanism is
called the \emph{Hierarchical PGSR} and denoted by $\cM_H$. It
satisfies $\epsilon$-differential privacy because of post-processing
immunity (\Cref{th:postprocessing}), since the
post-processing step of $\cM_H$ uses exclusively differentially
private information ($\tilde{\cT}$).
\begin{model}[!tb]
\caption{The $\cM_H$ post-processing step \label{fig:NP}}
\vspace{-8pt}
\begin{alignat}{2}
	\minimize_{ \left\{ \hat{\gsizevec}^r \right\}_{r \in \regionset}}\;\; 
		& 
		\sum_{r \in \regionset} \left| \hat{\gsizevec}^r - \tilde{\gsizevec}^r \right|_2^2 
		&&
		\tag{H1} \label{eq:H1}\\
	\subjectto\;\;&
		\sum_{r \in \regionset_\ell} \sum_{s \in [N]} 
		\hat{\gsize}^r_s = G
		&& \forall \ell \in [L]
		\tag{H2} \label{eq:H2} \\
		&
		\sum_{c \in ch(r)}  \hat{\gsize}_s^{c} = \hat{\gsize}_s^r 
		\qquad
		&&  \forall r \in \regionset, s \in [N] 
		\tag{H3} \label{eq:H3}\\
		&
		\hat{\gsize}^r_s \in D^r_s 
		&& \forall r \in \regionset, s \in [N] 
		\tag{H4} \label{eq:H4}
\end{alignat}
\vspace{-16pt}
\end{model}

Solving this QIP is intractable for the datasets of interest to the
census bureau. Therefore, the experimental results consider a version
of $\cM_H$ that \emph{relaxes the integrability constraint}
\eqref{eq:H4} and rounds the solutions. However, the resulting optimization
problem becomes convex but presents two limitations: \emph{(i)} its
final solution $\hat{T}$ may violate the PGSR \emph{consistency}
(\ref{gse:2}) and \emph{faithfulness} (\ref{gse:4}) conditions, and
\emph{(ii)} the mechanism is still too slow for very large problems.

\section{The Dynamic Programming PGSR Mechanism}
\label{sec:dp_gse}

To overcome the $\cM_H$ limitations discussed above, this section
proposes a dynamic-programming approach for the post-processing step
and its convex relaxation. The resulting mechanism is called the
\emph{Dynamic Programming PGSR} mechanism and denoted by
$\cM_H^{\textsl{dp}}$. The dynamic program relies on a new hierarchy
$\cT^{\text{dp}}$ that modifies the original region hierarchy $\cT$ as
follows.  It creates as many subtrees as the number $N$ of group sizes 
$\sizeset$. 
In each of these subtrees, node $a^r_s$ is associated with the number 
$\gsize_s^r$ of groups of size $s$ in region $r$. Its children $\{ 
\node_s^{c} \}_{c \in ch(r)}$ are associated with the numbers 
$\gsize_s^{c}$, and so on.  
Thus, the nodes of subtree $s$ represent the groups of size $s$
for all the regions in $\regionset$. 
Finally, the new hierarchy has a root note $\node^\top$ that represents 
the total number $G$ of groups: It is associated with a \emph{dummy} 
region $\top$ whose children are the root nodes of the $N$ subtrees 
introduced above. The resulting region hierarchy is denoted $\treedp$. 

\begin{example} 
The region hierarchy $\treedp$ associated with the
running example is shown in \Cref{img:tree_regions}.  The root node
$\node^\top$ is associated with the total number of groups in $D$,
i.e., $G=6$. Its children $\node_1^{\US},\dots,\node_5^{\US}$
represent the group sizes for the root of the region hierarchy for
each group size $s \in [N=5]$.  Subtree 1, rooted at $\node_1^{\US}$,
has two children: $\node_1^{\GA}$ and $\node_1^{\MI}$, representing
the number of groups of size $1$: $\gsize_1^{\GA}$ and
$\gsize_1^{\MI}$.  The figure illustrates the association of each 
node $\node_s^r$ with its \emph{real} group size $\gsize_s^r$ (in red)
and its noisy group size generated by the geometrical mechanism (in
blue and parenthesis). 
\end{example}

Note that \emph{(i)} the value of a node equals to the sum
of the values of its children, \emph{(ii)} the group sizes
at a given level add up to $G$, and \emph{(iii)} the PGSR consistency
conditions of the nodes in a subtree are independent of those of
other subtrees.  These observations allow us to develop a dynamic
program that guarantees the PGSR conditions and exploits the
independence of each subtree associated with groups of size $s$ to
solve the post-processing problem efficiently.

For notational simplicity, the presentation omits the subscripts
denoting the group size $s$ and focuses on the computation of a single
subtree representing a group of size $s$.  The dynamic program
associates a \emph{cost table} $\ctable^r$ with each node $\node^r$ of
$\treedp$.  The cost table represents a function $\ctable^r: \dom^r
\to \mathbb{R}_+$ that maps values (i.e., group sizes) to costs, where
$\dom^r$ is the \emph{domain} (a set of natural numbers) of region
$r$. Intuitively, $\ctable^r(v)$ is the optimal cost for the
post-processed group sizes in the subtree rooted at $\node^r$ when
its post-processed group size is equal to $v$, i.e., $\hat{\gsize}^r =
v$. {\em The key insight of the dynamic program is the observation
  that the optimal cost for $\ctable^r(v)$ can be computed from the
  cost tables $\ctable^c$ of each of its children $c \in
  \textsl{ch}(r)$ using the following}:
\begin{subequations}
\label{eq:d}
\begin{alignat}{3}
  	\ctable^r(v) = 
		&\Big(v - \tilde{\gsize}^r \Big)^2 
		&& + 
		&& 
		\label{eq:d1}\\
		& \hspace{20pt} \ctablech^r(v) 				   
		&&= \minimize_{ \{x_c \}_{c \in \textsl{ch}(r)} } 
		&& \sum_{c \in \textsl{ch}(r)} \ctable^{c}(x_c) 
		\label{eq:d2} \\
    	& 
    	&& \subjectto 
    	&& \sum_{c \in \textsl{ch}(r)} x_c = v 
	   	\label{eq:d3} \\
		& 
		&&
		&& x_c \in D^c\qquad \forall {c \in \textsl{ch}(r)}.
		\label{eq:d4}
\end{alignat}
\end{subequations}

\begin{figure}[!tbh]
	{\centering\includegraphics[width=\linewidth]{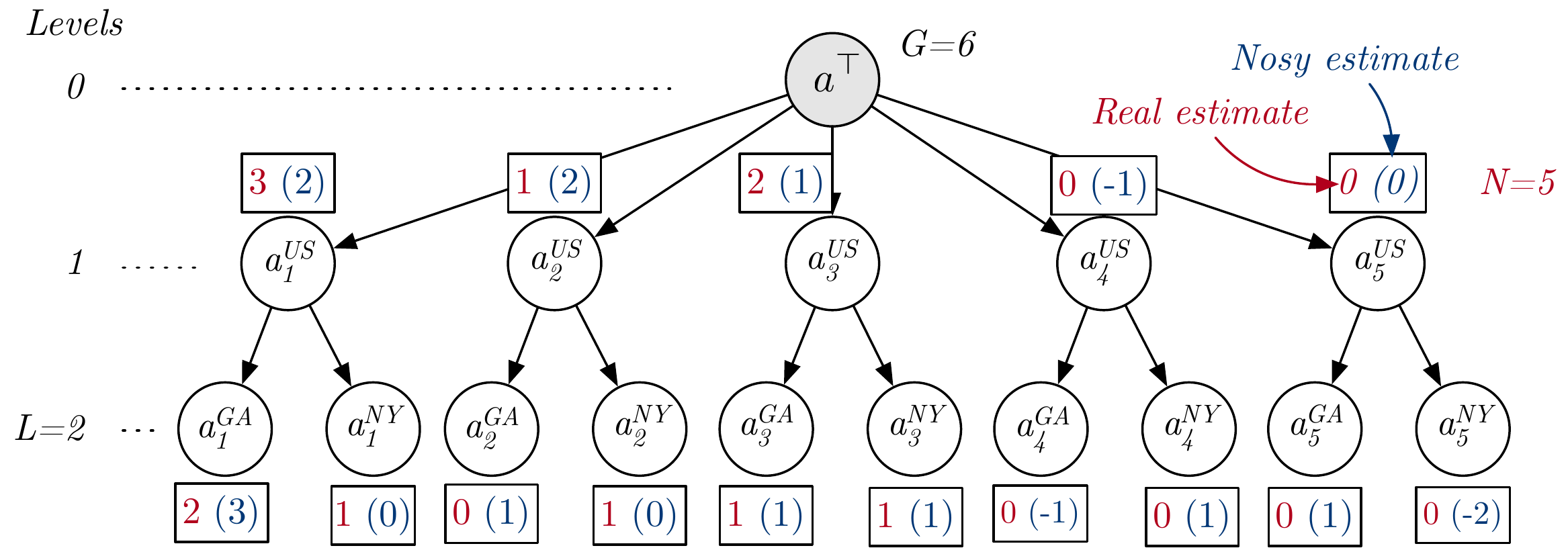}}\\
	\caption{\label{img:tree_regions}Region hierarchy $\treedp$
associated with the dataset of \Cref{tab:1}.}
\end{figure}

In the above, \eqref{eq:d1} describes the cost for $v$ of deviating
from the noisy group size $\tilde{\gsize}^r$. The function
$\ctablech^r(v)$, defined in \eqref{eq:d2}, \eqref{eq:d3}, and
\eqref{eq:d4}, uses the cost table of the children of $r$ to find the
combination of post-processed group sizes $\{x_c \in D^c\}_{c \in
  \textsl{ch}(r)}$ of $r$'s children that is consistent \eqref{eq:d3}
and minimizes the sum of their costs \eqref{eq:d2}.


\begin{wrapfigure}[10]{R}{0.5\textwidth}%
\vspace{-8pt}
\centering
\resizebox{0.99\linewidth}{!}
{%
	\begin{tabular}{@{}r |r r r@{}l@{}}
	\toprule
	$\bm{v}$ & $\tau_1^{\GA}$ & $\tau_1^\MI$ & 
	\multicolumn{2}{l}{~~$\tau_1^\US$} \\[-1pt]
	\midrule
	$\bm{0}$ & $9$   	   & ${}^\star0$  & $4$         &  $\!+\!\min(9\!+\!0)$\\[-1pt]
	$\bm{1}$ & $4$   	   & $1$          & $1$         &  $\!+\!\min(9\!+\!1, 4\!+\!0)$\\[-1pt]
	$\bm{2}$ & $1$	 	   & $4$          & $0$         &  $\!+\!\min(0\!+\!4, 4\!+\!1, 1\!+\!0)$\\[-1pt]
	$\bm{3}$ & ${}^\star0$ & $9$          & ${}^\star1$ &  $\!+\!\min(0\!+\!0, 1\!+\!1, 4\!+\!4, 9\!+\!9)$\\[-1pt]
	$\bm{4}$ & $1$         & $16$         & $4$         &  $\!+\!\min(1\!+\!0, 0\!+\!1, 1\!+\!4, \ldots)$\\[-1pt]
	\bottomrule
\end{tabular}
}%
\captionof{table}{\label{tab:Tdp}Example of cost table computation for subtree 
associated to the group 1 estimates.}
\end{wrapfigure} %
The dynamic program exploits these concepts in two phases. The first
phase is bottom-up and computes the cost tables for each node,
starting from the leaves only, which are defined by \eqref{eq:d1}, 
and moving up, level by level, to the root. The cost table at the
root is then used to retrieve the optimal cost of the problem. The
second phase is top-down: Starting from the root, each node $a^r$
receives its post-processed group size $\hat{n}^r$ and solves
$\ctablech^r(\hat{n}^r)$ to retrieve the optimal post-processed group
sizes $\hat{n}^c = x_c$ for each child $c \in \textsl{ch}(r)$.

An illustration of the process for the running example is
illustrated in \Cref{tab:Tdp}. It depicts the cost tables
$\ctable_1^{\GA}, \ctable_1^{\MI}$, and $\ctable_1^{\US}$ related to
the subtree rooted at $\node_1^{\US}$ (groups of size 1) computed
during the bottom-up phase. The values selected during the top-down
phase are highlighted with a star symbol.

In the implementation, the values $\ctablech^r(v)$ are computed using
a constraint program where \eqref{eq:d1} uses a table
constraint. The number of optimization problems in the dynamic program
is given by the following theorem.

\begin{theorem} 
\label{th:treedp_compl}
Constructing $\hat{\cT}^{\textsl{dp}}$ requires
solving $O(|\regionset| N \bar{D}$) optimization problems given
in \Cref{eq:d}, where $\bar{D} = \max_{s,r} |D_s^r|$ for
$r\in \regionset, s\in [N]$.  
\end{theorem}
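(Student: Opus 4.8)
The plan is to count the optimization subproblems of the form \eqref{eq:d} that the dynamic program solves, recalling that \eqref{eq:d} is invoked once to compute each entry $\ctable^r(v)$ of each cost table $\ctable^r$ (more precisely, its child-contribution part $\ctablech^r(v)$), and separately that each application of \eqref{eq:d} during the top-down phase is charged to the same $(r,v)$ pair. So the total number of optimization problems is, up to a constant factor, the total number of cost-table entries across all nodes of the augmented hierarchy $\treedp$.

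First I would observe that $\treedp$ has $N$ subtrees, one per group size $s\in[N]$, plus a dummy root $\node^\top$; the subtree for size $s$ is a copy of the region hierarchy $\cT$, hence has exactly $|\regionset|$ nodes $\{\node_s^r : r\in\regionset\}$. Therefore the total number of nodes in $\treedp$ is $N|\regionset| + 1$. For a fixed node $\node_s^r$, its cost table $\ctable_s^r$ has one entry per element of its domain $\dom_s^r$, i.e.\ $|\dom_s^r|$ entries, and each such entry is produced by one call to \eqref{eq:d} (for a leaf node this is the trivial quadratic in \eqref{eq:d1}, for an internal node it is the minimization \eqref{eq:d2}--\eqref{eq:d4}). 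Summing over all nodes gives $\sum_{s\in[N]}\sum_{r\in\regionset}|\dom_s^r| + O(1)$ optimization problems in the bottom-up phase. Bounding each $|\dom_s^r|$ by $\bar{D} = \max_{s,r}|\dom_s^r|$ yields the crude bound $O(|\regionset| N \bar{D})$. The top-down phase then solves one further problem \eqref{eq:d2}--\eqref{eq:d4} per internal node (to recover the children's post-processed values given the parent's), which is at most $N|\regionset|+1 = O(|\regionset| N \bar{D})$ additional problems since $\bar{D}\ge 1$; so the total is still $O(|\regionset| N \bar{D})$.

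The only mildly delicate point, and the one I would be most careful about, is justifying that each cost-table entry corresponds to \emph{exactly one} call to \eqref{eq:d} — i.e.\ that the dynamic program does not recompute entries and that the recursion in \eqref{eq:d2}--\eqref{eq:d4}, which itself is a minimization over tuples $\{x_c\}_{c\in\textsl{ch}(r)}$, is treated as a single "optimization problem" rather than being expanded combinatorially. This is a modeling convention: \Cref{eq:d} as written is one optimization instance per pair $(r,v)$, so the count is over invocations of \eqref{eq:d}, not over the internal work each invocation performs. (Indeed, the whole point of the subsequent section is that this internal work can be done efficiently by exploiting the structure of the cost functions.) With that convention fixed, the bound follows immediately by the node-counting argument above, and no real obstacle remains; the statement is essentially a bookkeeping lemma, and the proof is a one-paragraph summation.
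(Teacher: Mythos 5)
Your proposal is correct and follows essentially the same argument as the paper's own proof, which simply notes that $\treedp$ has $|\regionset|N$ nodes and that each node invokes the program in \Cref{eq:d} once per element of its domain $\dom_s^r$. Your additional bookkeeping for the dummy root, the top-down phase, and the convention that each invocation of \eqref{eq:d} counts as one optimization problem is a more careful elaboration of the same node-times-domain counting, not a different route.
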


\section{A Polynomial-Time PGSR Mechanism}
\label{sec:dp_gse_exploit}

The dynamic program relies on solving an optimization problem for each
region. This section shows that this optimization problem can be
solved in polynomial time by exploiting the structure of the cost
tables.

A cost table is a finite set of pairs ($s,c$) where $s$ is a group
size and $c$ is a cost. When the pairs are ordered by increasing
values of $s$ and line segments are used to connect them as in
\Cref{fig:ex2}, the resulting function is Piecewise Linear (PWL). For
simplicity, we say that a cost table is PLW if its underlying function
is PWL. Observe also that, at a leaf, the cost table is Convex PWL
(CPWL), since the L2-Norm is convex (see \Cref{eq:d1}).

{\em The key insight behind the polynomial-time mechanism is the
recognition that the function $\ctablech^r$ is CPWL whenever the cost
tables of its children are CPWL.} As a result, by induction, the cost
table of every node $a^r$ is CPWL.

\begin{lemma}
\label{lm:convexity}
The cost table $\ctable_s^r$ of each node $a_s^r$ of $\treedp$ is CPWL.
\end{lemma}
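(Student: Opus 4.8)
The plan is to prove the statement by structural induction on the DP tree $\treedp$, working from the leaves up to the root, and showing that CPWL-ness is preserved by the recursive update in \eqref{eq:d}. The base case is the leaves: for a leaf node $a_s^r$ the cost table is just $\ctable_s^r(v) = (v - \tilde{\gsize}_s^r)^2$ restricted to the integer domain $\dom_s^r$, which is the restriction of a convex quadratic to a finite set of consecutive integers; connecting the resulting points by line segments yields a convex piecewise-linear function, so the base case holds. For the inductive step, I assume that for a node $a_s^r$ every child cost table $\ctable_s^c$, $c \in \textsl{ch}(r)$, is CPWL, and I must show that $\ctable_s^r$ is CPWL. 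By \eqref{eq:d1} it suffices to show that $\ctablech_s^r$, defined by the inner minimization \eqref{eq:d2}--\eqref{eq:d4}, is CPWL, since adding the convex PWL term $(v - \tilde{\gsize}_s^r)^2$ (again, PWL in the integer-domain sense) to a CPWL function keeps it CPWL.

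The heart of the argument is therefore the claim that the \emph{infimal convolution} (epigraph sum) of finitely many CPWL functions subject to the coupling constraint $\sum_c x_c = v$ is again CPWL. I would establish this in two steps. First, for two CPWL functions $f_1, f_2$ on integer domains, the function $g(v) = \min\{ f_1(x_1) + f_2(x_2) : x_1 + x_2 = v,\ x_i \in D_i \}$ is CPWL: its domain is the Minkowski sum $D_1 + D_2$ (an interval of consecutive integers, since each $D_i$ is), and the standard characterization of infimal convolution of convex functions says $g$ is convex with slopes obtained by merging the sorted multisets of slopes of $f_1$ and $f_2$ — concretely, to increase $v$ by one unit one greedily moves along whichever of the two functions currently has the smallest marginal slope, which keeps the marginal increments nondecreasing, i.e., convex. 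Second, I extend to $|\textsl{ch}(r)|$ children by iterating this binary operation (infimal convolution is associative), so $\ctablech_s^r$ is CPWL with domain $\sum_{c} D^c$. Then \eqref{eq:d1} adds back the convex quadratic deviation term, and I intersect with the domain $\dom_s^r$ of node $r$; since restricting a CPWL function to a subinterval of consecutive integers preserves CPWL-ness, $\ctable_s^r$ is CPWL. Finally, because subtree $s$ of $\treedp$ is an ordinary rooted tree and the induction covers its root $\node_s^{\top\text{-child}}$ as well as every internal node, the conclusion holds for every node $a_s^r$ of $\treedp$.

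The main obstacle I anticipate is making the "merge the sorted slopes" argument fully rigorous in the \emph{discrete} (integer-domain) setting: one has to be careful that the domains $D^c$ are intervals of consecutive integers so that the Minkowski sum is again such an interval and the marginal-slope sequences are well-defined and finite, and that the greedy/convexity argument for infimal convolution — which is cleanest for convex functions on the reals — transfers verbatim to the piecewise-linear interpolations. A clean way to sidestep subtleties is to view each CPWL cost table as (the restriction to integers of) a genuine convex PWL function on a real interval, invoke the classical fact that infimal convolution of convex functions is convex and that infimal convolution of PWL functions is PWL (with breakpoints among sums of breakpoints), and then note that all relevant quantities stay integral because the breakpoints and the constraint $\sum_c x_c = v$ are integral. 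Everything else — the additive quadratic term, the domain restriction, the iteration over children, and the lift from a single subtree to all $N$ subtrees — is routine.
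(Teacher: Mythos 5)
Your proposal is correct and follows essentially the same route as the paper: induction on the levels of $\treedp$ with the quadratic leaves as base case, reduction of the inductive step to showing that the constrained minimization \eqref{eq:d2}--\eqref{eq:d4} over CPWL children is CPWL via greedy slope merging, and a final appeal to the closure of CPWL under addition of the deviation term \eqref{eq:d1}. The only difference is one of self-containment: where you invoke the classical fact that infimal convolution of convex PWL functions is convex PWL, the paper (in its appendix lemma on $\cfunctionch^r_s$) proves exactly that step from scratch by an exchange argument --- a three-case contradiction showing the incrementally built vectors $\bv^k$ are optimal and that the resulting marginal increments are nondecreasing --- which is precisely the ``main obstacle'' you correctly identified and proposed to sidestep.
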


\noindent
\Cref{lm:convexity} makes it possible to design a polynomial-time
algorithm to replace the constraint program used in the dynamic
program. The next paragraphs give the intuition underlying the
algorithm. 

\begin{figure}[!t]
	\centering\includegraphics[width=\linewidth]{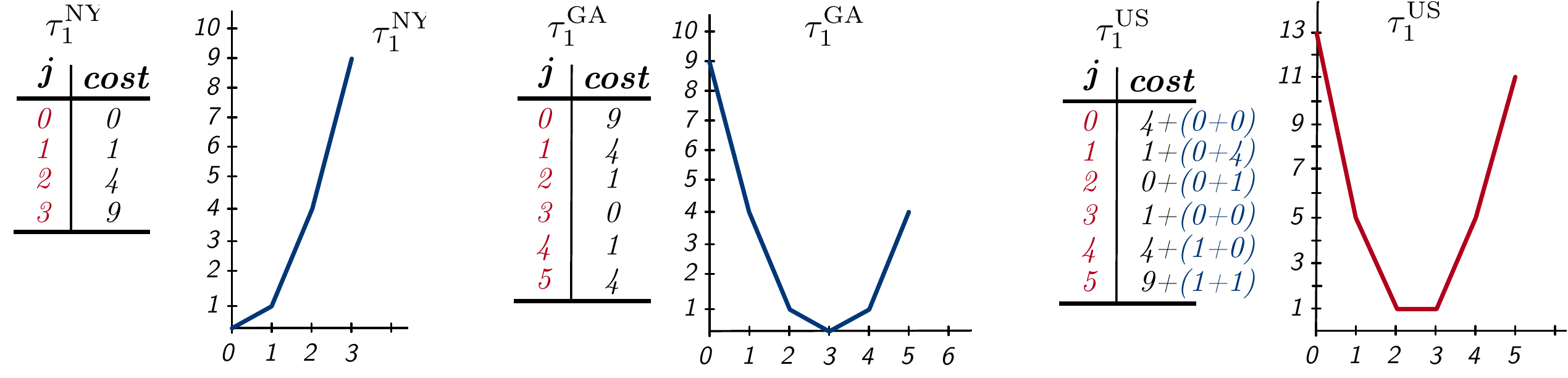}\\
	\caption{\label{fig:ex2} 
	Cost tables, extending \Cref{img:tree_regions}, computed by the mechanism.}
\end{figure}

Given a node $a^r$, the first step of the algorithm is to select, for
each node $c \in \textsl{ch}(r)$, the value $v_c^0$ with minimum cost,
i.e., $v^0_c = \argmin_v \ctable^c(v)$. As a result, the value $V^0 =
\sum_{c} v^0_c$ has minimal cost $\ctablech^r(V^0) = \sum_c
\ctable^c(v^0_c)$.  Having constructed the minimum value in cost table
$\ctablech^r$, it remains to compute the costs of all values $V^0+k$
for all integer $k \in [1, \max D^r - V^0]$ and all values $V^0-k$ for
all integer $k \in [1, V^0 - \min D^r]$. The presentation focuses on
the values $V^0+k$ since the two cases are similar. Let $\bv^0 = \{
v^0_c \}_{c \in \textsl{ch}(r)}$. The algorithm builds a sequence of
vectors $\bv^1,\bv^2,\ldots,\bv^k,\ldots$ that provides the optimal
combinations of values for $\ctablech^r(V^0+1),\ctablech^r(V^0+2),\ldots,\ctablech^r(V^0+k),\ldots$.
Vector $\bv^k$ is obtained from $\bv^{k-1}$ by changing the value of a single child whose cost table has the smallest slope, i.e.,
\begin{align}
\label{eq:special_v}
v_c^{k} = \left\{ 
\begin{array}{l l}
v_c^{k-1} + 1 & \mbox{if }\; c = \argmin_c 
		\ctable^c(v_c^{k-1} + 1) - \ctable^c(v_c^{k-1}) \\ 
v_c^{k-1}     & \mbox{otherwise.} 
\end{array}\right.
\end{align}

\noindent
Once $\ctablech^r$ has been computed, cost table $\ctable^r$ can be
computed easily since both $(v - \tilde{n}^r)^2$ and $\ctablech^r$ are CPWL and the sum of two CPWL functions is CPWL. A full algorithm is described in the next section.

\begin{example} These concepts are illustrated in \Cref{fig:ex2},
where the values of $\ctablech^r$ are highlighted in blue, and in parenthesis, in the right table. The first step identifies that $v^0_{\MI} = 0$, $v^0_{\GA}=3$, thus 
$V^0 = 3$ and $\ctablech^r(3) = \ctable^{\MI}(0) + \ctable^{\GA}(3) =
0 + 0 = 0$.  \noindent In the example, $\bv^0 = (v^0_{\MI},v^0_{\GA})
= (0,3)$ and $\bv^1 = (v^1_{\MI}, v^1_{\GA}) = (1,3)$, 
since $\MI = \argmin\{ \ctable^{\MI}(0+1) - \ctable^{\MI}(0),
					    \ctable^{\GA}(3+1) - \ctable^{\GA}(3)\}
			= \argmin\{1-0, 1-0\}$. Its associated cost, 
			$\ctablech^{\US}(\bv^1) = 1+0 =1$.
$\bv^2 = (1,4)$ since $\GA = \argmin \{ (\ctable^{\MI}(1+1) - \ctable^{\MI}(1)),
(\ctable^{\GA}(3+1) - \ctable^{\GA}(3)) \} = \argmin \{ (4-1), (1-0)
\}$, and its associated cost $\ctablech^{\US}(\bv^2) = 1 + 1 = 2$.
\end{example}

\begin{theorem}
\label{tm:convexity}
The cost table $\ctable_s^r$ of each node $a_s^r$ of $\treedp$ is CPWL.
\end{theorem}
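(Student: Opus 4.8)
The plan is to prove the claim by structural induction on $\treedp$, processing nodes from the leaves up to the root, exactly as in the bottom-up phase of the dynamic program; this is in fact the statement already anticipated as \Cref{lm:convexity}. Throughout I would use that each domain $D^r$ is an interval of consecutive non-negative integers $[\underline{d}^r,\overline{d}^r]$ (as it is in the mechanism), so that the set of values attainable as $\sum_{c \in \textsl{ch}(r)} x_c$ with $x_c \in D^c$ is again such an interval and no feasibility gaps arise. For the base case, at a leaf $a_s^r$ the cost table is $\ctable_s^r(v) = (v - \tilde{\gsize}_s^r)^2$ by \eqref{eq:d1}; its consecutive slopes are $\ctable_s^r(k+1) - \ctable_s^r(k) = 2(k - \tilde{\gsize}_s^r) + 1$, which strictly increase with $k$, so the PWL interpolant is CPWL.

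For the inductive step, assume every child cost table $\ctable^c$ ($c \in \textsl{ch}(r)$) is CPWL. By \eqref{eq:d1}, $\ctable^r(v) = (v - \tilde{\gsize}^r)^2 + \ctablech^r(v)$; since the first summand is CPWL and a sum of two CPWL functions over a common interval domain is CPWL, it suffices to show $\ctablech^r$ is CPWL. The function defined by \eqref{eq:d2}--\eqref{eq:d4} is the integer infimal ($\min$-plus) convolution of the $\ctable^c$'s, and I would argue directly, via the construction \eqref{eq:special_v}, that its sequence of consecutive slopes is non-decreasing. Starting every child at a minimizer $v_c^0 \in \argmin_v \ctable^c(v)$ gives $V^0 = \sum_c v_c^0$ with minimal cost, and for $k \ge 1$ the increment $\ctablech^r(V^0+k) - \ctablech^r(V^0+k-1)$ equals $\min_{c} \big( \ctable^c(v_c^{k-1}+1) - \ctable^c(v_c^{k-1}) \big)$, realized by incrementing the minimizing child $c^\star$ as in \eqref{eq:special_v}. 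After that increment, convexity of $\ctable^{c^\star}$ makes its ``next right slope'' only larger while the other children's are unchanged, and since $c^\star$ attained the previous minimum, the new minimum is at least the old one; hence the right-side slopes are non-decreasing and all $\ge 0$. The symmetric argument on the left (decrementing at each step the child with the largest ``next left slope'') shows the left-side slopes are non-decreasing and all $\le 0$, so the two branches meet convexly at $V^0$ and $\ctablech^r$ is CPWL. This closes the induction.

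The hard part will be the inductive step, specifically the claim that the integer infimal convolution of CPWL functions is again CPWL --- equivalently, that the multiset of slopes of $\ctablech^r$ is the merge of the children's slope multisets. The delicate point is precisely the verification above that incrementing the flattest child cannot expose a smaller slope at another child; this is where convexity of the children's cost tables is essential, together with the assumption that the domains $D^c$ are integer intervals (so that the reachable set of $v$ is an interval and the convolution is finite and gap-free on it). Everything else --- the base case and the ``sum of two CPWL functions is CPWL'' step --- is routine.
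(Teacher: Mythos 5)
Your overall plan is the paper's: induct on the levels of $\treedp$, handle leaves via the explicit quadratic, reduce the inductive step to showing $\ctablech^r$ is CPWL, and recover $\ctable^r$ as the sum of two CPWL functions --- this is exactly \Cref{lemma:phi_pcl} in the appendix plus the short concluding argument. The base case and the ``slope-merging'' picture of the integer infimal convolution are also the right ones.

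There is, however, a genuine gap at the crux. What you actually argue is that the table built by the greedy rule \eqref{eq:special_v} has non-decreasing increments, i.e.\ that the \emph{greedy-constructed} function is convex; but you take for granted the identity $\ctablech^r(V^0+k) - \ctablech^r(V^0+k-1) = \min_{c}\bigl(\ctable^{c}(v^{k-1}_c+1) - \ctable^{c}(v^{k-1}_c)\bigr)$, i.e.\ that $\bv^k$ actually attains the minimum in \eqref{eq:d2}--\eqref{eq:d4} at the value $V^0+k$. That optimality claim is the substance of the proof, and it does not follow from the monotonicity observation you offer: ``incrementing the flattest child cannot expose a smaller slope elsewhere'' controls the convexity of the greedy table, not whether the greedy table equals $\ctablech^r$. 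The paper establishes optimality by a separate induction on $k$ with an exchange/contradiction argument over three cases for a hypothetical better solution $\bv^*$; the nontrivial case is when $\bv^*$ places some child strictly \emph{below} its greedy value and compensates by pushing another child above its greedy value, and ruling this out uses the children's convexity to compare a left-slope of one child against a right-slope of another. Without that step (or an explicit appeal to a proved slope-merging theorem for discrete infimal convolutions of convex tables), the claim that the slopes of $\ctablech^r$ are the merge of the children's slope multisets remains an assertion. Everything else in your outline --- the base case, the sum-of-CPWL step, the symmetric left-hand branch, and the interval-domain caveat --- matches the paper and is fine.
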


\subsection{Computing Cost Tables Efficiently}
\label{appendix:algorithm}

	\begin{algorithm}[!t]
		\caption{TableMerge($\node^r$)}
		\label{alg:merge}

		\DontPrintSemicolon
		\SetKwFunction{FMerge}{Merge}
		\SetKwProg{Fn}{Function}{:}{}

		$\bv^+ \!\gets\! \textsl{sort}^2(\langle c, 
		\ctable^c(v) - \ctable^c(v-1) \rangle \st c \!\in\! \textsl{ch}(r), v \!\in\! \dom^{c+})
		\!\!\!\!\!\!\!\!\!\!\!\!\!\!\!$\;
		$\bv^- \!\gets\! \textsl{sort}^2(\langle c, 
		\ctable^c(v) - \ctable^c(v-1)
		\rangle \st c \!\in\! \textsl{ch}(r), v \!\in\! \dom^{c-})
		\!\!\!\!\!\!\!\!\!\!\!\!\!\!\!$\;
		$\ctable^r \gets \FMerge(\textsl{extract}^1(\bv^+), +1, \ctable^r)$\; 
		$\ctable^r \gets \FMerge(\textsl{extract}^1(\bv^-), -1, \ctable^r)$
		\vspace{4pt}

		\Fn{\FMerge{$\bv$, $\kappa$, $\ctable^r$}}{
	  		$\bv[c] \gets \argmin \ctable^c \;\; \forall c \in \textsl{ch}(c)$\;
	  		$v \gets  \sum_{c \in \textsl{ch}(r)} \bv[c]$\;
	  		\lIf{$\kappa > 0 \land v \in D^r$}{
	        	$\ctable^r(v) \gets \sum_{c \in \textsl{ch}(c)} \ctable^c(\bv[c])$
	        }
	        \For{$e \in \bv$}{
	        	$\bv[e] \gets \bv[e] + \kappa$\;
	        	$v \gets  \sum_{c \in \textsl{ch}(r)} \bv[c]$\;
	        	\lIf{$v \in D^r$} {
	        	$\ctable^r(v) \gets 
	        		\sum_{c \in \textsl{ch}(r)} \ctable^c(\bv[c])$
	        	}
	        }
	        \KwRet $\ctable^r$
		}
	\end{algorithm}

The procedure to compute $\ctablech_s^r$ is depicted in
\Cref{alg:merge}: It is executed during the bottom-up phase of the dynamic
program in lieu of \Cref{eq:d1,eq:d3,eq:d2,eq:d4}.  It takes as input
the node $a^r$ and returns its associated cost function $\ctable^r$.  Let 
$D^{c+} = \{v \st v \in D^c \land v > \argmin_{v'} \ctable^c(v')\}$ be the set of elements in the domain of $\ctable^c$ whose values are greater than the value corresponding to the table minimum cost.
Similarly, let $D^{c-} = \{v \st v \in D^c \land v < \argmin_{v'} \ctable^c(v')\}$.  Lines $1$ and $2$ construct the vectors
$\bv^+$ and $\bv^-$ that list the pairs $(c, \ctable^c(v) -
\ctable^c(v-1))$ of node identifier and \emph{slope} associated with
the cost function for every child node $\node^c$ of $\node^r$ and
element $v \in D^{c+}$ (resp.~$\in D^{c-}$), sorted using the second
element of the pair.  Lines $3$ and $4$ extract the identifiers
for each element of the sorted lists $\bv^+$ and $\bv^-$ and call the
function $\FMerge$ to update the node cost function $\ctable^r$.

The heart of \Cref{alg:merge} is the function $\FMerge$. For a node
$\ctable^r$, it takes as input a sorted vector of the identifiers
(containing as many elements as the sum of all children's domain
sizes), a value $\kappa \!\in\! \{+1, -1\}$, and the current node
$\ctable^r$.  Line $6$ constructs a map $\bv$ that assigns to each
children $\node^c$ of $\node^r$ the value associated with the minimum
cost of its table $\ctable^r$.  Line $7$ sums the values in the map
$\bv$ resulting in a value $v$ for which the function will compute the
cost $\ctable^r$ of the parent node (line $8$). The conditional
statement ensures that this operation is done only once--the $\FMerge
$ function is also called on vector $\bv^-$.  Next, for each element
in the sorted vector $\bv$, the function selects the next element $e$,
it increases its current index $\bv[e]$ by $\kappa$, and repeats the
operations above, effectively computing the value $\ctable^r(v)$ for
the cost table of node $\node^r$. Line $12$ simply ensures that the
computed value is in the domain of the node.

\begin{example} 
An example of the effect of \Cref{alg:merge} executed on a few steps 
of the cost tables $\ctable_1^{\US}, \ctable_1^{\GA},
\ctable_1^{\MI}$ is illustrated in \Cref{fig:ex2}.  Consider the
computation of the cost function $\ctable_1^{\textsl{\US}}$. We focus
only on the first call of the $\FMerge$ function.  The algorithm first
constructs the vector $\bv^+ = ((\MI,1), (\GA,4), (\MI,2), (\GA,5),
(\MI,3))$ (line 1), extracts its first component $\bv^+ =
(\MI, \GA, \MI, \GA, \MI)$ (line 3), and hence calls the $\FMerge$
function.  The routine first initializes the vector $\bv$ as $\bv[\MI]
= \argmin \ctable_1^{\MI} = 0, \bv[\GA] = \argmin \ctable_1^{\GA} = 3$
(line 6), and $v = \bv[\MI] + \bv[\GA] = 0 + 3 = 3$ (line 7).  It
hence computes the value $\ctable_1^{\US}(3) = 1$ (line 8).  Its
associated cost is composed by the parent contribution (from \Cref{eq:d1})
$|3 - 2|^2 = 1$ and the children contribution $\ctable_1^{\MI}(0) = 0$
and $\ctable_1^{\GA}(3)=0$. 
Next, the algorithm increases $v$ by
$1$ ($v=4$), in line 11, and selects the next element in $\bv^+$
(i.e., "\MI"), in line 9.  It increments its value, $\bv[\MI] = 1$
(line 10), and computes: $\ctable_1^{\US}(4) = |4 - 2|^2
+ \ctable_1^{\MI}(\bv[\MI]) + \ctable_1^{\GA}(\bv[\GA]) = 4 + 1 + 0 =
5$ (line 12).  Finally, it increases $v$ by $1$ ($v=5$) and selects
the next element in $\bv^+$ (i.e., ``\GA"). It increments its value,
$\bv[\GA] = 1$ and computes: $\ctable_1^{\US}(5) = |5 - 2|^2
+ \ctable_1^{\MI}(\bv[\MI]) +
\ctable_1^{\GA}(\bv[\GA]) = 9 + 1 + 1 = 11$.  
\end{example}

\begin{theorem}\label{tm:dp_efficiency}
The cost table $\ctable^r_s$ for each region $r$ and size $s$ can be computed in time $O\big(\bar{D} \log \bar{D} \big)$.
\end{theorem}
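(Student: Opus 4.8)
The plan is to show that Algorithm~\ref{alg:merge} computes $\ctable^r_s$ correctly and that each of its lines runs in linear or near-linear time in $\bar D$, using three ingredients: convexity of the children's tables, optimality of the greedy merge, and a short accounting of domain sizes. By \Cref{tm:convexity} each child table $\ctable^c$ is CPWL, so for every $c\in\textsl{ch}(r)$ the forward slopes $\ctable^c(v)-\ctable^c(v-1)$ over $v\in D^{c+}$ form a non-decreasing sequence, and symmetrically the backward slopes over $v\in D^{c-}$ are monotone. Consequently the lists assembled in Lines~1--2 are unions of $|\textsl{ch}(r)|$ already-sorted sequences, and building the sorted vectors $\bv^+,\bv^-$ costs $O(M\log M)$ (indeed $O(M\log|\textsl{ch}(r)|)$ with a priority queue), where $M\defeq\sum_{c\in\textsl{ch}(r)}|D^c_s|$.

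Next I would establish correctness of the greedy reconstruction used by the \textsl{Merge} subroutine: the vector $\bv^k$ obtained by iterating \eqref{eq:special_v} from the componentwise minimizers $\bv^0$ is an optimal solution of \eqref{eq:d2}--\eqref{eq:d4} for target $V^0+k$, and symmetrically for $V^0-k$. This is a standard exchange argument: moving the total up by $k$ units at minimum cost amounts to purchasing the $k$ cheapest available marginal increments, and by convexity of each $\ctable^c$ the cheapest unused increment of child $c$ at any stage is its next forward slope --- exactly the order in which Line~1 feeds the children to \textsl{Merge}. The same monotonicity shows the value sequence $\ctablech^r(V^0),\ctablech^r(V^0\pm1),\dots$ has monotone slopes, so $\ctablech^r$ is CPWL; adding the term $(v-\tilde\gsize^r)^2$ pointwise (both summands CPWL) then produces $\ctable^r_s$ in a further $O(|D^r_s|)\le O(\bar D)$ time.

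For the remaining running time, locating the per-child minimizers is a single $O(|\textsl{ch}(r)|)$ pass, after which every element of $\bv^+$ (and of $\bv^-$) triggers $O(1)$ work, so Lines~3--4 cost $O(M)$. It then remains to bound $M=O(\bar D)$: under the tightest consistent choice of domains, $D^r_s=\{0,\dots,\sum_{c}\max D^c_s\}$, so $|D^r_s|=1+\sum_{c}(|D^c_s|-1)=M-|\textsl{ch}(r)|+1$; pruning children whose relevant domain is the singleton $\{0\}$ (they appear in neither $\bv^+$ nor $\bv^-$), every remaining child has $\max D^c_s\ge1$, whence $|\textsl{ch}(r)|\le\sum_{c}\max D^c_s=|D^r_s|-1\le\bar D$ and $M\le2\bar D$. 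Combining the three parts, producing $\ctable^r_s$ costs $O(M\log M)+O(M)+O(\bar D)=O(\bar D\log\bar D)$.

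The step I expect to be the real obstacle is the second one --- making the exchange argument airtight, namely that the single-coordinate greedy update \eqref{eq:special_v} is simultaneously optimal for \emph{every} target value, since this is exactly what licenses replacing the $\bar D$ separate optimizations counted in \Cref{th:treedp_compl} by one linear-time merge. The domain-size bookkeeping is routine but must be stated with care so that no stray $|\textsl{ch}(r)|$ factor survives into the final bound.
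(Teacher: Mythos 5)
Your proof is correct and takes essentially the same route as the paper: the paper's own justification of this theorem is the single observation that \Cref{alg:merge}'s cost is dominated by the sorting in lines 1--2, and the optimality of the single-coordinate greedy update that you establish by an exchange argument is precisely the paper's key \Cref{lemma:phi_pcl}, proved in the appendix by the same three-case exchange. The one point where you go beyond the paper is the bookkeeping showing $\sum_{c\in\textsl{ch}(r)}|D^c_s|=O(\bar D)$ --- the paper silently assumes this, and it genuinely requires a consistency condition like the one you state (with the fixed-width domains of the paper's implementation the sorted lists have $\Theta(|\textsl{ch}(r)|\,\bar D)$ entries), so your care there fills a real gap rather than being redundant.
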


The result above can be derived observing that the runtime complexity
of \Cref{alg:merge} is dominated by the sorting operations in lines
$1$ and $2$.

\section{Cumulative Counts for Reduced Sensitivity}
\label{sec:cumulative_mechanisms}

In the mechanisms presented so far, each query has sensitivity
$\Delta_{\gsizevec} = 2$. This section exploits the structure of the
group query to reduce the query sensitivity and thus the noise
introduced by the geometric mechanism. The idea relies on an operator
$\oplus : \mathbb{Z}_+^N \to \mathbb{Z}_+^N$ that, given a vector
$\gsizevec = (\gsize_1, \ldots, \gsize_N)$ of group sizes, returns its
cumulative version $\csizevec = (\csize_1, \ldots, \csize_N)$ where
$\csize_s = \sum_{k=1}^s \gsize_k$ is the cumulative sum of the first
$s$ elements of $\gsizevec$. 

\begin{lemma}
The sensitivity $\Delta_{\csizevec}$ of the cumulative group estimate query is 1.
\end{lemma}

\begin{figure}[!tb]
	\centering\includegraphics[width=0.5\linewidth]{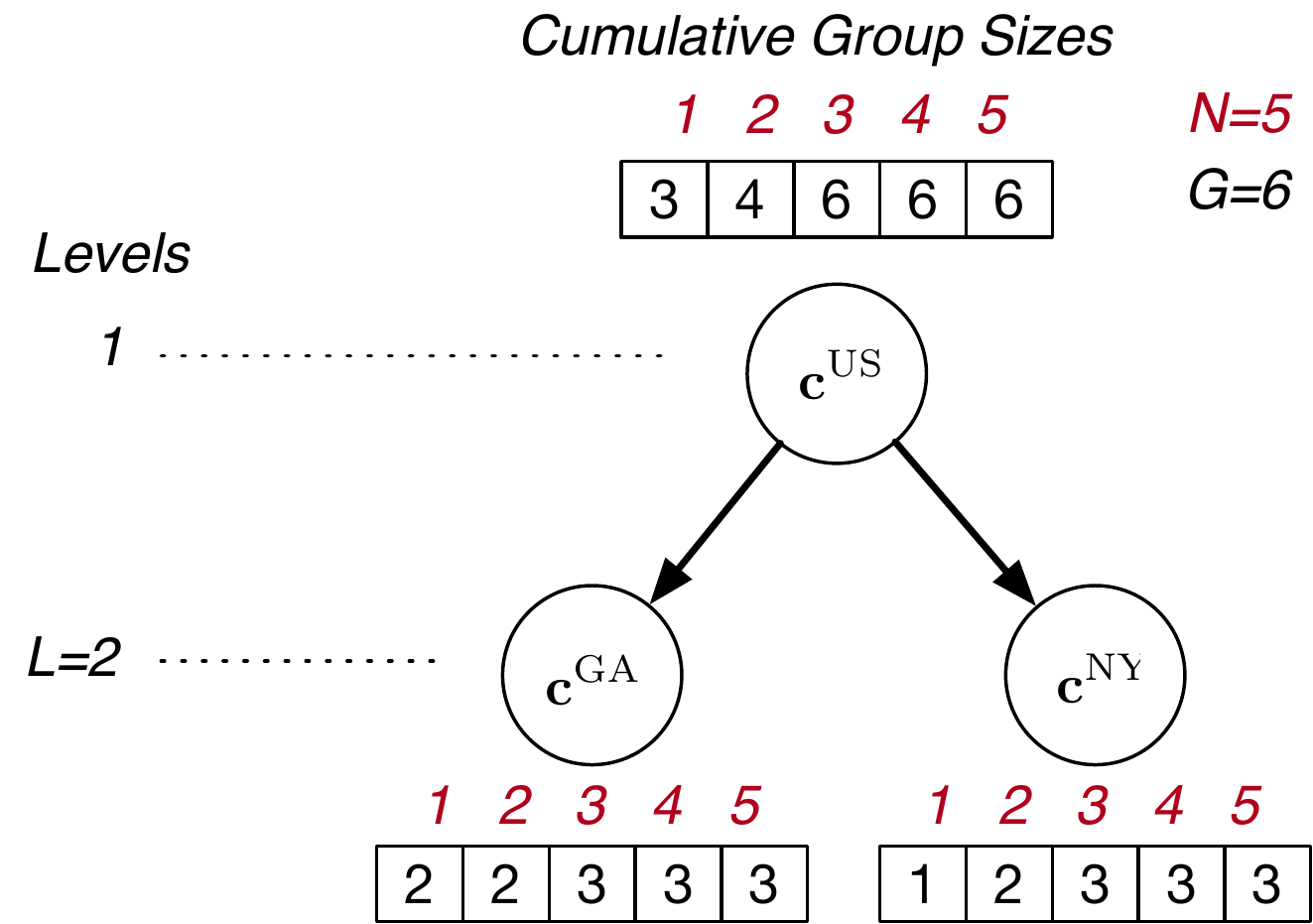}
	\caption{\label{fig:hierarhcy2} 
	\emph{Cumulative group size} hierarchy $\cT_2^c$ associated with the dataset of \Cref{tab:1}.
	}
\end{figure}

\noindent 
The result follows from the fact that removing an element from a group
$s$ in $\csizevec$ only decreases group $s$ by one and increases the
group $s-1$ preceding it by one. This idea is from \cite{hay:10},
where cumulative sizes are referred to as \emph{unattributed
histograms}.

\section{Cumulative PGSR Mechanisms}
\label{sec:cumulative_PGSR}

Operator $\oplus$ can be used to produce a hierarchy $\cT^c =
\{\csizevec^r | r \in \regionset\}$ of cumulative group sizes.  An
example of such a hierarchy is provided in \Cref{fig:hierarhcy2}.  To
generate a privacy-preserving version $\tilde{\cT}^c$ of $\cT^c$, it
suffices to apply the geometrical mechanism with parameter $\lambda =
(L / \epsilon)^N$ on the vectors $\csizevec^r$ associated with every
node $a_s^r$ for the groups of size $s$ in region $r$ 
of the region hierarchy. Once the noisy sizes are
computed, the noisy group sizes can be easily retrieved via an inverse
mapping $\ominus: \mathbb{Z}_+^N \to \mathbb{Z}_+^N$ from the
cumulative sums.

Note, however, that the resulting private versions $\tilde{\csizevec}^r$
of $\csizevec^r$ may no longer be non-decreasing (or even
non-negative) due to the added noise.  Therefore, as in Section
\ref{sec:IPPGSR}, a post-processing step is applied to restore
consistency and to guarantee the PGSR conditions \eqref{gse:2} to
\eqref{gse:4}. The post-processing is illustrated in Model \ref{fig:NPc}.  It
takes as input the noisy hierarchy of cumulative sizes $\tilde{\cT}^c$
computed with the geometrical mechanism and optimizes over variables
$\hat{\csizevec}^r =  (\hat{\csize}_1^r,\ldots, \hat{\csize}_N^r)$
for $r \in  \regionset$, minimizing the L2-norm with respect to their noisy
counterparts (\Cref{eq:C1}).  Constraints \eqref{eq:C2} guarantees that
the sum of the sizes equals the public value $G$ (PGSR
condition~\ref{gse:4}), where $\top$ denotes the root region of the
region hierarchy. Constraints \eqref{eq:C3} guarantee consistency of
the cumulative counts. Finally, Constraints \eqref{eq:C4} and
\eqref{eq:C5}, respectively, guarantee the PGSR consistency
(\ref{gse:2}) and validity (\ref{gse:3}) conditions.
\begin{model}[!bt]
\caption{The $\cM_c$ post-processing step. \label{fig:NPc}}
\vspace{-8pt}
\begin{alignat}{2}
	\minimize_{ \left\{\hat{\csizevec}^r \right\}_{r \in \regionset}} \;\;
	&
	\sum_{r \in \regionset} \left\| \hat{\csizevec}^r - \tilde{\csizevec}^r \right\|_2^2 
	&&
	\tag{C1} \label{eq:C1}\\
	\subjectto\;\;
	&
	\hat{\csize}^{\top}_N = G 
	&&
	\tag{C2} \label{eq:C2} \\
	&
	\hat{\csize}^r_i \leq \hat{\csize}^r_{i+1}
	&&\qquad
	\forall r \in \regionset, i \in [N-1] 
	\tag{C3} \label{eq:C3}\\
	&
	\sum_{r' \in \textsl{ch}(r)}	
	\hat{\csize}^{r'}_i = \hat{\csize}^r_{i}
	&&\qquad
	\forall r \in \regionset, i \in [N] 
	\tag{C4} \label{eq:C4}\\
	&
	\hat{\csize}^r_i \in \{0, 1, \ldots\} 
	&&\qquad
	\forall r \in \regionset, i \in [N] 
	\tag{C5} \label{eq:C5}
\end{alignat}
\vspace{-16pt} 
\end{model} %

Once the post-processed hierarchy $\hat{\cT}^c$ is obtained, the operator
$\ominus$ is applied to obtain a post-processed version of the group
size hierarchy $\hat{\cT}$. 
The resulting mechanism, denoted by $\cM_c$ is called the \emph{Cumulative PGSR Mechanism}. $\cM_c$ satisfies $\epsilon$-differential privacy, due to post-processing immunity,
similarly to the argument presented in \Cref{th:privacy_H} (see Appendix). This
mechanism is called the cumulative PGSR and denoted by $\cM_{c}$.
	\begin{algorithm}[!t]
		\caption{$\cM_c^{\textsl{dp}}$.post-process\hspace*{-25pt}}
		\label{alg:dp_calg}
	    \setcounter{AlgoLine}{0}
		\SetKwInOut{Input}{input}
        \SetKwInOut{Output}{output}
		\DontPrintSemicolon

        \Input{$\tilde{\cT}^c = \{ \tilde{\csizevec}^r \st r \in \regionset\}$}
        \ForEach{$r \in \regionset$}{
	    	$\displaystyle
	    	\hat{\csizevec}^r \gets 
	    	\hspace{16pt} 
	    	\argmin_{\hat{\csizevec}} \;\;
	    		\left\| \hat{\csizevec} - \tilde{\csizevec}^r \right\|^2$\;
	    	\nonl
	    	$\hspace{25pt} 
	    	\displaystyle
	    	\subjectto\;\; \hat{\csize}_i \leq \hat{\csize}_{i+1} \qquad\;\;\; \forall i \in [N-1]$\;
	    	\nonl
        	$\hspace{80pt} 
        	0 \leq \hat{\csize}_i \leq G \qquad \forall i \in [N]$\;
        	$\bar{\gsizevec} \gets \ominus( \textsl{round}( \hat{\csizevec}^r) )$\;
        	$\hat{\cT} \gets \bar{\cT} \cup \{\bar{\gsizevec}\}$
        }
        $\hat{\cT} \gets \cM_{\textsl{dp}}.{\textsl{post-process}}(\hat{\cT})$
	\end{algorithm}

\subsection{An Approximate Dynamic Programing Cumulative PGSR Mechanisms}
\label{sec:_apprdp_cumulative_PGSR}

Unlike for $\cM_{H}$, the structure of the PGSR problem cannot be
exploited directly to create a dynamic-programming mechanism. The constraints imposed by the optimization in Model~\ref{fig:NPc} do not allow for a hierarchical decomposition with recurrent
substructure as the inequalities \eqref{eq:C3} relate sibling nodes in the hierarchy.  Therefore, a dynamic program that exploits the
dependencies induced by the region hierarchy $\cT$, would need to
associate each node $\node^r$ in $\cT$ with a cost table $\ctable^r$
which may take as input vectors of size up to $N$. It follows that
each cost table $\ctable^r$ size is in $O(\binom{G+N-1}{N-1})$, which
makes this dynamic-programming approach computationally intractable,
especially for real-world applications.

To address this issue, a previous version of this work \cite{fioretto:CP-19}, proposed an approximated dynamic program version of the cumulative mechanism, called $\cM_c^{\textsl{dp}}$ that operates in three steps: 
\begin{enumerate}
\item It creates a noisy hierarchy $\tilde{\cT}^c$.
\item Next, it executes the post-processing step described in \Cref{alg:dp_calg}.
\item Finally, it runs the post-processing step of the polynomial-time PGSR mechanism (see \Cref{sec:dp_gse_exploit}).
\end{enumerate}

\noindent The important addition is in step 2 which takes $\tilde{\cT}^c$ as
input and, for each node $\tilde{\csizevec}^r$, solves the \emph{convex program} described in line 2 of \Cref{alg:dp_calg} to create a new noisy hierarchy
$\hat{\csizevec}^r$ that is non-decreasing and non-negative. The
resulting cumulative vector $\hat{\csizevec}^r$ is then rounded and
transformed to its corresponding group size vector through operator
$\ominus(\cdot)$ (line 3). The resulting vector $\bar{\gsizevec}^r$ is
added to the region hierarchy $\hat{\cT}$ (line 4). Observe that this
post-processing step pays a polynomial-time penalty with respect to the runtime of
the $\cM_H^{\textsl{dp}}$ post-processing. The convex program of line
(2) is executed in $O(\text{poly}(N))$ and the resulting
post-processing step runtime is in $O\big( |\regionset| \text{poly}(N)
+ |\regionset|N \bar{D} \log \bar{D} \big)$.


While, the experimental results (see \Cref{sec:experiments}) show 
that $\cM_c^{\textsl{dp}}$ consistently reduces the final error, 
when compared to $\cM_H^{\textsl{dp}}$, it is important to note that 
$\cM_c^{\textsl{dp}}$ does not solve the same post-processing program as the cumulative PGSR mechanism specified by \Cref{eq:C1,eq:C5,eq:C3,eq:C4,eq:C2}, since it restores consistency of the cumulative counts locally. 
To address this issue, this work introduces next an \emph{exact} DP cumulative PGSR mechanism.

\subsection{An Exact Dynamic Programming Cumulative PGSR Mechanisms}
\label{sec:dp_cumulative_PGSR}


This section develops an exact and efficient dynamic programming
algorithm for solving the cumulative PGSR problem optimally.  It
relies on the observation that, when the cumulative operator $\oplus$
is applied to the hierarchy $\treedp$ instead than to the original
hierarchy $\cT$, the post-processing step of Model \ref{fig:NPc}
only imposes inequalities among sibling nodes within each
subtree, allowing the construction of overlapping sub-problems.

\begin{figure}[!t] \centering
\includegraphics[width=\linewidth]{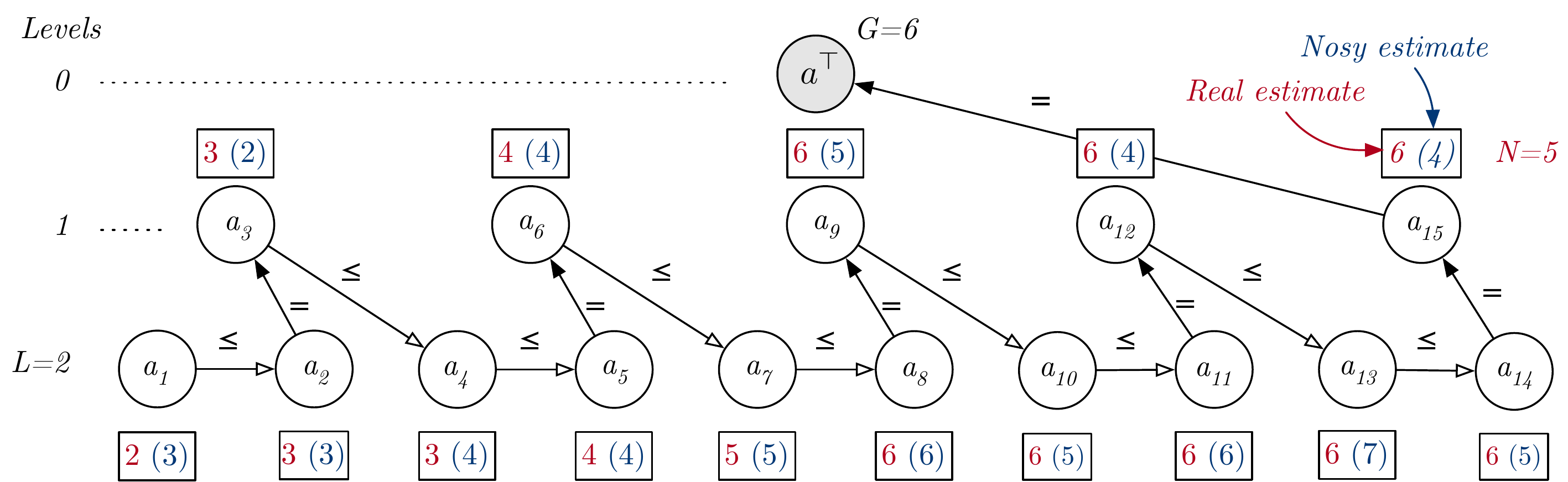}\\
\caption{\label{fig:c_tree} The $\chaindp$ node ordering associated
with the dataset of \Cref{tab:1}. 
} 
\end{figure}

More precisely, the key insight is to build a chain of nodes by visiting the tree
hierarchy $\treedp$ (see \Cref{img:tree_regions}) using a post-order traversal. The resulting chain structure $\chaindp$ is composed of
$|\cR|\,N$ nodes, where $\node_i$ refers to the $i$-th node in the
post-order traversal of $\treedp$. This chain structure enables the
use of an efficient dynamic-programming algorithm to solve the
post-processing step over cumulative counts. The resulting mechanism
is referred to as the \emph{Chain-based Cumulative Dynamic Programming
PGSR} mechanism and denoted by $\cM_c^{\textsl{ch}}$. It operates in
three phases described as follows.

\subsubsection{Cumulative Hierarchy Pre-processing Phase}
Given a DP hierarchy $\treedp$, the algorithm constructs a chain
 $\tree^{\textsl{ch}} = \{c_i \mid i\in[|\cR|N]\}$
by traversing the nodes of 
$\treedp$ using a post-ordering scheme. Denote with $\node_i$ and $n_i$ the $i$-th node and group size value, respectively, in the post-order traversal of $\treedp$. 
The cumulative count $c_i$ associated with node $a_i$ is the sum of $n_i$ and all values $n_j$ associated with nodes $a_j$ that precedes $a_i$ and lie in the same level as $\node_i$ in $\treedp$. More formally, 
\[
    c_i = \sum_{j\in S_i} \gsize_j\,,\qquad S_i=\left\{j\mid j\leq i,~\text{lev}(a_j)=\text{lev}(a_i)\right\}\,,
\]
where 
$\text{lev}(a)$ describes the level of node $a$ in $\treedp$.
An illustration of the chain structure $\cT^{ch}$ associated with the running example is shown in Figure \ref{fig:c_tree}. 


\subsubsection{Privacy Phase}
Next, the algorithm constructs a noisy version of the $\tree^{\textsl{ch}}$ 
hierarchy, denoted $\tilde{\tree}^{\textsl{ch}}$ that is constructed by applying the geometrical mechanism with 
parameter $\lambda(L/\epsilon)^{|\cR_\ell|\, N}$ to the cumulative count 
vector $\bm{c}^\ell$ associated with the nodes at level $\ell$ for all
$\ell\in[L]$.
Figure \ref{fig:c_tree} illustrates the resulting cumulative group values after the privacy-preserving phase in blue.




\subsubsection{The Post-processing Phase}

Given a noisy version $\tilde{\bm{c}}^\ell$ of the cumulative group 
sizes, an application of the inverse mapping $\ominus$ on 
$\tilde{\bm{c}}^\ell$ for each level $\ell \in [L]$ is sufficient 
to retrieve the desired noisy group sizes. 
However, the resulting group sizes may not satisfy the PGSR validity 
and consistency conditions.  The third phase of $\cM_c$ is 
 a post-processing step to restore the PGSR conditions 
\ref{gse:2} to \ref{gse:4}.

\begin{model}[!t]
\caption{The modified post-processing step. \label{fig:NPch}}
Let $\dot{N} = |\cR|\,N$ 
\begin{alignat}{2}
	\minimize_{ \{\hat{c}_i \}_{i\in [\dot{N}]} } \;\;
	&
	\sum_{i=1}^{\dot{N}} \| \hat{c}_i - \tilde{c}_i \|_2^2 
	&&
	\tag{C6} \label{eq:C6}\\ 
	\subjectto\;\;
	&
	\hat{c}_{\dot{N}} = G
	&&
	\tag{C7} \label{eq:C7} \\
	&
	\hat{c}_i \leq \hat{c}_{i+1}
	&&\qquad
	\forall i \in \left\{ i \in [\dot{N}-1] \mid
				       \text{lev}(a_i)  = \text{lev}(a_{i+1}) \right\}
	\tag{C8} \label{eq:C8}\\
	&
	\hat{c}_i = \hat{c}_{i+1}
	&&\qquad
	\forall i \in \left\{ i \in [\dot{N}-1] \mid
				       \text{lev}(a_i) > \text{lev}(a_{i+1}) \right\}
	\tag{C9} \label{eq:C9}\\
	&
	\hat{c}_i \in \{0, 1, \ldots\}
	&&\qquad
    \forall i \in [\dot{N}] 
	\tag{C10} \label{eq:C10}
\end{alignat}
\vspace{-16pt} 
\end{model} %
This step is illustrated in Model \ref{fig:NPch}.   It aims at
generating a new chain $\hat{\cT}^{\textsl{ch}}$ whose counts are
close to their noisy counterparts \eqref{eq:C6} and satisfy
the faithfulness \eqref{eq:C7}, consistency \eqref{eq:C8} and 
\eqref{eq:C9}, and validity conditions \eqref{eq:C10}. Once the 
post-processed chain $\hat{\cT}^{\textsl{ch}}$ is obtained, the 
operator $\ominus$ is applied to obtain the corresponding version of 
the group size hierarchy $\hat{\cT}$. 

The advantage of this formulation is its ability to exploit the
chain structure $\chaindp$ associated with the region hierarchy $\treedp_c$,
via an efficient dynamic programming algorithm to solve the 
post-processing step.  
Similarly to $\cM_H^{\textsl{dp}}$, this dynamic program consists of 
two phases, a bottom-up and a top-down phase, corresponding to the 
direction in which the chain is traversed. 

In the bottom-up phase, each node $\node_i$ of $\chaindp$ constructs
a cost table $\ctable_i: \dom_i \to \mathbb{R}_+$, where 
$D_i \subseteq \NN$ is the domain associated with node $a_i$, mapping 
values to costs. Intuitively, the cost table $\ctable_i(v)$ represents 
the optimal cost for the first $i$ nodes when the post-processed 
cumulative group size of the node $\node_i$ is equal to $v$. The cost 
tables for all nodes are updated starting from the head of the chain 
to its tail, according to the following recurrence relationship:
\begin{subequations}
\label{fig:ch_bottom_up}
\begin{alignat}{3}
  	\ctable_{i+1}(v) = 
		&\left(v - \tilde{c}_{i+1} \right)^2
		&& + 
		&& 
		\label{eq:d5}\\
		& \hspace{20pt} \ctablech_{i+1}(v) 				   
		&&= \begin{cases}
		    \ctable_i(v) & \text{if } \mathrm{lev}(a_i) > \text{lev}(a_{i+1}),\\
		    \underset{ \substack{x_i\in\dom_i\\ x_i\leq v} }{ \minimize } \ctable_i(x_i) & \mathrm{otherwise.}
		\end{cases}
		\label{eq:d6} 
\end{alignat}
\end{subequations}
The formulation for the cost $\ctable_{i+1}(v)$ has two components. 
The first component captures the deviation of the post-processed value 
$v$ from the noisy cumulative group size $\tilde{\csize}_{i+1}$ 
(\Cref{eq:d5}). The second component captures the optimal post-processing cost 
associated with the first $i$ nodes of the chain, provided that the 
succeeding node $\node_{i+1}$ takes on value of $v$ (\Cref{eq:d6}).

In the top-down phase, the algorithm traverses each node from the tail
of the chain to its head. 
The node $\hat{\csize}_{\vert\regionset\vert N}$ is set to value
$G$ (to satisfy Constraint \eqref{eq:C7}). Each visited node 
$\node_{i+1}$, for any $i\in[\vert\regionset\vert N-1]$, receives the 
post-processed cumulative group size $\hat{\csize}_{i+1}$ and determines 
the optimal post-processed solution for its predecessor $\node_i$ 
by solving $\ctablech_{i+1}(\hat{\csize}_{i+1})$ given in
\eqref{eq:d6}.

\begin{example}
An example of the mechanism $\cM_c^{\textsl{ch}}$ executed on a few
steps of the cost tables  $\ctable_1^{\GA},
\ctable_1^{\MI},\ctable_1^{\US}$ is shown in \Cref{tab:Tch}.
Consider the computation of the
\parbox[t]{\textwidth}
{%
\vspace{-2.5mm}
\begin{wrapfigure}[10]{R}{0.5\textwidth}%
\centering
\resizebox{0.99\linewidth}{!}
{%
\begin{tabular}{@{}r |r r@{}l@{}}
	\toprule
	$\bm{v}$ & $\tau_1^{\GA}$ & $\tau_1^\MI$ & \\[-1pt]
	\midrule
	$\bm{0}$ & 9       & $9\!+\!9$ &     $=\!\min(9)$\\[-1pt]
	$\bm{1}$ & 4  	   & $4\!+\!4$ &     $=\!\min(9,4)$\\[-1pt]
	$\bm{2}$ & 1  	   & $1\!+\!1$ &     $=\!\min(0,4,1)$\\[-1pt]
	$\bm{3}$ & $^{*}0$ & $^{*}0\!+\!0$ & $=\!\min(9,4,1,0)$\\[-1pt]
	$\bm{4}$ & 1       & $1\!+\!0$ &     $=\!\min(9,4,1,0,1)$\\[-1pt]
	\bottomrule
\end{tabular}
~~
\begin{tabular}{@{}r |r r@{}}
	\toprule
	$\bm{v}$ & $\tau_1^{\MI}$ & $\tau_1^\US$ \\[-1pt]
	\midrule
	$\bm{0}$ & 18      & $4 \!+\! 18$ 		\\[-1pt]
	$\bm{1}$ & 8       & $1 \!+\! 8$ 		\\[-1pt]
	$\bm{2}$ & 2       & $0 \!+\! 2$ 		\\[-1pt]
	$\bm{3}$ & $^{*}$0 & $^{*}1 \!+\! 0$ 	\\[-1pt]
	$\bm{4}$ & 1       & $4 \!+\! 1$ 		\\[-1pt]
	\bottomrule
\end{tabular}
}%
\captionof{table}{\label{tab:Tch}
Example of cost table computation for subtree associated to the group  1 estimates.}
\end{wrapfigure}
cost function $\ctable_1^{\US}$ in the bottom-up phase. 
To start with, the algorithm initializes the cost
table $\ctable_1^{\GA}$ as $\ctable_1^{\GA} (v)=\left\vert
v-3\right\vert^2$, where the noisy cumulative group size associated
with  the node $\node_1^\GA$ is $3$. Notice that the post-processed
cumulative group size of this node $\node_1^\GA$ is not supposed to
exceed that of its succeeding node $\node_1^\MI$. 
Thus, the algorithm updates the cost table $\ctable_1^\MI$ by aggregating the following
two parts: 
The one including the cost associated with the current node
$\left\vert v-3\right\vert^2$ and the that including the optimal} cost for its preceding node(s) $\min_{0\leq x\leq v} \ctable_1^\GA(x)$. 
It follows that $\ctable_1^\MI(3)=\left\vert 3-3\right\vert^2+\min_{0\leq x\leq 3}
\ctable_1^\GA(x)=0+\ctable_1^\GA(3)=0+0=0$. Then, for the node
$\node_1^\US$, its post-processed cumulative group size equals that of
its preceding node $\node_1^\MI$. As a result, its associated cost,
say $\ctable_1^\US(3)$, is composed by $\left\vert 3-2\right\vert^2=1$
and $\ctable_1^\MI(3)=0$. The table marks the value selected during the 
top-down phase with a $^{*}$ symbol.
\end{example}

The next results discuss the piecewise linear convexity of the cost function $\ctablech$ and the computational complexity of the algorithm. 
\begin{lemma}
\label{lm:ch_convexity}
The cost table $\ctable_i$ of each node $\node_i$ of $\chaindp$ is CPWL.
\end{lemma}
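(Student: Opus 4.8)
The plan is to prove \Cref{lm:ch_convexity} by induction on the position $i$ of the node in the post-order traversal chain $\chaindp$, mirroring the structure of the proof of \Cref{lm:convexity} but adapting it to the two cases of the recurrence \eqref{fig:ch_bottom_up}. The base case is the head of the chain: node $\node_1$ has cost table $\ctable_1(v) = (v - \tilde{c}_1)^2$, which is CPWL since the squared deviation, restricted to integer points and connected by line segments, is convex (the successive slopes $2k+1$ are strictly increasing). For the inductive step I assume $\ctable_i$ is CPWL and show $\ctable_{i+1}$ is CPWL, which by \eqref{eq:d5} reduces to showing that the contribution term $\ctablech_{i+1}$ is CPWL, since $(v-\tilde{c}_{i+1})^2$ is CPWL and the sum of two CPWL functions is CPWL (same argument as in \Cref{sec:dp_gse_exploit}).

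The heart of the argument is therefore the two cases of \eqref{eq:d6}. In the first case, $\mathrm{lev}(a_i) > \mathrm{lev}(a_{i+1})$, we simply have $\ctablech_{i+1}(v) = \ctable_i(v)$, so CPWL is inherited immediately from the inductive hypothesis. In the second case, $\ctablech_{i+1}(v) = \min_{x_i \in \dom_i,\, x_i \le v} \ctable_i(x_i)$, i.e.\ the running prefix-minimum of a CPWL function. The key observation is that, because $\ctable_i$ is convex, its prefix-minimum is non-increasing up to the minimizer $v^\star_i = \argmin_{x} \ctable_i(x)$ and then constant (equal to $\ctable_i(v^\star_i)$) for all $v \ge v^\star_i$; concretely, for $v < v^\star_i$ it equals $\ctable_i(v)$ itself (convexity forces $\ctable_i$ to be decreasing on that range) and for $v \ge v^\star_i$ it is flat. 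Such a function is piecewise linear with non-decreasing slopes (the slopes are those of $\ctable_i$ on the decreasing branch, followed by slope $0$), hence CPWL. I would also note that the domain restriction $x_i \le v$ combined with $x_i \in \dom_i$ keeps the function defined on the appropriate domain $\dom_{i+1}$, and that clipping/flattening at domain endpoints preserves convexity.

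The main obstacle — really the only delicate point — is verifying that the prefix-minimum operation preserves convexity and does not merely preserve piecewise-linearity: one must argue carefully that the slope sequence of $\ctablech_{i+1}$ is monotone non-decreasing. This follows because the slopes on the strictly-decreasing initial segment are a (negative, increasing toward $0$) subsequence of the slopes of the convex $\ctable_i$, and they are succeeded only by the zero slope of the flat tail, so no slope decrease is ever introduced. A secondary technical nuisance is being precise about integer domains versus the real-line PWL interpolation and about what happens at $\min \dom_i$ and $\max \dom_i$; I would handle this exactly as the earlier convexity lemmas do, treating the cost table as the PWL interpolant of its defining points and observing that all operations (summation, squared-deviation addition, prefix-minimum, domain clipping) act on the sorted slope list in a slope-monotonicity-preserving way. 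Assembling these pieces, the induction closes and every node's cost table in $\chaindp$ is CPWL.
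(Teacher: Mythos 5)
Your proof is correct and takes essentially the same route as the paper's: induction along the chain, splitting on the two cases of the recurrence, observing that the prefix-minimum of a convex cost table equals the table itself up to its minimizer and is flat thereafter (hence CPWL), and closing with the fact that adding the CPWL squared-deviation term preserves CPWL. Your explicit slope-monotonicity argument for why the prefix-minimum preserves convexity is a welcome elaboration of a step the paper merely asserts.
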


\begin{theorem}
\label{tm:efficiency}
The cost table $\ctable_i$ for each $i\in[\vert\regionset\vert N]$ can be computed 
in time $O\big(\bar{D} \big)$, where $\bar{D} = \max_{i} |D_i|$ for
$i\in[\vert\regionset\vert N]$.
\end{theorem}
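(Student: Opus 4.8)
The plan is to show that, once the cost table $\ctable_i$ of the predecessor node $\node_i$ has been built, the recurrence \eqref{eq:d5}--\eqref{eq:d6} produces $\ctable_{i+1}$ in $O(\bar{D})$ time; the statement then follows by induction along the chain $\chaindp$, the base case being the head node, which has no predecessor, so that $\ctablech$ vanishes and $\ctable_1(v) = (v-\tilde c_1)^2$ is built directly in $O(\bar{D})$. Throughout I would represent every cost table as a dense array indexed by the integer values of its domain. By \Cref{lm:ch_convexity} each such table is CPWL, and since every domain has at most $\bar{D}$ elements, the array has $O(\bar{D})$ entries and the underlying function has $O(\bar{D})$ breakpoints.

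The computation of $\ctablech_{i+1}$ from $\ctable_i$ splits into the two cases of \eqref{eq:d6}. If $\mathrm{lev}(\node_i) > \mathrm{lev}(\node_{i+1})$, then $\ctablech_{i+1} = \ctable_i$, a direct copy (or pointer re-use) restricted to $\dom_{i+1}$, costing $O(\bar{D})$. If instead $\node_i$ and $\node_{i+1}$ lie at the same level, $\ctablech_{i+1}(v) = \min\{\ctable_i(x) : x \in \dom_i,\, x \leq v\}$ is the prefix minimum of $\ctable_i$, and here I would invoke convexity: writing $x^\star = \argmin_{x \in \dom_i}\ctable_i(x)$, convexity of $\ctable_i$ gives $\ctablech_{i+1}(v) = \ctable_i(\min\{v,x^\star\})$, i.e.\ the new table agrees with $\ctable_i$ on the non-increasing branch $v \leq x^\star$ and is flat at the value $\ctable_i(x^\star)$ for $v \geq x^\star$, so it is obtained by a single left-to-right scan that clips $\ctable_i$ at its minimizer, in $O(\bar{D})$, and is plainly CPWL. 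It then remains to add the deviation term of \eqref{eq:d5}: restricted to integer arguments, $v \mapsto (v-\tilde c_{i+1})^2$ is itself CPWL, with a breakpoint at every integer and strictly increasing segment slopes $2(v-\tilde c_{i+1})+1$, so adding it pointwise over $\dom_{i+1}$ yields the CPWL table $\ctable_{i+1}$ in $O(\bar{D})$ time, consistent with \Cref{lm:ch_convexity}. Combining the three steps gives the claimed bound. A short remark would observe that the same $O(\bar{D})$ governs the top-down phase, since given $\hat c_{i+1}$ the predecessor value is $\hat c_i = \hat c_{i+1}$ in the first case and $\hat c_i = \min\{\hat c_{i+1}, x^\star\}$ in the second, and that the $\log \bar{D}$ factor of \Cref{tm:dp_efficiency} disappears here precisely because a chain node has a unique predecessor, so the sorting of the children's slopes performed by \Cref{alg:merge} is no longer needed.

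The step I expect to be the main obstacle is the second case of \eqref{eq:d6}: one must argue carefully that the prefix minimum of a convex piecewise-linear function is again convex piecewise-linear and, crucially, is computable in \emph{linear} rather than log-linear time, by exploiting that it amounts to clipping the function at its argmin. The only other place requiring care is the bookkeeping between the domains $\dom_i$ and $\dom_{i+1}$, namely ensuring that $x^\star$, or its truncation to $\dom_{i+1}$, is the value actually used, but this is absorbed into the same linear scan and does not affect the asymptotics.
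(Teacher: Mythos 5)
Your proof is correct and follows essentially the same route as the paper: an induction along the chain in which the minimizer of $\ctable_i$ is carried along, the prefix minimum in the second case of \eqref{eq:d6} is obtained by clipping $\ctable_i$ at its argmin (exactly the formula the paper establishes in the proof of \Cref{lm:ch_convexity}), and the quadratic deviation term is added pointwise, each step costing $O(\bar{D})$. You merely spell out details the paper leaves implicit, so no further changes are needed.
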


\section{Related Work}
\label{sec:related_work}

The release of privacy-preserving datasets using differential privacy
has been subject of extensive research \cite{huang:15,li:09,li:14}. 
These methods focus on creating \emph{unattributed histograms} that 
count the number of individuals associated with each possible property 
in the dataset universe. During the years, more sophisticated 
algorithms have been proposed, including those exploiting the problem 
structure using optimization to improve accuracy \cite{cormode:12,li:14,qardaji:13,li2010optimizing}.

Additional extensions to consider hierarchical problems were also explored. 
Hay et al.~\cite{hay:10} and Qardaji et al.~\cite{qardaji:13} study 
methods to answer count queries over ranges using a hierarchical 
structure to impose consistency of counts. 
While hierarchies contribute an additional level of fidelity for 
realizing a realistic data release it further challenge the privacy/accuracy 
tradeoff. 
Other methods have also incorporated partitioning scheme to the data-release
problem to further increase the accuracy of the privacy-preserving 
data by cleverly splitting the privacy budget in different hierarchical 
levels \cite{xiao2010differentially,cormode2012differentially,zhang2016privtree}.

These methods differ in two ways from the mechanisms proposed here:
(1) They focus on histograms queries, rather than group queries; the
latter generally have higher $L_1$-sensitivity and thus require more
noise and (2) they ensure neither the consistency for integral counts
nor the non-negativity of the release counts. They thus violate the
requirements of group sizes (see \Cref{sec:group_est}).

Fioretto and Van Hentenryck recently proposed a hierarchical-based
solution based on minimizing the L2-distance between the noisy counts
and their private counterparts \cite{fioretto:AAMAS-18}. While this 
solution guarantees non-negativity of the counts, their mechanism, 
if formulated as a MIP/QIP, cannot cope with the scale of the census 
problems discussed here which compute privacy-preserving country-wise 
group sizes. If their solution is used as is, in its relaxed form, 
then it cannot guarantee the integrality of the counts. These mechanisms 
reduce to $\cM_H^r$, which, has is shown to be strongly dominated by 
the DP-based mechanisms (see Section \ref{sec:experiments}).

A line of work that is closely related to the problem analyzed in this 
paper is that initiated by Blocki et al.~\cite{blocki2016differentially} 
and Hay et al.~\cite{hay:10} that study \emph{unattributed histogram} 
which are often used to study node degrees in a path in graphs.  
Unattributed histograms are used to answer queries of the type: 
``How many people belong to the $k$-th largest group?''. They can 
be far more accurate than naively adding noise to each group and 
then selecting the $k$-th largest noisy group~\cite{hay:10}. They are 
however different from the queries used in this work as they count 
people rather then groups and are not necessarily hierarchical. 
A substantial contribution is represented by the work of Kui et 
al.~\cite{kuo2018_VLDB}, that study the problem of releasing 
hierarchical group queries that satisfies the non-negativity, 
integrality, and consistency of the counts. They propose a 
solution that, similarly to \cite{fioretto:AAMAS-18}, can be 
mapped to $\cM_H^r$ and apply rounding to ensure integrality, 
as well as studying the context of cumulative group queries. 

Finally, an important deployment is represented by the TopDown
algorithm \cite{abowd2018us}, used by the US Census to for the 2018
end-to-end test, in preparation for the 2020 release. The algorithm is
a weaker form of $\cM_H^r$ that first obtains inexact noisy counts to
satisfy the desired privacy level and then alternates the following
two steps for multiple levels of a geographic hierarchy, from top to
bottom, as the name suggests. The first step is a program analogous to $\cM_H^r$, but it operates on two consecutive levels only. The
second step is an ad-hoc rounding strategy to guarantee the
integrality of the counts while satisfying the hierarchical
invariants. These two steps are repeated processing two contiguous
levels of the hierarchy until the leaves are reached.

\section{Experimental Evaluation}
\label{sec:experiments}

This section evaluates the proposed privacy-preserving mechanisms for
the PGSR problem. The evaluation focuses on comparing runtime and
accuracy of the mechanisms described in the paper.   Consistent with
the privacy literature, accuracy is measured in term  of the $L_1$
difference between the privacy-preserving group sizes  and the
original ones, i.e., given the original group sizes $\cT =
\{\gsizevec^r \st r \in \regionset\}$, and their private counterparts
$\hat{\cT}$, the $L_1$-error is defined as $\sum_r \| \gsizevec^r -
\hat{\gsizevec}^r\|_1$.  Since the mechanisms are non-deterministic
due to the noise added by the geometric mechanism, $30$ instances are
generated for each benchmark and the results report average values and
standard deviations. Each mechanism is run on a single-core $2.1$ GHz
terminal with $24$GB of RAM and is implemented in Python $3$ with
Gurobi $8.1$ for solving the convex quadratic optimization problems.

\paragraph{\textbf{Mechanisms}} 
The evaluation compares the PGSR mechanisms $\cM_{H}$, its cumulative
version $\cM_c$, and their polynomial-time dynamic-programming (DP)
counterparts $\cM_H^{\textsl{dp}}$, $\cM_c^{\textsl{dp}}$, and
$\cM_c^{\textsl{ch}}$.  The former are referred to as OP-based methods
and the latter as DP-based methods.  In addition to $\cM_{H}$ and
$\cM_c$, that solve the associated post-processing QIPs, the
experiments evaluate the associated \emph{relaxations}, $\cM_{H}^r$
and $\cM_c^r$, respectively, that relax the integrality constraints
\eqref{eq:H4} and \eqref{eq:C5} and rounds the solutions.  
still guaranteeing non-negativity and the final solutions are rounded.
  For completeness, the experiments also evaluate the performance of
the TopDown algorithm \cite{abowd2018us} and  the optimization-based
mechanism $\cM_{\textsl{dp}}^{\text{OP}}$ that \emph{does not} exploit
the structure of the cost function to compute the cost tables.

\begin{figure}[!t]
\centering
\includegraphics[width=0.65\textwidth]{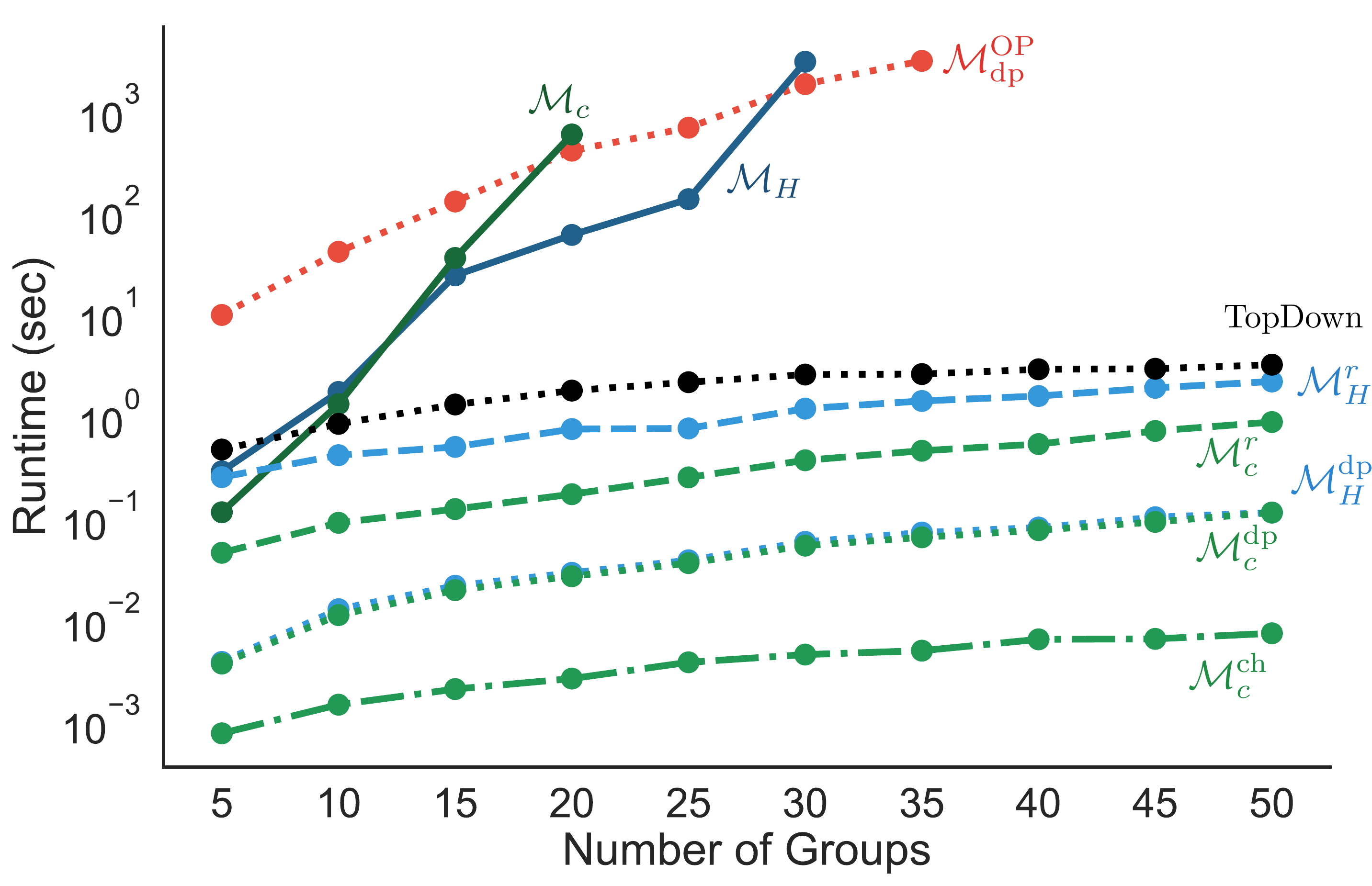}
\caption{\label{fig:runtime} Runtime (in seconds) at
varying of the number of group size $N$.}
\end{figure}

\paragraph{\textbf{Datasets}} The mechanisms are evaluated on three datasets.

\begin{itemize}
\item \textbf{Census Dataset}: The first dataset has
117,630,445 groups, 7,592 leaves, 305,276,358 individuals, 3
levels, and $N$=1,000. Individuals live in facilities, i.e., households
or dormitories, assisted living
facilities, and correctional institutions. Due to privacy concerns and
lack of available methods to protect group sizes during the 2010
Decennial Census release, group sizes were aggregated for any
facility of size 8 or more (see Summary File 1 \cite{census-files}).
Therefore, following \cite{kuo:18} and starting from the truncated
group sizes Census dataset, the experiments augment the dataset with
group sizes up to $N$=1,000 that mimic the published statistics, but
add a heavy tail to model group quarters (dormitories, correctional
facilities, etc.). This was obtained by computing the ratio $r =
\gsize_7 / \gsize_6$ of household groups of sizes 7 and 6, subtracting
from the aggregated groups $\gsize_{8+}$ $M$ people according to the
ratio $r$, and redistributed these $M$ people in groups $k > 8$ so
that the ratio between any two consecutive groups holds (in
expectation).  Finally, $50$ outliers were added, chosen uniformly in the interval between $10$ and $1,000$. The region hierarchy is composed by
the National level, the State levels (50 states + Puerto Rico and
District of Columbia), and the Counties levels (3144 in total).

\item \textbf{NY Taxi Dataset}: The second dataset has 13,282 groups,
  3,973 leaves, 24,489,743 individuals, 3 levels, and $N$=13,282. The
  2014 NY city Taxi dataset \cite{NY-taxi-data} describes trips
  (pickups and dropoffs) from geographical locations in NY city. The
  dataset views each taxi as a group and the size of the group is the
  number of pickups of the taxi. The region hierarchy has 3 levels:
  the entire NY city at level 1, the boroughs: \emph{Bronx},
  \emph{Brooklyn}, \emph{EWR}, \emph{Manhattan}, \emph{Queens}, and
  \emph{Staten Island} at level 2, and a total of 263 zones at level
  3.

\item \textbf{Synthetic Dataset}: Finally, to test the
 runtime scalability, the experiments considered synthetic data from
 the NY Taxi dataset by limiting the number of group sizes $N$
 arbitrarily, i.e., removing group sizes greater than a certain
 threshold.
\end{itemize}
 
\subsection{Scalability}

The first results concern the scalability of the mechanisms, which are
evaluated on the synthetic datasets for various numbers of group
sizes.  Figure \ref{fig:runtime} illustrates the runtimes of the
algorithms at varying of the number of group sizes $N$ from $5$ to
$50$ for the synthetic dataset. The experiments 
have a timeout of $30$ minutes and the runtime is reported in log-$10$
scale.  The figure shows that the exact OP-based approaches and
$\cM_{\textsl{dp}}^{\text{OP}}$ are not competitive, even for small
groups sizes. Therefore, these results rule out the following
mechanisms: $\cM_{\textsl{dp}}^{\text{OP}}$, $\cM_{H}$, and $\cM_{c}$
and the remaining results focus on comparing the relaxed versions of
the OP-based mechanisms versus their proposed DP-counterparts.

\subsection{Runtimes}

\begin{figure}[!t]
	\centering
	\includegraphics[width=0.7\linewidth]{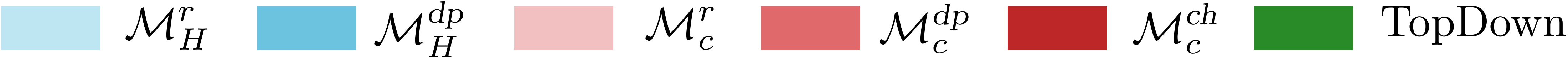}\\
	\includegraphics[width=0.45\linewidth]{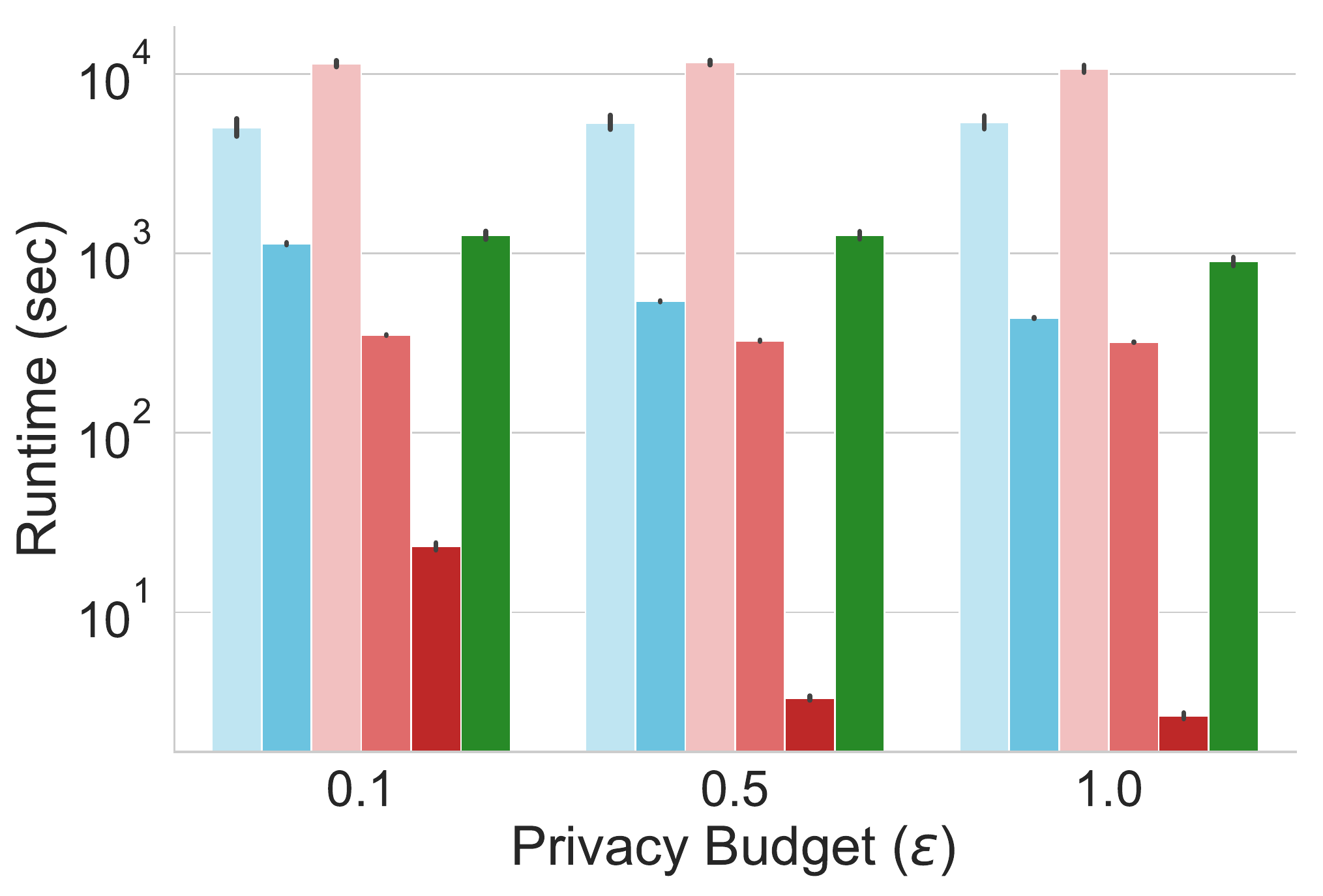}
	~~
	\includegraphics[width=0.45\linewidth]{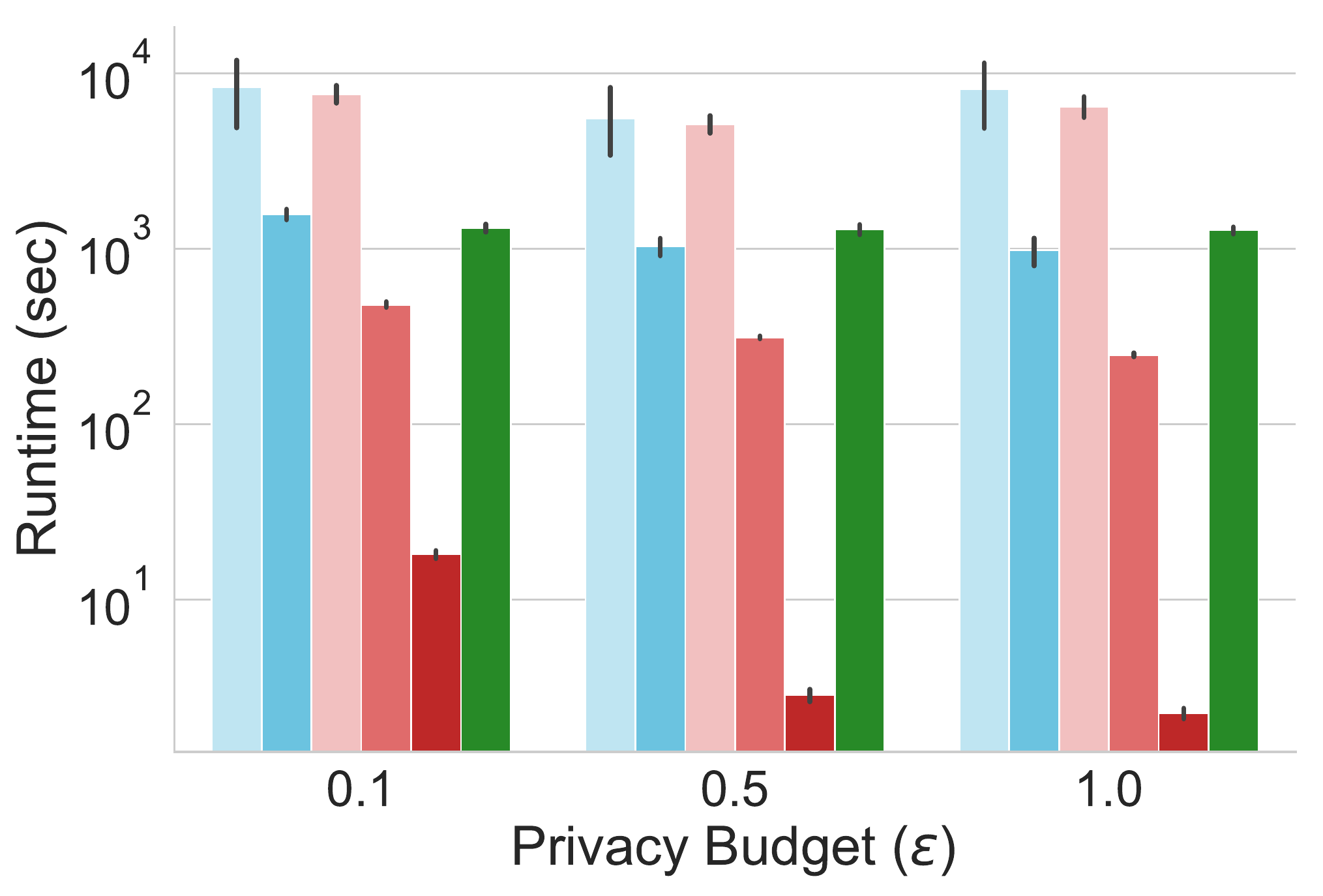}\\
	\caption{\label{fig:run} 
	The Runtime for the mechanisms: Census data (left) and Taxi data (right).}
\end{figure}

\Cref{fig:run} reports the runtime, in seconds, for the hierarchical
mechanism $\cM_{H}^r$ and its DP-counterpart $\cM_H^{\textsl{dp}}$, and the
hierarchical cumulative mechanism $ \cM_c^r$ and its DP-counterpart
$\cM_c^{\textsl{dp}}$, together with $\cM_c^{\textsl{ch}}$ and TopDown.
The left figure illustrates the results for the Census data and the right one for the NY Taxi data. The main observations can be summarized as follows:

\begin{enumerate}
\item Although the OP-based algorithms consider only a relaxation of
the problem, the \emph{exact} DP-versions are consistently faster. In
particular, $\cM_H^{\textsl{dp}}$ is up to one order of magnitude
faster than its counterpart $\cM^r$, and $\cM_c^{\textsl{dp}}$ is up
to two orders of magnitude faster than its counterpart $\cM_c^r$.

\item The proposed DP-based mechanisms are always faster than the TopDown algorithm and the newly proposed proposed $\cM_c^r$ mechanism is up to two order of magnitude faster than TopDown.

\item $\cM_c^r$ is consistently slower then $\cM_{H}^r$. This is because,
despite the fact that the two post-processing steps have the same
number of variables, the $\cM_c$ post-processing step has 
many additional constraints of type \eqref{eq:C3}.

\item The runtime of the DP-based mechanisms decreases as the privacy
budget increases, due to the sizes of the cost tables that depend on
the noise variance.\footnote{The implementation uses $\dom^r_s \!=\!
\{\tilde{\gsize}^r_s \!-\! \delta \ldots \tilde{\gsize}^r_s \!+\! \delta
\} \cap \mathbb{Z}_+$, where $\delta = 3 \times
\lceil2\lambda^2\rceil$, i.e., $3$ times the variance associated with
the double-geometrical distribution with parameter $\lambda$.}

\item The cumulative version $\cM_c^{\textsl{dp}}$ outperforms its
$\cM_H^{\textsl{dp}}$ counterpart. Once again, the reason is due to
the domain sizes. In fact, due to reduced sensitivity,
$\cM_c^{\textsl{dp}}$ applies a smaller amount of noise than that required by $\cM_H^{\textsl{dp}}$ to guarantee the same level of privacy and resulting in smaller domain sizes.

\item $\cM_c^{\textsl{ch}}$ is consistently faster than $\cM_c^{\textsl{dp}}$, which results from that computing the cost tables of the former is achieved with fewer operations than for the latter, as analyzed in \Cref{tm:dp_efficiency} and \Cref{tm:efficiency}.
\end{enumerate}

\begin{figure}[!t]
	\centering
	\includegraphics[width=0.7\linewidth]{AIJ_legend.pdf}\\
	\includegraphics[width=0.45\linewidth]{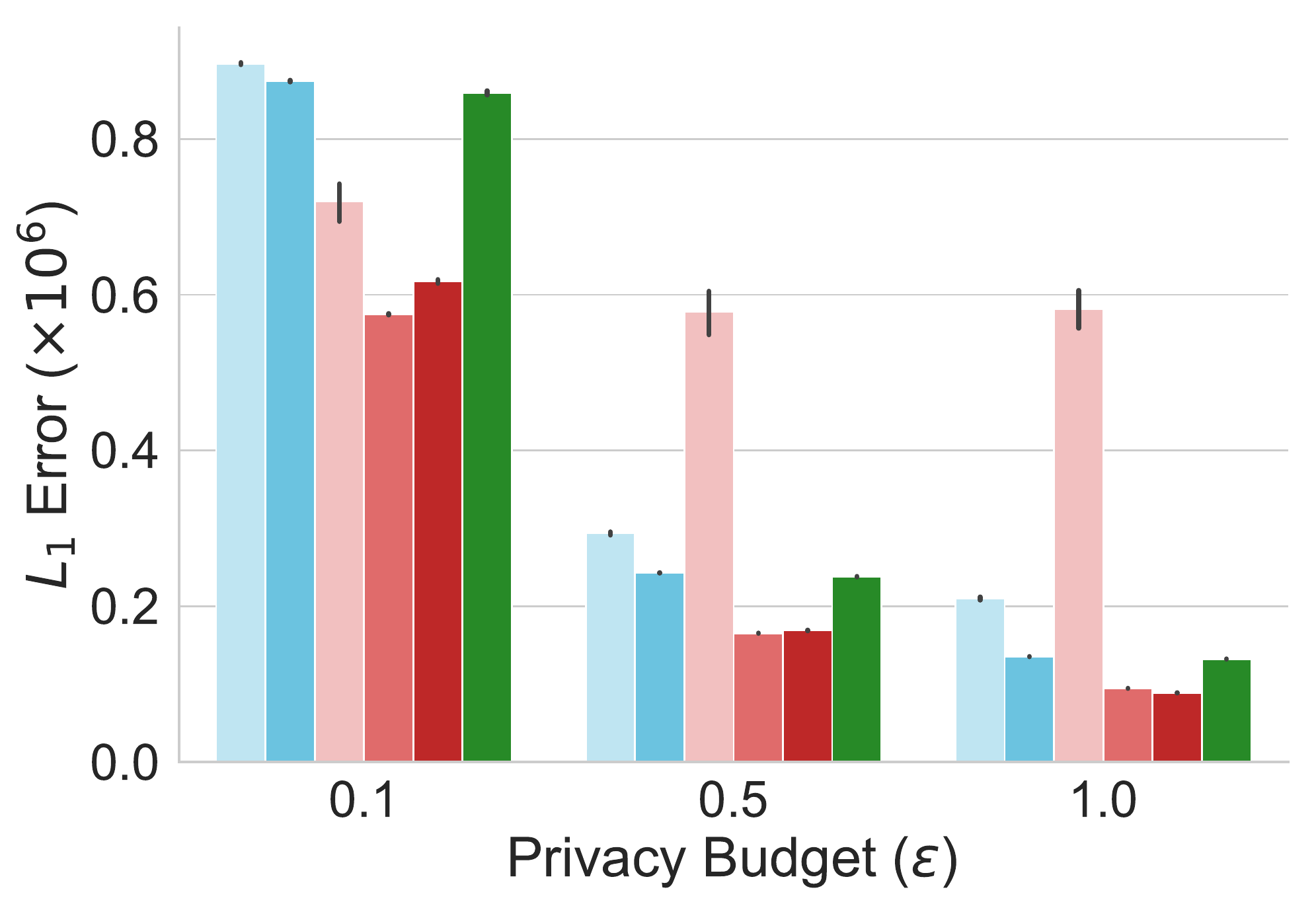}
	~~
	\includegraphics[width=0.45\linewidth]{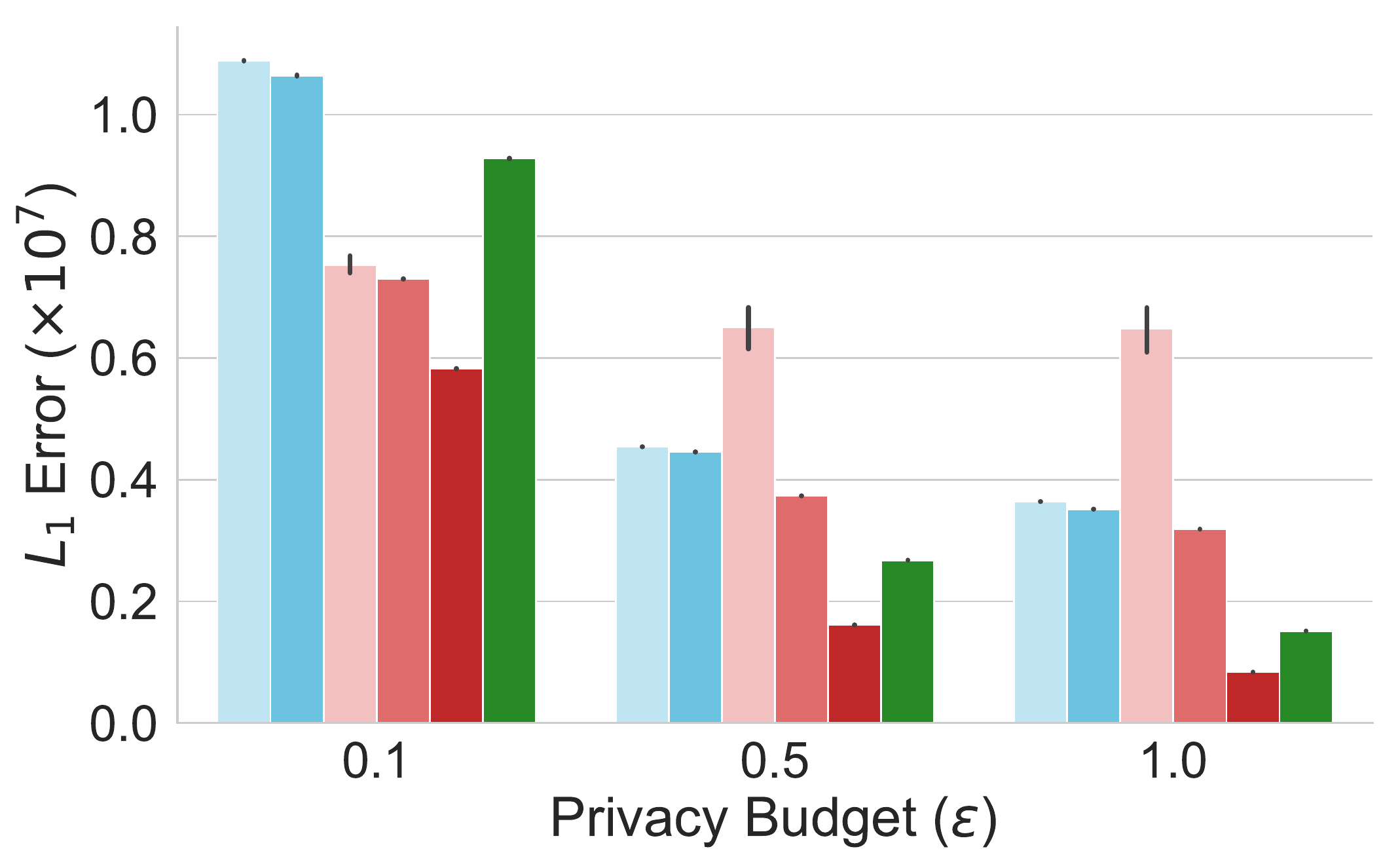}
	\caption{\label{fig:err} 
	The $L_1$ errors for the algorithms: Census data (left) and Taxi data (right). }
\end{figure}

\subsection{Accuracy}

\Cref{fig:err} reports the error induced by the
mechanisms, i.e., the $L_1$-distance between the
privacy-preserving and original datasets.
The main observations can be summarized as follows:

\begin{enumerate}
\item The DP-based mechanisms produce more accurate results than their
counterparts and $\cM^{\textsl{ch}}_c$ dominates all other mechanisms, including TopDown.

\item As expected, the error of all mechanisms decreases as the
  privacy budget increases, since the noise decreases as privacy
  budget increases.  The errors are larger in the NY Taxi dataset, which
  has a larger number of group sizes than the Census dataset.

\item Finally, the results show that the cumulative mechanisms tend to
concentrate the errors on small group sizes. Unfortunately, these are
also the most populated groups, and this is true for each
subregion of the hierarchy. On the other hand, the DP-based version,
that retains the integrality constraints, better redistributes the
noise introduced by the geometrical mechanism and produce 
substantially more accurate results.
\end{enumerate}

\noindent
To shed further light on accuracy, \Cref{tab:results} reports a
breakdown of the average errors of each mechanism at each level of the
hierarchies. Mechanism $\cM^{\textsl{ch}}_c$ is clearly the most
accurate. Note that the table reports the average number of
\emph{constraint violations} in the output datasets. A constraint
violation is counted whenever a subtree of the hierarchy violates the
PGSR \emph{consistency} condition \eqref{gse:2}.  Being exact, the
DP-based methods report no violations. In contrast, both $\cM_H^r$ and $\cM_c^r$ report a substantial amount of constraint violations.

\begin{table}[!b]
\centering
\resizebox{0.8\linewidth}{!} 
{
\begin{tabular}{ll|rrrr | rrrr}
    & \multicolumn{5}{c}{\bb{Taxi Data}} 
    & \multicolumn{4}{c}{\bb{Census Data}} \\

\toprule
    \multicolumn{2}{c}{}    
    	   	&  \multicolumn{3}{c}{$L_1$ Errors $(\times 10^4)$} &  ~~\#CV~ 
				& 
				\multicolumn{3}{c}{$L_1$ Errors $(\times 10^3)$} &  ~~~\#CV\\
    $\epsilon$ & 
    \multicolumn{1}{l}{\textbf{Alg}} & ~Lev 1 & ~~Lev 2 & ~~~Lev 3~&
    \multicolumn{1}{c}{}	  &~~~~Lev 1 & ~~Lev 2 & ~~~Lev 3~&   \\
\midrule
\multirow{6}{*}{\textbf{0.1}} 
	& $\cM_H^r$ 				&  25.4 &  158.7 &  904.4 &   18206 
							&  40.3 &  54.3 &  802.1 &    1966 \\
	& $\cM_H^{\textsl{dp}}$ 	&  26.6 &  121.9 &  915.7 &       \bb{0}
							&  10.3 &  38.4 &  825.4 &       \bb{0} \\
	& $\cM_c^r$ 				&  47.9 &  153.2 &  551.6 &   19460 
							&  23.1 &  64.5 &  632.2 &    1715 \\
    & $\cM_c^{\textsl{dp}}$ &  19.9 &   65.6 &  644.3 &       \bb{0} 
    						&   0.9 &  23.2 &  \bb{550.6} &       \bb{0} \\
    & $\cM_c^{\textsl{ch}}$ &  \bb{5.5} &   \bb{39.3} &  \bb{537.8} &       \bb{0} 
    						&   \bb{0.2} &  \bb{13.9} &  603.0 &       \bb{0} \\
    & TopDown				&  26.6 &  93.0 &  809.1 &   \bb{0} 
							&  3.7 &  35.0 &  820.5 &   \bb{0}\\
    \cline{1-10}
    \multirow{6}{*}{\textbf{0.5}} 
    & $\cM_H^r$ 				&   8.6 &   81.2 &  364.2 &   18591 
    						&  39.4 &  37.9 &  216.3 &    1990 \\
    & $\cM_H^{\textsl{dp}}$ 	&   5.5 &   31.0 &  408.9 &       \bb{0} 
    						&   2.4 &   9.4 &  230.8 &       \bb{0} \\
    & $\cM_c^r$ 				&  46.7 &  153.5 &  450.7 &   19531 
    						&  23.1 &  61.0 &  494.2 &    1718 \\
    & $\cM_c^{\textsl{dp}}$ &   4.0 &   16.4 &  352.9 &       \bb{0} 
    						&   0.2 &   5.8 &  \bb{159.1} &       \bb{0} \\
    & $\cM_c^{\textsl{ch}}$ &   \bb{1.2} &   \bb{9.5} &  \bb{150.6} &       \bb{0} 
    						&   \bb{0.0} &   \bb{3.4} &  165.3 &       \bb{0} \\
    & TopDown				&  5.3 &  20.3 &  242.1 &   \bb{0} 
							&  0.9 &  8.6 &  228.4 &   \bb{0} \\
    \cline{1-10}
    \multirow{6}{*}{\textbf{1.0}} 
    & $\cM_H^r$ 				&   7.7 &   77.2 &  279.0 &   18085 
    						&  40.7 &  39.2 &  130.0 &    1989 \\
    & $\cM_H^{\textsl{dp}}$ 	&   3.1 &   19.8 &  328.5 &   \bb{0} 
    						&   1.2 &   5.1 &  128.8 &    \bb{0} \\
    & $\cM_c^r$ 				&  47.1 &  154.2 &  447.1 &   19706 
    						&  24.1 &  63.0 &  494.5 &    1728 \\
    & $\cM_c^{\textsl{dp}}$ &   2.0 &    8.7 &  307.8 &       \bb{0} 
    						&   0.1 &   3.2 &   91.0 &       \bb{0} \\
    & $\cM_c^{\textsl{ch}}$ &   \bb{0.5} &    \bb{4.5} &  \bb{78.6} &       \bb{0} 
    						&   \bb{0.0} &   \bb{1.7} &   \bb{86.9} &       \bb{0} \\
    & TopDown				&  2.6 &  10.3 &  138.4 &   \bb{0}
							&  0.5 &  4.6 &  127.2 &   \bb{0} \\
    \bottomrule
\end{tabular}
}
\caption{\label{tab:results} $L_1$-errors and constraint violations (CV) for each level of the hierarchies.}
\end{table}

\section{Conclusions}
\label{sec:conclusions}

The release of datasets containing sensitive information concerning a
large number of individuals is central to a number of statistical
analysis and machine learning tasks. Of particular interest are
hierarchical datasets, in which counts of individuals satisfying a
given property need to be released at different granularities (e.g.,
the location of a household at a national, state, and county levels).
The paper discussed the \emph{Privacy-preserving Group Release} (PGRP)
problem and proposed an exact and efficient constrained-based approach
to privately generate consistent counts across all levels of the
hierarchy.  This novel approach was evaluated on large, real datasets
and results in speedups of up to two orders of magnitude, as well as
significant improvements in terms of accuracy with respect to
state-of-the-art techniques.  Interesting avenues of future directions
include exploiting different forms of parallelism to speed up the
computations of the dynamic programming-based mechanisms even further,
using, for instance, Graphical Processing Units as proposed
in~\cite{fioretto:Constraint-18}.


\bibliographystyle{plain}
\bibliography{cp19}

\begin{thebibliography}{10}

\bibitem{census-DP}
{AAAS: New Privacy Protections Highlight the Value of Science Behind the 2020
  census}.
\newblock
  {https://www.aaas.org/news/new-privacy-protections-highlight-value-science-behind-2020-census}.
\newblock Accessed: 2019-23-04.

\bibitem{census-leak}
{NBC News: Potential privacy lapse found in Americans' 2010 census data}.
\newblock
  {https://www.nbcnews.com/news/us-news/potential-privacy-lapse-found-americans-2010-census-data-n972471}.
\newblock Accessed: 2019-23-04.

\bibitem{NY-taxi-data}
{New York City Taxi Data}.
\newblock
  http://www.nyc.gov/html/tlc/html/about/trip\-\_re\-cord\-\_da\-ta\-.shtml.
\newblock Accessed: 2019-20-04.

\bibitem{censu-attack}
{NY Times: To Reduce Privacy Risks, the Census Plans to Report Less Accurate
  Data}.
\newblock
  https://www.nytimes.com/2018/12/05/upshot/to-reduce-privacy-risks-the-census-plans-to-report-less-accurate-data.html.

\bibitem{abowd2018us}
John~M Abowd.
\newblock The us census bureau adopts differential privacy.
\newblock In {\em Proceedings of the 24th ACM SIGKDD International Conference
  on Knowledge Discovery \& Data Mining}, pages 2867--2867, 2018.

\bibitem{blocki2016differentially}
Jeremiah Blocki, Anupam Datta, and Joseph Bonneau.
\newblock Differentially private password frequency lists.
\newblock In {\em NDSS}, volume~16, page 153, 2016.

\bibitem{cormode:12}
Graham Cormode, Cecilia Procopiuc, Divesh Srivastava, Entong Shen, and Ting Yu.
\newblock Differentially private spatial decompositions.
\newblock In {\em 2012 IEEE 28th International Conference on Data Engineering},
  pages 20--31. IEEE, 2012.

\bibitem{cormode2012differentially}
Graham Cormode, Cecilia Procopiuc, Divesh Srivastava, Entong Shen, and Ting Yu.
\newblock Differentially private spatial decompositions.
\newblock In {\em 2012 IEEE 28th International Conference on Data Engineering},
  pages 20--31. IEEE, 2012.

\bibitem{dinur:03}
Irit Dinur and Kobbi Nissim.
\newblock Revealing information while preserving privacy.
\newblock In {\em Proceedings of the twenty-second ACM SIGMOD-SIGACT-SIGART
  symposium on Principles of database systems}, pages 202--210. ACM, 2003.

\bibitem{Dwork:06}
Cynthia Dwork, Frank McSherry, Kobbi Nissim, and Adam Smith.
\newblock Calibrating noise to sensitivity in private data analysis.
\newblock In {\em Theory of cryptography conference}, pages 265--284. Springer,
  2006.

\bibitem{Dwork:13}
Cynthia Dwork and Aaron Roth.
\newblock The algorithmic foundations of differential privacy.
\newblock {\em Theoretical Computer Science}, 9(3-4):211--407, 2013.

\bibitem{erlingsson2014rappor}
{\'U}lfar Erlingsson, Vasyl Pihur, and Aleksandra Korolova.
\newblock Rappor: Randomized aggregatable privacy-preserving ordinal response.
\newblock In {\em Proceedings of the 2014 ACM SIGSAC conference on computer and
  communications security}, pages 1054--1067. ACM, 2014.

\bibitem{fioretto:AAMAS-18}
Ferdinando Fioretto, Chansoo Lee, and Pascal~Van Hentenryck.
\newblock Constrained-based differential privacy for private mobility.
\newblock In {\em Proceedings of the International Joint Conference on
  Autonomous Agents and Multiagent Systems {(AAMAS)}}, pages 1405--1413, 2018.

\bibitem{fioretto:Constraint-18}
Ferdinando Fioretto, Enrico Pontelli, William Yeoh, and Rina Dechter.
\newblock Accelerating exact and approximate inference for (distributed)
  discrete optimization with gpus.
\newblock {\em Constraints}, 23(1):1--43, 2018.

\bibitem{fioretto:CP-19}
Ferdinando Fioretto and Pascal Van~Hentenryck.
\newblock Differential privacy of hierarchical census data: An optimization
  approach.
\newblock In {\em Proceedings of the International Conference on Principles and
  Practice of Constraint Programming {(CP)}}, pages 639--655, 2019.

\bibitem{FIORETTO2021103475}
Ferdinando Fioretto, Pascal {Van Hentenryck}, and Keyu Zhu.
\newblock Differential privacy of hierarchical census data: An optimization
  approach.
\newblock {\em Artificial Intelligence}, 296:103475, 2021.

\bibitem{Ghosh:12}
Arpita Ghosh, Tim Roughgarden, and Mukund Sundararajan.
\newblock Universally utility-maximizing privacy mechanisms.
\newblock {\em SIAM Journal on Computing}, 41(6):1673--1693, 2012.

\bibitem{Golle:06}
Philippe Golle.
\newblock Revisiting the uniqueness of simple demographics in the us
  population.
\newblock In {\em Proceedings of the 5th ACM workshop on Privacy in electronic
  society}, pages 77--80. ACM, 2006.

\bibitem{hay:10}
Michael Hay, Vibhor Rastogi, Gerome Miklau, and Dan Suciu.
\newblock Boosting the accuracy of differentially private histograms through
  consistency.
\newblock {\em Proceedings of the VLDB Endowment}, 3(1-2):1021--1032, 2010.

\bibitem{huang:15}
Dong Huang, Shuguo Han, Xiaoli Li, and Philip~S Yu.
\newblock Orthogonal mechanism for answering batch queries with differential
  privacy.
\newblock In {\em Proceedings of the 27th International Conference on
  Scientific and Statistical Database Management}, page~24. ACM, 2015.

\bibitem{kuo2018_VLDB}
Yu-Hsuan Kuo, Cho-Chun Chiu, Daniel Kifer, Michael Hay, and Ashwin
  Machanavajjhala.
\newblock Differentially private hierarchical count-of-counts histograms.
\newblock {\em Proceedings of the VLDB Endowment}, 11(11):1509--1521, 2018.

\bibitem{kuo:18}
Yu-Hsuan Kuo, Cho-Chun Chiu, Daniel Kifer, Michael Hay, and Ashwin
  Machanavajjhala.
\newblock Differentially private hierarchical group size estimation.
\newblock {\em arXiv preprint arXiv:1804.00370}, 2018.

\bibitem{li:14}
Chao Li, Michael Hay, Gerome Miklau, and Yue Wang.
\newblock A data-and workload-aware algorithm for range queries under
  differential privacy.
\newblock {\em Proceedings of the VLDB Endowment}, 7(5):341--352, 2014.

\bibitem{li2010optimizing}
Chao Li, Michael Hay, Vibhor Rastogi, Gerome Miklau, and Andrew McGregor.
\newblock Optimizing linear counting queries under differential privacy.
\newblock In {\em Proceedings of the twenty-ninth ACM SIGMOD-SIGACT-SIGART
  symposium on Principles of database systems}, pages 123--134, 2010.

\bibitem{li:09}
Tiancheng Li and Ninghui Li.
\newblock On the tradeoff between privacy and utility in data publishing.
\newblock In {\em Proceedings of the 15th ACM SIGKDD international conference
  on Knowledge discovery and data mining}, pages 517--526. ACM, 2009.

\bibitem{Mcsherry:07}
Frank McSherry and Kunal Talwar.
\newblock Mechanism design via differential privacy.
\newblock In {\em Foundations of Computer Science, 2007. FOCS'07. 48th Annual
  IEEE Symposium on}, pages 94--103. IEEE, 2007.

\bibitem{qardaji:13}
Wahbeh Qardaji, Weining Yang, and Ninghui Li.
\newblock Understanding hierarchical methods for differentially private
  histograms.
\newblock {\em Proceedings of the VLDB Endowment}, 6(14):1954--1965, 2013.

\bibitem{Sweeney:02}
Latanya Sweeney.
\newblock k-anonymity: A model for protecting privacy.
\newblock {\em International Journal of Uncertainty, Fuzziness and
  Knowledge-Based Systems}, 10(05):557--570, 2002.

\bibitem{apple}
Apple Differential~Privacy Team.
\newblock Learning with privacy at scale.
\newblock {\em Apple Machine Learning Journal}, 1(8), 2017.

\bibitem{census-files}
{U.S.~Census Bureau}.
\newblock 2010 census summary file 1: census of population and housing,
  technical documentation.
\newblock https: //www.census.gov/prod/cen2010/doc/sf1.pdf, 2012.

\bibitem{winkler:02}
WE~Winkler.
\newblock Single ranking micro-aggregation and re-identification.
\newblock {\em Statistical Research Division report RR}, 8:2002, 2002.

\bibitem{xiao2010differentially}
Yonghui Xiao, Li~Xiong, and Chun Yuan.
\newblock Differentially private data release through multidimensional
  partitioning.
\newblock In {\em Workshop on Secure Data Management}, pages 150--168.
  Springer, 2010.

\bibitem{zhang2016privtree}
Jun Zhang, Xiaokui Xiao, and Xing Xie.
\newblock Privtree: A differentially private algorithm for hierarchical
  decompositions.
\newblock In {\em Proceedings of the 2016 International Conference on
  Management of Data}, pages 155--170, 2016.

\end{thebibliography}
\appendix

\setcounter{theorem}{0}
\setcounter{lemma}{3}
\setcounter{AlgoLine}{0}

\section{Missing Proofs}
\label{sec:missing_proofs}

\begin{lemma} 
	The sensitivity $\Delta_{\gsizevec}$ of the group estimate
	query is 2.  \end{lemma}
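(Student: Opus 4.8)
The plan is to bound $\|q(D_1)-q(D_2)\|_1$ over all neighboring pairs and then exhibit a pair attaining the bound, where $q(D) = \gsizevec^{\region}$ for a fixed region $\region$ (equivalently, the concatenation of the vectors $\gsizevec^{r}$ over a fixed level, since the regions of a level partition $D$). Fix $D_1\sim D_2$ and, without loss of generality, write $D_2 = D_1\cup\{(\user,\unit,\region,\unitsize)\}$ for a single added record with $\unitsize\in\{0,1\}$; the record-removal case is symmetric. The first step is the structural observation that a record belongs to exactly one unit, so adding it changes the size of exactly one group, $\group_\unit$, and leaves every other group — and hence every region $r'\neq\region$ — completely unchanged. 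Consequently the only coordinates of $q$ that can move are those indexed by the old and new sizes of $\group_\unit$, and everything else contributes $0$ to the $L_1$ distance.

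The second step is a short case analysis of how the size of $\group_\unit$ changes. If $\unitsize = 0$, then $\size_\unit$ is unchanged (or equals $0$ if $\unit$ is created by this record), so $q$ moves in at most one coordinate by at most one, giving $L_1$ distance at most $1$. If $\unitsize = 1$ and $\unit$ already occurs in $D_1$ with $\size_\unit = s \ge 1$, then $\size_\unit = s+1$ in $D_2$, so $\gsize_s^{\region}$ decreases by one and $\gsize_{s+1}^{\region}$ increases by one, for an $L_1$ distance of exactly $2$. If $\unitsize = 1$ and $\unit$ is new, then $\group_\unit$ has size $1$ in $D_2$, so only $\gsize_1^{\region}$ increases by one, for an $L_1$ distance of $1$. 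In every case the distance is at most $2$.

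For the matching lower bound, take $D_1$ to contain a unit $\unit$ in region $\region$ with $\size_\unit = 1$ and set $D_2 = D_1\cup\{(\user,\unit,\region,1)\}$; then $\gsize_1^{\region}$ drops by one while $\gsize_2^{\region}$ rises by one, so $\|q(D_1)-q(D_2)\|_1 = 2$. Combined with the upper bound this yields $\Delta_{\gsizevec} = 2$. I do not anticipate a genuine obstacle: the only delicate point is the degenerate bookkeeping (a $\unitsize = 0$ record, a freshly created unit, a unit emptied by a removal), and each such case can only \emph{decrease} the $L_1$ distance, so it cannot affect the maximum. The same argument incidentally gives sensitivity $2$ for arbitrary integer $\unitsize$ as well, since moving one unit from size $s$ to size $s+\unitsize$ still perturbs exactly two coordinates, each by one.
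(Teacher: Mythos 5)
Your proof is correct and follows essentially the same approach as the paper's: adding (or removing) one individual shifts a single group from size $s$ to size $s+1$, perturbing exactly two coordinates of $\gsizevec^{\region}$ by one each. Your version is actually more careful than the paper's, which only treats the generic case and omits the degenerate ones (new unit, $\unitsize=0$) as well as the explicit lower-bound instance; these additions strengthen rather than change the argument.
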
 \begin{proof} Consider a group size
	$\gsizevec$ (stripped of the superscript $r$ for notation
	convenience) derived by a dataset $D$ and let $\gsizevec'$ be
	the group size generated by a neighboring dataset $D'$ of
	$D$. Let $D' = D \cup \{(\user, \unit, \region, 1)\}$ be the
	dataset that adds an individual to some group $G_u$, for some
	$u \in \unitset$, associated with a group size $\gsize_i$.
	This action decreases the group size $\gsize_i$ value by one
	and increases the group size $\gsize_{i+1}$ value by one,
	i.e., $\gsize_i' = \gsize_i + 1$ and $\gsize_{i+1}'
	= \gsize_{i+1} - 1$.  Therefore $\|\gsizevec' - \gsizevec\|_1
	= 2$.  A similar argument applies to a neighboring dataset
	that removes an individual from $D$.
\end{proof}

\begin{theorem}
\label{th:privacy_H}
$\cM_H$ satisfies $\epsilon$-differential privacy.
\end{theorem}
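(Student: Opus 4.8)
The plan is to show that $\cM_H$ satisfies $\epsilon$-differential privacy by decomposing it into its two constituent phases and invoking the composition and post-processing results from \Cref{sec:differential_privacy}. First I would recall that $\cM_H$ consists of (i) a privacy phase that applies the geometric mechanism with parameter $\lambda = \frac{2L}{\epsilon}$ to each group-size vector $\gsizevec^r$ as in \Cref{eq:geom_q}, producing the noisy hierarchy $\tilde{\cT}$, followed by (ii) a post-processing phase that solves the QIP in Model~\ref{fig:NP} (or its relaxation) taking only $\tilde{\cT}$ as input. By \Cref{th:postprocessing}, it suffices to prove that phase (i) alone is $\epsilon$-differentially private, since phase (ii) is a deterministic map applied to the output of phase (i).

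The core of the argument is to analyze phase (i). Here I would fix a level $\ell \in [L]$ and observe that the regions $\regionset_\ell$ partition the dataset $D$; hence the per-region perturbations at a single level are applications of the geometric mechanism to disjoint subsets of $D$. Using the preceding lemma — the sensitivity $\Delta_{\gsizevec}$ of the group-size query is $2$ — together with the fact that the geometric mechanism with scale $\Delta_q/\epsilon'$ is $\epsilon'$-differentially private, each per-region computation at level $\ell$ is $\frac{\epsilon}{L}$-differentially private when $\lambda = \frac{2L}{\epsilon} = \frac{\Delta_{\gsizevec}}{\epsilon/L}$. By \Cref{th:par_composition} (parallel composition), releasing all of $\{\tilde{\gsizevec}^r : r \in \regionset_\ell\}$ for a single level $\ell$ is therefore $\frac{\epsilon}{L}$-differentially private.

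It remains to combine across the $L$ levels. Since a neighboring dataset $D'$ differs from $D$ in one individual, that individual lies in exactly one region at each level, so the $L$ level-wise releases are composed sequentially. Applying \Cref{th:seq_composition} $L-1$ times (or, more cleanly, the generalization of sequential composition to $L$ mechanisms, which follows by the same telescoping argument) shows that the full noisy hierarchy $\tilde{\cT}$ is $L \cdot \frac{\epsilon}{L} = \epsilon$-differentially private. Combining this with post-processing immunity (\Cref{th:postprocessing}) for phase (ii) yields that $\cM_H$ is $\epsilon$-differentially private.

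The main obstacle — really the only subtlety — is bookkeeping of the privacy budget across levels: one must be careful that \Cref{th:seq_composition} as stated covers only the composition of two mechanisms, so strictly speaking I would either state and use its $L$-fold generalization (each $\frac{\epsilon}{L}$-DP, composing to $\epsilon$-DP) or note that an induction on $L$ suffices. Everything else reduces to checking that the noise scale $\frac{2L}{\epsilon}$ matches the ratio $\frac{\Delta_{\gsizevec}}{\epsilon/L}$ demanded by the geometric mechanism, and that the partition structure of each hierarchy level licenses parallel (rather than sequential) composition within a level, which is what saves the extra factor of $|\regionset_\ell|$.
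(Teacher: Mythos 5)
Your proof is correct and follows essentially the same route as the paper's: parallel composition within each level of the hierarchy, sequential composition across the $L$ levels, and post-processing immunity for the optimization step. In fact you are somewhat more careful than the paper's own proof, which does not explicitly check that the noise scale $\frac{2L}{\epsilon}$ matches $\Delta_{\gsizevec}/(\epsilon/L)$ nor flag the $L$-fold generalization of \Cref{th:seq_composition}.
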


\begin{proof}
First note that each level of the hierarchy forms a partition over the
regions in $\regionset$ (by definition of hierarchy).  By parallel
composition (\Cref{th:par_composition}), for each level $\ell \in
[L]$, the noisy values $\tilde{\gsizevec}^r$ for all $r \in
\regionset_\ell$ satisfy $\frac{\epsilon}{L}$-differential privacy.  There are
exactly $L$ levels in the hierarchy: Therefore, by sequential
composition (\Cref{th:seq_composition}), the mechanism satisfies
$\epsilon$-differential privacy.
\end{proof}

\begin{theorem} 
\label{th:treedp_compl}
Constructing $\hat{\cT}^{\textsl{dp}}$ requires
solving $O(|\regionset|\,N\,\bar{D}$) optimization problems given
in \Cref{eq:d}, where $\bar{D} = \max_{s,r} |D_s^r|$ for
$r\in \regionset, s\in [N]$.  
\end{theorem}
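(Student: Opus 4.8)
The plan is to establish the bound by a straightforward counting argument: first count the nodes of $\treedp$, then bound the number of optimization problems each node contributes in the two phases of the dynamic program.

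First I would count the nodes of $\treedp$. By construction, $\treedp$ is obtained by taking $N$ copies of the region hierarchy $\cT$ --- one subtree per group size $s \in [N]$ --- and attaching them as children of a single dummy root $\node^\top$. Since $\cT$ has exactly $|\regionset|$ nodes (one per region), $\treedp$ has $N\,|\regionset| + 1 = O(N\,|\regionset|)$ nodes.

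Next I would bound the work of the bottom-up phase. For a leaf, the cost table is given in closed form by \eqref{eq:d1}, so no optimization problem of the form \eqref{eq:d2}--\eqref{eq:d4} is solved there. For an internal node $\node_s^r$, computing the cost table $\ctable_s^r\colon \dom_s^r \to \mathbb{R}_+$ requires evaluating $\ctable_s^r(v)$ --- and in particular the term $\ctablech_s^r(v)$ defined by the program \eqref{eq:d2}--\eqref{eq:d4} --- once for each $v \in \dom_s^r$. Hence $\node_s^r$ contributes at most $|\dom_s^r| \le \bar{D}$ optimization problems, and summing over all (internal) nodes of $\treedp$ gives at most $N\,|\regionset|\,\bar{D}$ problems; the dummy root adds at most $O(\bar{D})$ more, which is absorbed into this bound.

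Finally, in the top-down phase each node $\node_s^r$ other than the root receives its post-processed value $\hat{\gsize}_s^r$ and solves a single instance of \eqref{eq:d2}--\eqref{eq:d4}, namely $\ctablech_s^r(\hat{\gsize}_s^r)$, to recover its children's values; this contributes only $O(N\,|\regionset|)$ additional problems, which is dominated. Adding the two phases yields the claimed $O(N\,|\regionset|\,\bar{D})$ total. The argument is essentially bookkeeping and presents no real obstacle; the only points needing mild care are making the uniform bound $|\dom_s^r| \le \bar{D}$ explicit (it holds by definition of $\bar{D} = \max_{s,r}|\dom_s^r|$) and being precise about what counts as one ``optimization problem in \Cref{eq:d}'' --- I read it as one instance of the program \eqref{eq:d2}--\eqref{eq:d4}, equivalently one evaluation of $\ctablech_s^r(v)$ at a fixed value $v$, with leaves and the dummy root handled as above.
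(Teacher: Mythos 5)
Your proposal is correct and follows essentially the same counting argument as the paper, which simply observes that $\treedp$ has $|\regionset|\,N$ nodes and that each node solves the program in \Cref{eq:d} once per element of its domain $\dom_s^r$. Your version is just a more careful elaboration (handling leaves, the dummy root, and the top-down phase explicitly), all of which is absorbed into the same $O(|\regionset|\,N\,\bar{D})$ bound.
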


\begin{proof}
	There are exactly $|\regionset|N$ nodes, and each node runs the program in \Cref{eq:d} for each element of its domain $\dom_s^r$. 
\end{proof}

The next lemma is the key technical result of the paper and
it requires some additional notation. Let $(v_0, \ldots, v_n)$ be the
ordered sequence of values in $\dom^r$, i.e., $v_{i+1} = v_i +1$ for
all $i=0 \ldots n-1$. Given a cost table $\ctable^r$, its associated
\emph{cost function} $\cfunction^r$ is a PWL function defined for all
$x \in [v_0, v_n]$ as:

\begin{equation}
	\cfunction^r(x) = 
	\left\{
		\begin{array}{l l}
		(x - v_0) (\ctable^r(v_1) - \ctable^r(v_0)) + \ctable^r(v_0) & \text{if } v_0 \leq x < v_1 \\
		(x - v_1) (\ctable^r(v_2) - \ctable^r(v_1)) + \ctable^r(v_1) & \text{if } v_1 \leq x < v_2 \\
		\ldots & \\
		(x - v_{n-1}) (\ctable^r(v_n) - \ctable^r(v_{n-1})) + \ctable^r(v_{n-1}) & \text{if } v_{n-1} \leq x \leq v_n.
		\end{array}
	\right.
\end{equation}
Similarly, $\cfunctionch^r$ is the PWL function defined from $\ctablech^r$ by the same process. 

\begin{lemma}
\label{lemma:phi_pcl}
Consider a given region $r$ and group size $s$. If $\cfunction_s^c$ is CPWL for all $c \in ch(r)$, then $\cfunctionch^r_s$ is CPWL.
\end{lemma}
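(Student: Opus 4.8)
The plan is to recognize $\cfunctionch_s^r$ as (the piecewise-linear interpolation of) the \emph{infimal convolution} of the children's cost functions under the box constraints: by its definition in \eqref{eq:d2}--\eqref{eq:d4}, $\ctablech_s^r(v) = \min\{\sum_{c \in ch(r)} \ctable_s^c(x_c) : \sum_c x_c = v,\ x_c \in D^c\}$, a minimum over a finite nonempty set. Conceptually this is the image of the separable convex function $(x_c)_c \mapsto \sum_c \ctable_s^c(x_c)$ under the linear map $(x_c)_c \mapsto \sum_c x_c$, and the partial minimization of a convex function along an affine subspace is convex; so the real work is purely in the integrality. I would treat each $D^c$ as a contiguous integer interval (as in the paper's construction of domains), which makes the set of attainable sums equal to the integer interval $D^r = [\sum_c \min D^c,\ \sum_c \max D^c]$. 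Once convexity of the table is established, this immediately yields that $\cfunctionch_s^r$ is piecewise linear with finitely many breakpoints, hence CPWL. For readability I would suppress the subscript $s$, as elsewhere in the paper.

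For the convexity of the table I would give a direct exchange argument establishing the discrete midpoint inequality $\ctablech^r(v-1) + \ctablech^r(v+1) \ge 2\,\ctablech^r(v)$ at every interior $v \in D^r$. First I would take optimal solutions $x$ for $\ctablech^r(v-1)$ and $y$ for $\ctablech^r(v+1)$, so $d := y - x$ has $\sum_c d_c = 2$. Next I would produce an integer vector $d'$ with $d'_c$ lying between $0$ and $d_c$ for each $c$ and with $\sum_c d'_c = 1$: this is a one-dimensional counting fact, since the attainable values of $\sum_c d'_c$ over such integer $d'$ form an integer interval whose endpoints are $\sum_{c:\,d_c<0} d_c \le 0$ and $\sum_{c:\,d_c>0} d_c \ge 2$, hence contain $1$. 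Setting $u = x + d'$ and $w = y - d'$, each coordinate of $u$ and of $w$ lies between $x_c$ and $y_c$, so both vectors are feasible, and both sum to $v$. Since $\ctable^c$ is convex and the pair $(u_c, w_c)$ has the same sum as $(x_c, y_c)$ but is ``less spread'', $\ctable^c(u_c) + \ctable^c(w_c) \le \ctable^c(x_c) + \ctable^c(y_c)$; summing over $c$ gives $2\,\ctablech^r(v) \le \mathrm{cost}(u) + \mathrm{cost}(w) \le \ctablech^r(v-1) + \ctablech^r(v+1)$, as desired.

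I expect the main obstacle to be the bookkeeping around the box constraints together with making precise the elementary ``spreading increases convex cost'' inequality (for a convex $g$, $g(a') + g(b') \le g(a) + g(b)$ whenever $a+b = a'+b'$ and $[a',b'] \subseteq [a,b]$), applied to the discrete functions $\ctable^c$, plus the degenerate cases where a child domain is a single point or $d_c = 0$. The remaining ingredients -- that $D^r$ is a full integer interval and that a convex function on an integer interval extends to a CPWL function on the corresponding real interval -- are routine and can be dispatched quickly, which together complete the inductive statement of \Cref{lm:convexity} (with the leaf tables CPWL by \eqref{eq:d1} and $\ctable^r = (v-\tilde n^r)^2 + \ctablech^r$ CPWL as a sum of two CPWL functions).
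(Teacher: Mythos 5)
Your proof is correct, but it takes a genuinely different route from the paper's. The paper proves the lemma by analyzing the greedy construction itself: it builds the sequence $\bv^0,\bv^1,\ldots$ of \eqref{eq:special_v}, shows by induction and a three-case contradiction argument that each $\bv^k$ is an optimal solution of $\ctablech^r(V^0+k)$, and then reads off convexity from the fact that the greedily chosen slopes are non-decreasing. You instead prove discrete midpoint convexity $\ctablech^r(v-1)+\ctablech^r(v+1)\geq 2\,\ctablech^r(v)$ directly, by an exchange argument on two optimal solutions: interpolate an integer vector $d'$ with $\sum_c d'_c=1$ squeezed coordinatewise between $0$ and $d=y-x$, and invoke the ``spreading increases convex cost'' (majorization) inequality per coordinate. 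All the steps check out: the attainable sums of $d'$ do form an integer interval containing $1$ since $\sum_c d_c = 2$; feasibility of $u=x+d'$ and $w=y-d'$ follows from the domains being contiguous integer intervals (which matches the paper's construction, and is an assumption the paper's proof also relies on implicitly); and the per-coordinate inequality is exactly the statement that the increments of a discretely convex function are non-decreasing. Your argument is the classical proof that infimal convolution preserves convexity, adapted to the box-constrained integer setting, and it is shorter and cleaner than the paper's. What it does not deliver -- and what the paper's longer proof buys as a by-product -- is the correctness of the greedy slope-merging rule \eqref{eq:special_v} underlying \Cref{alg:merge}: the paper's case analysis certifies that the algorithm's vectors $\bv^k$ are the optimal ones, whereas under your approach that correctness would need a separate (albeit standard, and easy once convexity is in hand) greedy-exchange argument. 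For the lemma as literally stated, your proof suffices.
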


\begin{proof} For simplicity, the proof omits subscript $s$ from
$\cfunction^r_s$, $\ctable^r_s$, $\ctablech^r_s$ and
$\cfunctionch^r_s$. The proof also uses $\sum_c$ and $\min_c$ to
denote $\sum_{c \in ch(r)}$ and $\min_{c\in ch(r)}$, respectively.

The proof is by induction on the levels in $\tree$. For the base case,
consider any leaf node $a^r$. Its cost table $\ctable^r$ only uses
\Cref{eq:d1} which is CPWL. Hence $\cfunction^r$ associated with
$\ctable^r$ is CPWL. The remainder of the proof shows that, if the
statement holds for $l+1,\ldots,L$, it also holds for level
$l$.

Consider a node $\node^r$ at level $l$. By induction, the cost
function $\cfunction^c$ ($c \in ch(r)$) is CPWL. Now consider for
each node $c \in \textsl{ch}(r)$ a value $v_c^0$ with minimum cost,
i.e., $v^0_c = \argmin_v \ctable^c(v)$. As a result, the value $V^0 =
\sum_{c} v^0_c$ has minimal cost $\ctablech^r(V^0) = \sum_c
\ctable^c(v^0_c)$.  Having constructed the minimum value in cost table
$\ctablech^r$, it remains to compute the costs of all values $V^0+k$
for all integer $k \in [1, \max D^r - V^0]$ and of all values $V^0-k$
for all integer $k \in [1, V^0 - \min D^r]$. The proof focuses
on the values $V^0+k$ since the two cases are similar. 

Recall that the algorithm builds a sequence of vectors
$\bv^0,\bv^1,\ldots,\bv^k,\ldots$ that provides the optimal
combinations of values for
$\ctablech^r(V^0),\ctablech^r(V^0+1),\ldots,\ctablech^r(V^0+k),\ldots$.
Vector $\bv^k$ is obtained from $\bv^{k-1}$ by changing the value of a
single child whose cost function has the smallest slope, i.e.,
\begin{align}
\label{eq:special_v_p}
v_c^{k} = \left\{ 
\begin{array}{l l}
v_c^{k-1} + 1 & \mbox{if }\; c = \argmin_c 
		\ctable^c(v_c^{k-1} + 1) - \ctable^c(v_c^{k-1}) \\ 
v_c^{k-1}     & \mbox{otherwise} 
\end{array}\right.
\end{align}
Observe that, by construction, the vector
$\bv^0,\bv^1,\ldots,\bv^k,\ldots$ satisfy the consistency constraints
\eqref{eq:d3} for
$\ctablech^r(V^0),\ctablech^r(V^0+1),\ldots,\ctablech^r(V^0+k),\ldots$,
since only one element is added at each step.

The next part of the proof is by induction on $k$ and shows that $\bv^k$ is the optimal
solution for $\ctablech^r(V^0 + k)$ for $k \geq 0$. The base case
$k=0$ follows by construction. Assume that $\bv^k$ is the optimal
solution of $\ctable^r(V^0+k)$. The proof shows that $\bv^{k+1}$ is
optimal for $\ctable^r(V^0+k+1)$. Without loss of generality, assume
that the first child ($c_1$) is selected in \eqref{eq:special_v_p}. It
follows that for $c \neq c_1$,
\begin{align*}
\ctable^{c_1}(v_{c_1}^k+1) - \ctable^{c_1}(v_{c_1}^k) \leq \ctable^c(v_c^k+1) - \ctable^c(v_c^k).
\end{align*}
The proof is by contradiction and assumes that there exists an optimal
solution $\bv^*$ such that $\sum_c \bv_c^* = V^0 + k + 1$ and
$\ctablech^r(\bv^*) < \ctablech^r(\bv^k)$. 
There are three cases to consider:

\begin{enumerate}[topsep=1ex,itemsep=-1ex,partopsep=1ex,parsep=1ex]

\item $v_{c_1}^* = v_{c_1}^{k}+1$: Let $\bv^+$ be the vector defined by $v_{c_1}^+
= v_{c_1}^*-1, v_c^+ = v_c^* \; (c \neq c_1)$.  $\bv^+$ satisfies the
consistency constraint for $V^0 + k$. Since $\ctablech^r(\bv^*) < \ctablech^r(\bv^k)$,
by hypothesis
and $v_{c_1}^* = v_{c_1}^{k}+1$, it follows that 
$\sum_c \ctable^c(v^*_c) < \sum_c \ctable^c(v^k_c)$.
But since $v_{c_1}^+ = v_{c_1}^k$, it follows that $\ctablech^r(\bv^+) < \ctablech^r(\bv^k)$ which contradicts the optimality of $\bv^k$.

\item $v_{c_1}^{*} > v_{c_1}^{k}+1$: By optimality of $\bv^k$ and $\bv^*$, the following properties hold:
\begin{align}
\ctable^{c_1}(v_{c_1}^{k}) + \sum_{c\neq c_1} \ctable^{c}(v_c^{k}) \leq \ctable^{c_1}(v_{c_1}^{*}-1) + \sum_{c\neq c_1} \ctable^{c}(v_c^{*}) \label{eq:P1}\tag{P1}\\
\ctable^{c_1}(v_{c_1}^{k}+1) + \sum_{c\neq c_1} \ctable^{c}(v_c^{k}) > \ctable^{c_1}(v_{c_1}^{*}) + \sum_{c\neq c_1} \ctable^{c}(v_c^{*}) \label{eq:P2}\tag{P2}
\end{align}
Subtracting \eqref{eq:P1} from \eqref{eq:P2} gives
\begin{align*}
\ctable^{c_1}(v_{c_1}^{k}+1) - \ctable^{c_1}(v_{c_1}^{k}) > \ctable^{c_1}(v_{c_1}^{*}) - \ctable^{c_1}(v_{c_1}^{*}-1)
\end{align*}
which is impossible since $v_{c_1}^{*} > v_{c_1}^{k}+1$ and $\cfunction^{c_1}$ is CPWL. 

\item $v_{c_1}^{*} < v_{c_1}^{k}+1$: The optimality of $\bv^*$ implies that
\begin{align}
\ctable^{c_1}(v_{c_1}^{*}) + \sum_{c \neq c_1} \ctable^{c}(v_c^{*}) < \ctable^{c_1}(v_{c_1}^{k}+1) + \sum_{c\neq c_1} \ctable^{c}(v_c^{k}) \label{eq:P3}\tag{P3}
\end{align}
Since $v_{c_1}^{*} < v_{c_1}^{k}+1$, there exists a child $c_2$ such that $v_{c_2}^* > v_{2}^k$. \eqref{eq:P3} can be rewritten as
\begin{align}
\ctable^{c_1}(v_{c_1}^{*})\!+\!\ctable^{c_2}(v_{c_2}^{*})\!+\!\sum_{c \neq c_1, c_2} \ctable^{c}(v_c^{*}) < \ctable^{c_1}(v_{c_1}^{k}+1)\!+\!\ctable^{c_2}(v_{c_2}^{k})\!+\!\sum_{c\neq c_1,c_2} \ctable^{c}(v_c^{k}). \label{eq:P4}\tag{P4}
\end{align}
The optimality of $\bv^k$ implies that
\begin{align}
\ctable^{c_1}(v_{c_1}^{*})\!+\!\ctable^{c_2}(v_{c_2}^{*}-1)\!+\!\sum_{c \neq c_1, c_2} \ctable^{c}(v_c^{*}) \geq \ctable^{c_1}(v_{c_1}^{k})\!+\!\ctable^{c_2}(v_{c_2}^{k})\!+\!\sum_{c\neq c_1,c_2} \ctable^{c}(v_c^{k}). \label{eq:P5}\tag{P5}
\end{align}
Substracting \eqref{eq:P5} from \eqref{eq:P4} gives
\begin{align*}
\ctable^{c_2}(v_{c_2}^{*}) - \ctable^{c_2}(v_{c_2}^{*}-1) < \ctable^{c_1}(v_{c_1}^{k}+1) - \ctable^{c_1}(v_{c_1}^{k}).
\end{align*}
Since $v_{c_2}^* > v_{c_2}^k$ and $\cfunction^{c_2}$ is CPWL, it follows that
\begin{align*}
\ctable^{c_2}(v_{c_2}^{k} + 1) - \ctable^{c_2}(v_{c_2}^{k}) < \ctable^{c_1}(v_{c_1}^{*}+1) - \ctable^{c_1}(v_{c_1}^{*})
\end{align*}
which contradicts the selection of $c_1$ in the algorithm. 
\end{enumerate}

\noindent
It remains to show that the resulting function $\cfunctionch^r$ is
convex, i.e.,
\begin{align}
\forall k \geq 0: \cfunctionch^r(V^0 + k + 2) - \cfunctionch^r(V^0 + k + 1) \geq \cfunctionch^r(V^0 + k + 1) - \cfunctionch^r(V^0 + k)
\end{align}
Let $c_k$ be the child selected when computing $\cfunctionch^r(V^0 + k)$ in \Cref{eq:special_v_p}. By selection of $c_k$, 
\begin{align*}
\forall c: \ctable^{c_k}(v_{c_k}^{k}+1) - \ctable^{c_k}(v_{c_k}^{k}) \leq \ctable^{c}(v_{c}^{k}+1) - \ctable^{c}(v_{c}^{k}).
\end{align*}
Since $\cfunction^{c}$ is CPWL, it follows that
\begin{align*}
\ctable^{c_k}(v_{c_k}^{k}+1) - \ctable^{c_k}(v_{c_k}^{k}) \leq \ctable^{c}(v_{c}^{k}+i+1) - \ctable^{c}(v_{c}^{k}+i)
\end{align*}
for all $i \geq 0$. The results follows by induction. 
\end{proof}

\begin{theorem}
\label{tm:convexity-a}
The cost table $\ctable_s^r$ is CPWL.
\end{theorem}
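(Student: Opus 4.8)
The plan is to prove the statement by induction on the level of the node $a_s^r$ in $\treedp$, with the inductive step handed off almost entirely to \Cref{lemma:phi_pcl}. As in the preceding proofs I would suppress the subscript $s$ and argue within a single subtree, using the characterization that a cost table is CPWL exactly when its associated cost function (the PWL interpolant $\cfunction$ through the integer points of its domain) is convex --- equivalently, when the sequence of slopes $\ctable(v_{i+1})-\ctable(v_i)$ over consecutive domain values is nondecreasing.

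For the base case, I would take any leaf node $a^r$; since it has no children, \eqref{eq:d1} reduces to $\ctable^r(v)=(v-\tilde{\gsize}^r)^2$. The consecutive slopes are $\ctable^r(v_i+1)-\ctable^r(v_i)=2(v_i-\tilde{\gsize}^r)+1$, which increase by exactly $2$ at each step, so the slope sequence is increasing and $\ctable^r$ is CPWL.

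For the inductive step, suppose $\ctable^c$ is CPWL for every $c\in ch(r)$, i.e.\ every $\cfunction^c$ is convex PWL. \Cref{lemma:phi_pcl} then gives that $\cfunctionch^r$ is CPWL. By \eqref{eq:d1}, $\ctable^r(v)=(v-\tilde{\gsize}^r)^2+\ctablech^r(v)$ is the pointwise sum, over the common integer domain $D^r$, of the discrete quadratic (CPWL by the base case) and $\ctablech^r$ (CPWL by the lemma); since the interpolant of a sum of cost tables on a common unit-spaced grid is the sum of the interpolants and slopes add, the slope sequence of $\ctable^r$ is a sum of two nondecreasing sequences, hence nondecreasing. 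Therefore $\ctable^r$ is CPWL, which closes the induction.

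The real work in this argument has already been done in \Cref{lemma:phi_pcl}, which shows that the consistency-respecting minimization defining $\ctablech^r$ preserves discrete convexity; what is left here is bookkeeping. The only point I would state carefully is the equivalence ``cost table CPWL $\iff$ slope sequence nondecreasing'' together with the two facts it makes routine --- that this property is preserved under pointwise addition on a common integer grid, and that it holds for $v\mapsto(v-\tilde{\gsize}^r)^2$ --- both elementary once the equivalence is in place. I do not expect any genuine obstacle beyond what \Cref{lemma:phi_pcl} already supplies.
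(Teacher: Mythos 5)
Your proof is correct and follows essentially the same route as the paper: the paper's own proof of this theorem likewise invokes \Cref{lemma:phi_pcl} for $\ctablech^r_s$, observes that $v\mapsto(v-\tilde{\gsize}^r_s)^2$ is CPWL, and concludes via closure of CPWL under addition (with the induction over levels living inside the lemma rather than restated at the theorem level). Your explicit slope-sequence bookkeeping and base-case computation are a slightly more detailed rendering of the same argument, not a different one.
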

\begin{proof}
By \Cref{lemma:phi_pcl}, $\ctablech_s^r$ is CPWL. The function $d(v) \defeq (v - \tilde{n}^r_s)^2$ also defines a CPWL. The result follows from the fact that the sum of two
CPWL functions is CPWL. 
\end{proof}


\begin{theorem}
	\Cref{alg:merge} requires requires $O\big(\bar{D} \log \bar{D} \big)$ operations, where $\bar{D} = \max_{s,r} \bar{D}_s^r$ for $r\in \regionset, s\in [N]$.
\end{theorem}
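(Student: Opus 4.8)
The plan is to charge the entire running time of \Cref{alg:merge} to the two $\textsl{sort}^2$ calls in lines~1 and~2, and to show that everything else is linear in the same quantity. First I would fix the node $\node^r$ being processed and set $m \defeq \sum_{c\in\textsl{ch}(r)}|\dom^c|$, the combined size of the children's cost tables. The lists $\bv^{+}$ and $\bv^{-}$ built in lines~1--2 together hold exactly one pair $(c,\,\ctable^c(v)-\ctable^c(v-1))$ for every $c\in\textsl{ch}(r)$ and every non-minimal $v\in\dom^c$, hence have combined length $m-|\textsl{ch}(r)|\le m$; constructing them and sorting by the slope component costs $O(m\log m)$, and the $\textsl{extract}^1$ passes in lines~3--4 are $O(m)$. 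Up to the accounting issue discussed below, $m$ is of the order of $\bar{D}$ under the domain conventions of \Cref{th:treedp_compl} (indeed $m$ bounds, up to an additive $|\textsl{ch}(r)|$ term, the number of feasible children-sums that the parent domain $\dom^r$ must range over), so the two sorts are $O(\bar{D}\log\bar{D})$.

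Next I would bound the two invocations of $\FMerge$. Its initialisation (lines~6--8) visits each child once to set $\bv[c]=\argmin\ctable^c$ and to form $v=\sum_c\bv[c]$ together with the running cost $\sum_c\ctable^c(\bv[c])$; that is $O(|\textsl{ch}(r)|)\le O(m)$. The loop in lines~9--12 runs once per entry of the sorted identifier list, i.e.\ $O(m)$ iterations. The key point is that each iteration changes a single coordinate, $\bv[e]\mapsto\bv[e]+\kappa$, so $v$ changes by $\kappa$ and $\sum_c\ctable^c(\bv[c])$ changes by exactly the slope $\ctable^e(\bv[e])-\ctable^e(\bv[e]-\kappa)$ already stored next to $e$ in $\bv^{\pm}$; maintaining $v$ and the running cost incrementally then makes each iteration $O(1)$, so the loop is $O(m)$. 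Combining $\ctablech^r$ with the deviation term $(v-\tilde{n}^r)^2$ to obtain $\ctable^r$ is a further $O(|\dom^r|)$ operations, and is legitimate by \Cref{lm:convexity} since all the $\ctable^c$ are CPWL and only slopes are ever manipulated. Hence the non-sorting portion is $O(m)=O(\bar{D})$ and the whole routine runs in $O(\bar{D}\log\bar{D})$, matching \Cref{tm:dp_efficiency}.

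The main obstacle I anticipate is precisely the bookkeeping argument of the second paragraph: read literally, the pseudocode recomputes $v\gets\sum_c\bv[c]$ and $\ctable^r(v)\gets\sum_c\ctable^c(\bv[c])$ from scratch on lines~7--8 and~11--12, which would make each loop step $O(|\textsl{ch}(r)|)$ and the routine $O(m\,|\textsl{ch}(r)|)$; the proof must observe that only one coordinate moves per step and that the corresponding slope is already available from the sorted list, so both aggregates are updatable in constant time. The second delicate point is the identification $m=O(\bar{D})$: I would make explicit the relation $m\le|\dom^r|+|\textsl{ch}(r)|-1$ forced by the consistency constraint \eqref{eq:d3} (so that $\dom^r$ can represent every reachable sum $\sum_{c} x_c$), which is the same accounting already underlying \Cref{th:treedp_compl}. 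Once these two observations are in place, the rest of the argument is routine.
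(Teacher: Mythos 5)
Your proposal takes the same route as the paper: the paper's entire ``proof'' is the one-sentence observation that the runtime of \Cref{alg:merge} is dominated by the sorting operations in lines~1 and~2, which is exactly the charging argument you carry out in detail. The two delicate points you flag --- that $v$ and the running cost $\sum_{c}\ctable^c(\bv[c])$ must be maintained incrementally (one coordinate changes per iteration) rather than recomputed as the pseudocode literally suggests, and that the combined children-domain size $m=\sum_{c\in\textsl{ch}(r)}|\dom^c|$ must be identified with $O(\bar{D})$ via the consistency constraint \eqref{eq:d3} --- are both genuine and are left entirely implicit in the paper, so your version is strictly more careful than the published argument.
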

The result above can be derived observing that the runtime complexity of \Cref{alg:merge} is dominated by the sorting operations in lines $1$ and $2$.

\begin{lemma}
The sensitivity $\Delta_{\csizevec}$ of the cumulative group estimate query is 1.
\end{lemma}

\begin{proof}
Consider a vector of cumulative group sizes $\bm{c} = (c_1, \ldots, c_n)$. Additional, let $\bm{c}'$ be a vector of cumulative group sizes that differs from $\bm{c}$ by adding one individual to a group of size $k \in [n]$. It follows that $c_k' = c_k - 1$ but none of the other groups $c_i'$ changes: i.e., $c_i' = c_i$, for all $i \neq k$. 
Similarly, removing one individual from a group of size $k$ implies that $c_{k+1}' = c_{k+1} + 1$ and $c_{i}' = c_i$, for all $i \neq k$.

Therefore, for maximal change between any two neighboring datasets $\bm{c}$ and $\bm{c}'$ is $1$.
\end{proof}

\begin{lemma}
The cost table $\ctable_i$ of each node $\node_i$ of $\chaindp$ is CPWL.
\end{lemma}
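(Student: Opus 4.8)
The plan is to prove the claim by induction on the position $i$ of a node along the chain $\chaindp$, i.e., along the post-order traversal of $\treedp$ that defines it; this mirrors the argument of \Cref{lemma:phi_pcl} but is considerably simpler, since each step of the chain recurrence combines the current node with a \emph{single} predecessor rather than merging several children. For the base case, $\node_1$ is the first node of the traversal, hence it has no predecessor in the chain and $\ctable_1(v) = (v-\tilde\csize_1)^2$. Linearly interpolating the integer points of this map produces a function whose consecutive slopes strictly increase (the second differences are constantly $2$), so $\ctable_1$ is CPWL.

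For the inductive step, I would assume $\ctable_i$ is CPWL and establish the same for $\ctable_{i+1}$. By \eqref{eq:d5}, $\ctable_{i+1}(v) = (v-\tilde\csize_{i+1})^2 + \ctablech_{i+1}(v)$; since $(v-\tilde\csize_{i+1})^2$ is CPWL and the sum of two CPWL functions is CPWL (as already used in \Cref{tm:convexity-a}), it suffices to show that $\ctablech_{i+1}$ is CPWL. By \eqref{eq:d6} there are two cases. If $\mathrm{lev}(\node_i) > \mathrm{lev}(\node_{i+1})$ then $\ctablech_{i+1} = \ctable_i$, which is CPWL by the induction hypothesis. Otherwise $\ctablech_{i+1}(v) = \min\{\ctable_i(x) : x \in \dom_i,\ x \le v\}$, the running (prefix) minimum of $\ctable_i$, and it remains to argue that the prefix minimum of a CPWL function is again CPWL.

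This last point is the only non-routine step. Let $x^\star = \argmin_{x \in \dom_i}\ctable_i(x)$, which exists since $\ctable_i$ is convex over the finite domain $\dom_i$. Convexity of $\ctable_i$ implies that $\ctable_i$ is non-increasing on $\{x \in \dom_i : x \le x^\star\}$ (otherwise some segment slope there would be positive, and by non-decreasing slopes all later slopes would be positive too, contradicting that $x^\star$ is a minimizer). Consequently $\ctablech_{i+1}(v) = \ctable_i(v)$ for $v \le x^\star$ and $\ctablech_{i+1}(v) = \ctable_i(x^\star)$ for $v \ge x^\star$: the prefix minimum coincides with $\ctable_i$ to the left of $x^\star$, where its slopes are non-positive and non-decreasing, and is flat (slope $0$) to the right. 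Hence the full sequence of consecutive slopes of $\ctablech_{i+1}$ is non-decreasing, so $\ctablech_{i+1}$ is CPWL. (The degenerate case $v < \min\dom_i$, where the minimand set is empty, does not arise: the domain $\dom_{i+1}$ is chosen so that such values of $v$ are never queried, consistently with the cumulative counts being non-decreasing along the chain.) Adding back $(v-\tilde\csize_{i+1})^2$ then yields that $\ctable_{i+1}$ is CPWL, which closes the induction.

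I expect the prefix-minimum argument to be the crux. Everything else reduces to stability of CPWL functions under addition and to a dichotomy over the two branches of the recurrence \eqref{eq:d6}, both routine; whereas showing that taking a running minimum preserves convex piecewise-linearity is exactly where the structural observation is needed, namely that a convex function decreases up to its minimizer and its prefix minimum becomes constant thereafter, so no slope ever decreases.
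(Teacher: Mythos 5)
Your proof is correct and follows essentially the same route as the paper's: induction along the chain, the same two-case split on the recurrence \eqref{eq:d6}, the observation that the prefix minimum of a convex table equals the table up to its minimizer and is constant thereafter, and closure of CPWL under addition of the quadratic term. The only difference is that you spell out in more detail why the prefix minimum preserves convex piecewise-linearity, which the paper states more tersely.
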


\begin{proof}
    Like \Cref{lemma:phi_pcl}, we prove this argument by induction. For the base case,
    consider the head node $\node_1$ of $\chaindp$. Its cost table is simply given by
    $\ctable_1(v)=\left(v-\tilde{\csize}_1\right)^2$ for any $v\in D_1$ and proves to be
    CPWL due to convexity of L2-Norm.
    
    Suppose that the cost table $\ctable_i$ is CPWL. If the nodes $\node_i$ and $\node_{i+1}$ are at the same level, 
    there exists an inequality constraint between the post-processed values of
    these two adjoining nodes, $\node_i$ and $\node_{i+1}$. Thus, the function $\ctablech_{i+1}(v)$
    is given in the following formula.
    \begin{equation*}
        \ctablech_{i+1}(v)=\min_{ \substack{x_i\in \dom_i\\ x_i\leq v }}\ctable_i(x_i)=
        \left\{
            \begin{array}{l  l}
                \ctable_i(v_i^0) & \text{if } v\geq v_i^0 \\
                \ctable_i(v) & \mathrm{otherwise},
            \end{array}
        \right.
    \end{equation*}
    where the value $v_i^0$ represents the minimizer of the cost table $\ctable_i$, i.e., 
    $v_i^0\coloneqq \argmin_{v\in \dom_i}\ctable_i(v)$. If follows that the function 
    $\ctablech_{i+1}(v)$ is CPWL. In the other case, an equality constraint is between the two post-processed
    values, which indicates that the function $\ctablech_{i+1}(v)$ is simply the cost table 
    $\ctable_i(v)$ and, thus, CPWL. Therefore, for both cases, the function $\ctablech_{i+1}(v)$ 
    is shown to be CPWL. Recall that 
    the function $d(v)\defeq (v - \tilde{c}_{i+1})^2$ also defines a CPWL. It follows that
    the cost table $\ctable_{i+1}$ enjoys the property of CPWL as well due to the fact that 
    the sum of two CPWL functions is CPWL.
\end{proof}

\begin{theorem}
The cost table $\ctable_i$ for each $i\in[\vert\regionset\vert N]$ can be computed 
in time $O\big(\bar{D} \big)$, where $\bar{D} = \max_{i} |D_i|$ for
$i\in[\vert\regionset\vert N]$.
\end{theorem}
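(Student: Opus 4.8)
The plan is to mirror the analysis behind \Cref{tm:dp_efficiency}, but to exploit the fact that the chain $\chaindp$ removes the single ingredient responsible for the logarithmic factor there, namely the sorting of the slopes of \emph{several} children performed in \Cref{alg:merge}. In $\chaindp$ every node $\node_{i+1}$ has exactly one predecessor $\node_i$, so the recurrence \eqref{eq:d5}--\eqref{eq:d6} only ever combines two tables, and no merge-by-sorted-slopes is needed. Concretely, I would prove that each step $\ctable_i \mapsto \ctablech_{i+1} \mapsto \ctable_{i+1}$ costs $O(\bar D)$, and then sum over the $|\regionset|N$ nodes of the chain.

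First I would handle the passage from $\ctable_i$ to $\ctablech_{i+1}$. By \Cref{lm:ch_convexity}, $\ctable_i$ is CPWL, hence non-increasing up to its minimizer $v_i^0 = \argmin_{v\in\dom_i}\ctable_i(v)$ and non-decreasing afterwards; $v_i^0$ is located by a single scan of $\dom_i$, i.e.\ in $O(\bar D)$. If $\mathrm{lev}(a_i) > \mathrm{lev}(a_{i+1})$, then by \eqref{eq:d6} $\ctablech_{i+1}=\ctable_i$ and the table is simply copied, in $O(\bar D)$. Otherwise $\ctablech_{i+1}(v)=\min_{x_i\in\dom_i,\,x_i\le v}\ctable_i(x_i)$, and convexity gives the closed form $\ctablech_{i+1}(v)=\ctable_i(v)$ for $v< v_i^0$ and $\ctablech_{i+1}(v)=\ctable_i(v_i^0)$ for $v\ge v_i^0$, so materializing $\ctablech_{i+1}$ over $\dom_{i+1}$ is again one linear scan. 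Values of $\dom_{i+1}$ below $\min\dom_i$ receive cost $+\infty$ and those above $\max\dom_i$ inherit $\ctable_i(v_i^0)$; since $|\dom_{i+1}|\le\bar D$ this extra bookkeeping stays within $O(\bar D)$.

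Finally I would add the deviation term: $d(v)\defeq (v-\tilde{c}_{i+1})^2$ is evaluated in $O(1)$ per point of $\dom_{i+1}$, and the pointwise sum $\ctable_{i+1}(v)=d(v)+\ctablech_{i+1}(v)$ costs $O(\bar D)$; since the sum of two CPWL functions is CPWL this also re-establishes the invariant used above. Composing the three steps yields $O(\bar D)$ per cost table, and $O(|\regionset|N\bar D)$ for the whole bottom-up phase. The one point that genuinely needs care — the analogue of what forced the $\log$ factor in \Cref{tm:dp_efficiency} — is arguing that the prefix minimum $\min_{x_i\le v}\ctable_i(x_i)$ need not be recomputed from scratch for each $v$ (which would be $O(\bar D^2)$); this is exactly what \Cref{lm:ch_convexity} provides, since convexity pins the running minimum to the single breakpoint $v_i^0$, making the whole profile available after a single pass. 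Aligning the integer intervals $\dom_i$ and $\dom_{i+1}$ when forming the prefix minima and the sum is a secondary, purely technical, nuisance that contributes only another $O(\bar D)$ term.
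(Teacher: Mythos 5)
Your proposal is correct and follows essentially the same route as the paper: a single linear scan to locate the minimizer $v_i^0$ of the CPWL table $\ctable_i$, the closed-form prefix minimum $\ctablech_{i+1}(v)=\ctable_i(v)$ for $v<v_i^0$ and $\ctable_i(v_i^0)$ otherwise (or a straight copy in the equality case), and a pointwise $O(1)$ addition of the deviation term, giving $O(\bar D)$ per node. You in fact spell out more of the details (domain alignment, why the prefix minimum avoids an $O(\bar D^2)$ recomputation) than the paper's rather terse argument does.
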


\begin{proof}
    Consider the head node $\node_1$ of the hierarchy $\chaindp$. It takes $O\big(\bar{D} \big)$ operations to
    compute the cost table $\ctable_1$ and identify its minimizer $v_1^0$ via linear search. Given the cost
    table $\ctable_i$ and its associated minimizer $v_i^0$, we are able to generate the function $\ctablech_{i+1}(v)$
    in $O\big(\bar{D} \big)$, regardless of the type of the constraint between the two adjoining
    nodes, $\node_i$ and $\node_{i+1}$. Thus, we can update the cost table $\ctable_{i+1}$ and 
    and compute its minimizer $v_{i+1}^0$ in $O\big(\bar{D} \big)$, for any $i\in [\vert\regionset\vert N-1]$.
\end{proof}

\end{document}